\newcommand{\topic}[1]{\todo[color=red!40,noline]{#1}}
\patchcmd{\SOUL@ulunderline}{\dimen@}{\SOUL@dimen}{}{}
\patchcmd{\SOUL@ulunderline}{\dimen@}{\SOUL@dimen}{}{}
\patchcmd{\SOUL@ulunderline}{\dimen@}{\SOUL@dimen}{}{}
\newdimen\SOUL@dimen
\Crefname{table}{Table}{Tables}
\Crefname{ass}{Assumption}{Assumptions}
\newcolumntype{$}{>{\global\let\currentrowstyle\relax}}
\newcolumntype{^}{>{\currentrowstyle}}
\newcommand{\rowstyle}[1]{\gdef\currentrowstyle{#1}%
  #1\ignorespaces
}
\titlespacing{\paragraph}{0em}{0em}{0.5em}
\titlespacing{\subparagraph}{0em}{0em}{0.5em}
\renewcommand{\P}[2][]{\ensuremath{\mathbb{P}_{#1}\left(#2\right)}\xspace}
\newcommand{\E}[2][]{\ensuremath{\mathbb{E}_{#1}\left(#2\right)}}
\newcommand{\V}[2][]{\ensuremath{\operatorname{Var}_{#1}\left(#2\right)}}
\newcommand{\C}[2][]{\ensuremath{\operatorname{Cov}_{#1}\left(#2\right)}}
\newcommand{\I}[2][]{\ensuremath{\mathbbm{1}_{#1}\left\{#2\right\}}}
\newcommand{\ER}[0]{Erd\H{o}s-R\'{e}nyi\xspace}
\def\one{\ensuremath{\mathbbm{1}}}
\DeclarePairedDelimiter\abs{\lvert}{\rvert}
\DeclarePairedDelimiter\norm{\lVert}{\rVert}
\DeclarePairedDelimiterX\setc[2]{\{}{\}}{\,#1 \;\delimsize\vert\; #2\,}
\let\oldabs\abs
\def\abs{\@ifstar{\oldabs}{\oldabs*}}
\let\oldnorm\norm
\def\norm{\@ifstar{\oldnorm}{\oldnorm*}}
\let\oldsetc\setc
\def\setc{\@ifstar{\oldsetc}{\oldsetc*}}
\theoremstyle{plain}
\newtheorem{thm}{Theorem}[]
\newtheorem{lem}{Lemma}
\newtheorem{prop}{Proposition}
\newtheorem{cor}{Corollary}
\newtheorem{modl}{Model}
\theoremstyle{definition}
\newtheorem{defn}{Definition}
\newtheorem{ass}{Assumption}
\theoremstyle{remark}
\newtheorem*{rmk}{Remark}
\let\oldtriangle\triangle
\renewcommand\triangle{\mathlarger{\oldtriangle}}
\begin{document}
\title{Testing for Balance in Social Networks}

\author[1]{Derek Feng}
\author[2]{Randolf Altmeyer}
\author[3]{Derek Stafford}
\author[,3]{Nicholas A. Christakis\footnote{Co-Corresponding Authors}}
\newcommand\CoAuthorMark{\footnotemark[\arabic{footnote}]} 
\author[,1]{Harrison H. Zhou\protect\CoAuthorMark}

\affil[1]{\small Department of Statistics and Data Science\\
Yale University, New Haven, CT, 06520}
\affil[2]{\small Department of Pure Mathematics and Mathematical Statistics \\
University of Cambridge, Cambridge, UK CB3 0WB}
\affil[3]{\small Yale Institute for Network Science\\
Yale University, New Haven, CT 06520}


\maketitle

\begin{abstract}
  Friendship and antipathy exist in concert with one another in real social networks.
  Despite the role they play in social interactions, antagonistic ties are poorly understood and infrequently measured.
  One important theory of negative ties that has received relatively little empirical evaluation is balance theory,
  the codification of the adage ``the enemy of my enemy is my friend'' and similar sayings.
  Unbalanced triangles are those with an odd number of negative ties, and the theory posits that such triangles are rare.
  To test for balance, previous works have utilized a permutation test on the edge signs. The flaw in this method, however, is that it assumes that negative and positive edges are interchangeable.
  In reality, they could not be more different.
  Here, we propose a novel test of balance that accounts for this discrepancy and show that our test is more accurate at detecting balance.
  Along the way, we prove asymptotic normality of the test statistic under our null model, which is of independent interest.
  Our case study is a novel dataset of signed networks we collected from 32 isolated, rural villages in Honduras.
  Contrary to previous results, we find that there is only marginal evidence for balance in social tie formation in this setting.
\end{abstract}

\makeatletter{\renewcommand*{\@makefnmark}{}
\footnotetext{%

  Derek Feng is a Lecturer, Department of Statistics and Data Science, Yale University, New Haven, CT 06520 USA (email: \emph{derek.feng@yale.edu});
  Randolf Altmeyer is a Post-Doc, Department of Pure Mathematics and Mathematical Statistics, University of Cambridge, UK CB3 0WB, (email: \emph{ra591@maths.cam.ac.uk });
  Derek Stafford is a Post-Doc, Yale Institute for Network Science, Yale University, New Haven, CT 06520 USA (email: \emph{derek.stafford@gmail.com});
  Nicholas Christakis is the Sterling Professor of Social and Natural Science, Yale Institute for Network Science (also Department of Sociology, and Department of Medicine), Yale University, New Haven, CT 06520 USA (email: \emph{nicholas.christakis@yale.edu});
  Harrison H. Zhou is the Henry Ford II Professor of Statistics and Data Science, Yale University, New Haven, CT 06520 USA (email: \emph{huibin.zhou@yale.edu}).
  Support for this research was provided by grants from the Bill and Melinda Gates Foundation, the Robert Wood Johnson Foundation, as well as NSF Grant DMS-1507511, and DFG Research Training group 1845 `Stochastic Analysis'.
  The authors would like to thank the two anonymous referees for their insightful comments and feedback, which greatly improved the manuscript.
}\makeatother}

\textsc{Keywords:}
Signed Graphs,
Balance Theory,
Combinatorial Central Limit Theorem


\vfill 



\section{Introduction}

Models of social network structure generally build on assumptions about myopic agents, whereby global network features emerge from the dynamic local decision rules of individual agents \citep{holland1998emergence,Kossinets:2006je}. For instance, if agents tend to attach to more central or popular actors, scaling emerges in the degree distribution of the graph \citep{Barabasi:1999je};
if people generally form connections with those who are similar, social networks exhibit homophily \citep{mcpherson2001birds};
if agents form infrequent but random connections with other agents, the social graph has a small diameter, following the small-world phenomenon \citep{Watts:1998db}.

All of these models, however, are restrictive in that they only apply to positive ties.
Much less is theorized or known about the fundamental properties of negative ties.
In principle, they need not share the same structural properties as their positive counterparts.
Moreover, as most social graphs are \emph{signed} (i.e. have both positive and negative ties), this raises the question of how the presence of the negative ties affects the surrounding positive network structure, and how we should model them concurrently.


\topic{defining balance}
One important theory of negative ties advanced by \citet{Heider:1946vz} relates to an agent's desire for balance in social relationships \citep{Harary:1959ct,simmel2010conflict}. Balance theory postulates that a need for cognitive consistency leads agents to seek to balance the valence in their local social systems.
Simply stated, friends should have the same friends and the same enemies.
This translates, in graph-theoretic terms, to requiring the product of the signs on a triangle to be positive.
Triangles that violate this property are deemed \emph{unbalanced}, and the theory posits that such triangles should be rare compared to their \emph{balanced} counterparts.

\topic{is balance real?}
Balance theory is very simple to state and almost self-evident in nature.
After all, it has already been assimilated into the wider culture through such aphorisms as \emph{``the enemy of my enemy is my friend''}.
However, as evidenced by the success of behavioral economics \citep{kahneman1979prospect}, human actors
will often act irrationally, even so far as to violate transitivity \citep{tversky1981framing}. It is therefore not unreasonable to envisage people violating transitivity in their social graph.

\topic{little evidence for balance}
As it stands, balance theory has received sparing empirical evaluation.
Tests of balance theory require the observation of antagonistic connections between actors, but these ties are often either ignored when the data is gathered, or simply unavailable due to the unwillingness of the actors themselves to divulge such information.
Those studies which have been able to observe antagonistic ties have often done so in artificial settings -- and have been very liberal about what constitutes an antagonistic tie -- like nominations to adminship on Wikipedia \citep{Leskovec:2010hw}, and user ratings of trustworthiness in an e-commerce website \citep{Guha:2004hs}, rather than in face-to-face settings, with some exceptions \citep{mouttapa2004social,huitsing2012bullying,xia2009exploring}.

\topic{test for balance, why sucks}
Though the underlying datasets may be vastly different, these studies all resort to exactly the same statistical test to verify balance in their signed networks:
for the test statistic, they use the number of balanced triangles as a measure of the degree of balance in a graph;
the null model corresponds to a permutation test on the edge weights of the observed graph.
Drawing samples from the null distribution then reduces to shuffling the signs on the graph.
The simplicity of this null model belies its principal flaw though -- namely, that it treats negative and positive ties as interchangeable.
The problem is that, as we shall soon demonstrate, negative ties behave remarkably like random ties drawn from an \ER graph.
On the other hand, researchers have spent the past few decades documenting the various ways that a network of positive ties differs from an Erd\H{o}s-R\'{e}nyi graph.

Features like preferential attachment and clustering are fundamental to our understanding of positive ties -- features that are clearly absent in negative ties.
Thus, by treating positive and negative ties as exchangeable, this null hypothesis creates a test, not for balance, but for differences in the behavior of positive and negative ties.

\topic{data example}
As an example, consider one of the social networks from our Honduras dataset, shown in \Cref{fig:intro_network}.
Decomposing the graph into its signed subgraphs reveals \subref{subfig:p22} a typical positive social network, and \subref{subfig:n22} a negative subgraph that could be easily mistaken for a sample from a random graph model.
The contrast between the two subgraphs could not be more extreme, and clearly indicates that these two types of ties should not be treated as exchangeable.

\begin{figure}[h]
  \small
  \begin{subfigure}[b]{.5\textwidth}
    \centering
    \includegraphics[width=1.2\textwidth]{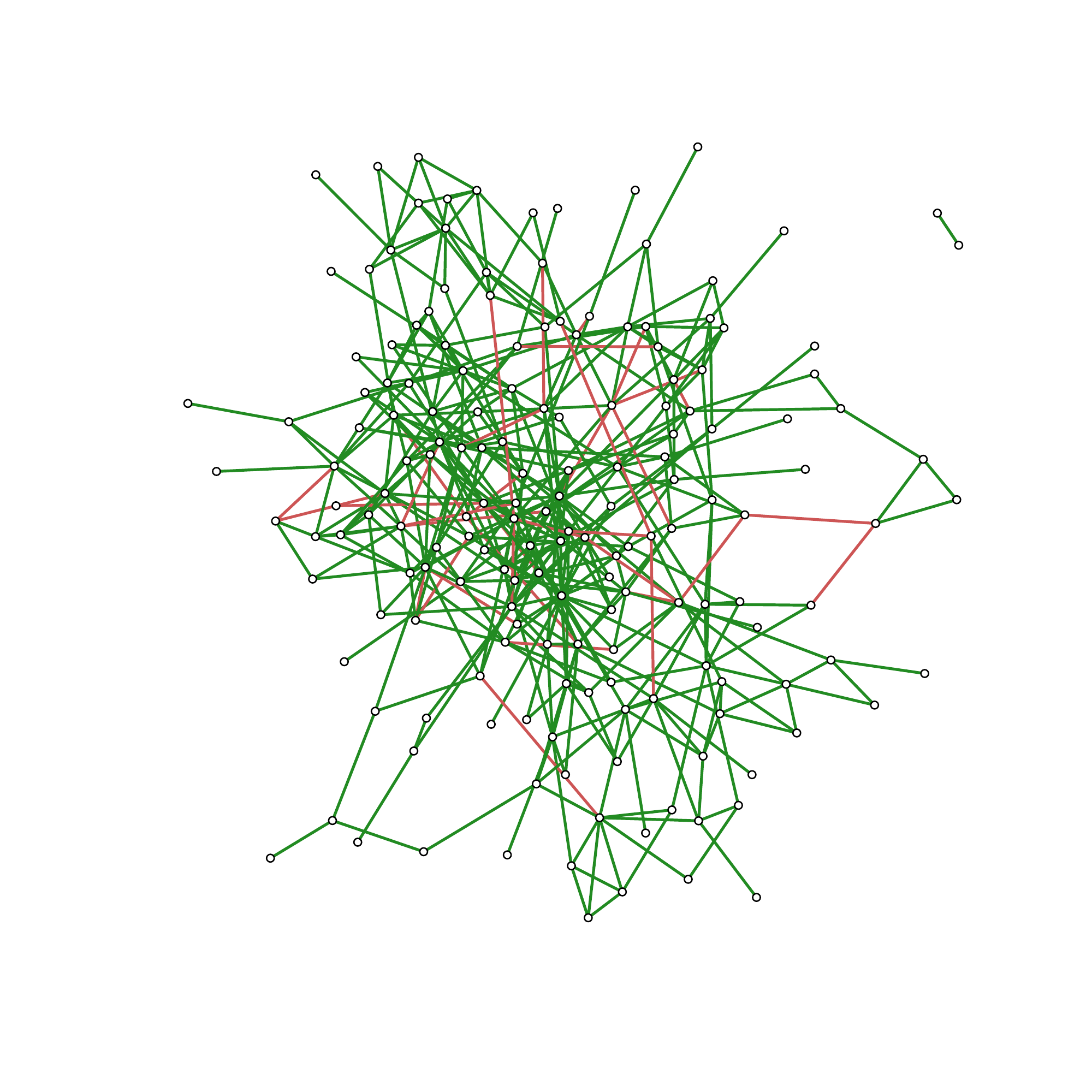}
    \vspace*{2em}
    \subcaption{Signed Graph}
    \label{subfig:u22}
  \end{subfigure}
  \begin{subfigure}[b]{.5\textwidth}
    \centering
    \includegraphics[width=.9\textwidth]{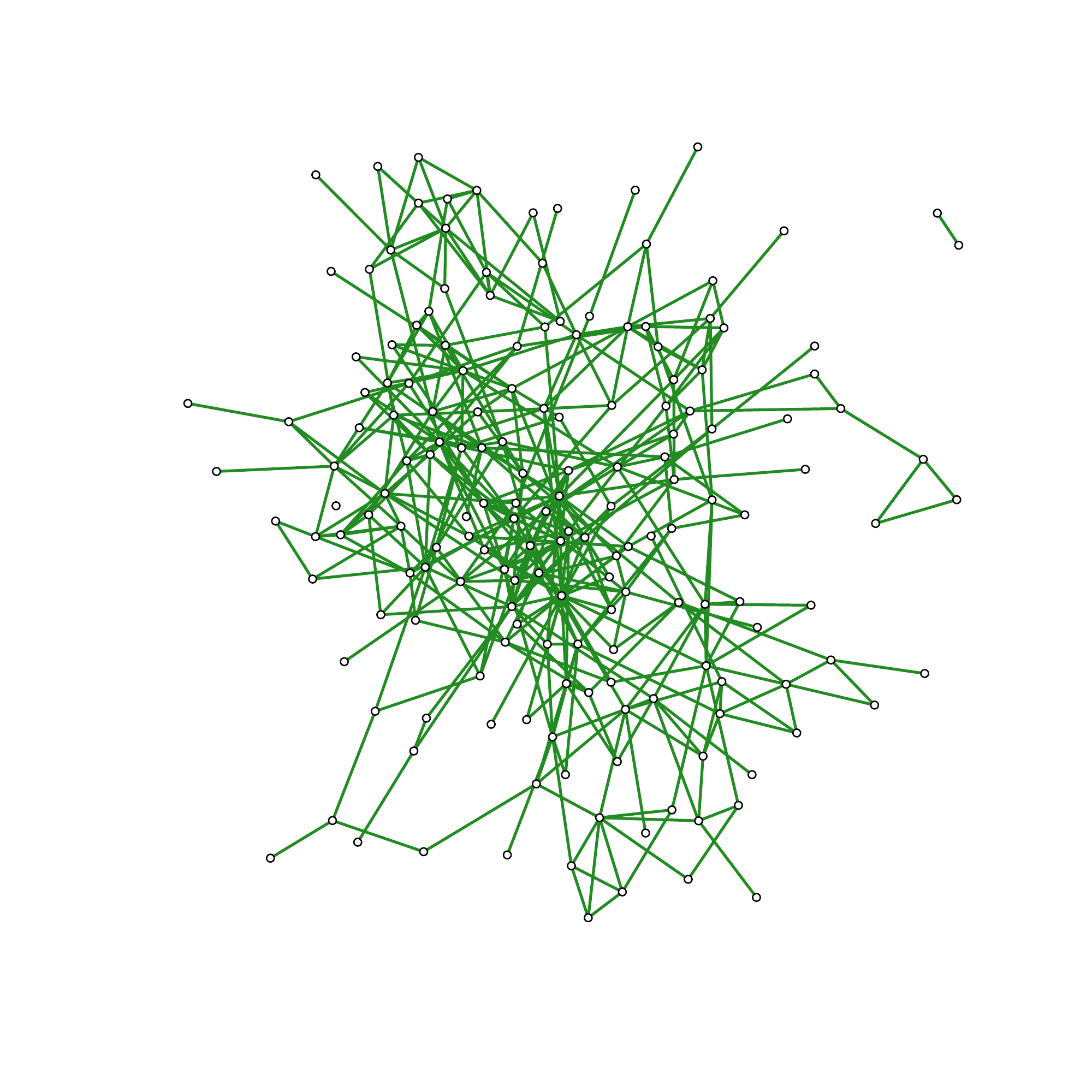}
    \vspace*{-2em}
    \subcaption{Positive Subgraph}
    \label{subfig:p22}
    \centering
    \includegraphics[width=.8\textwidth]{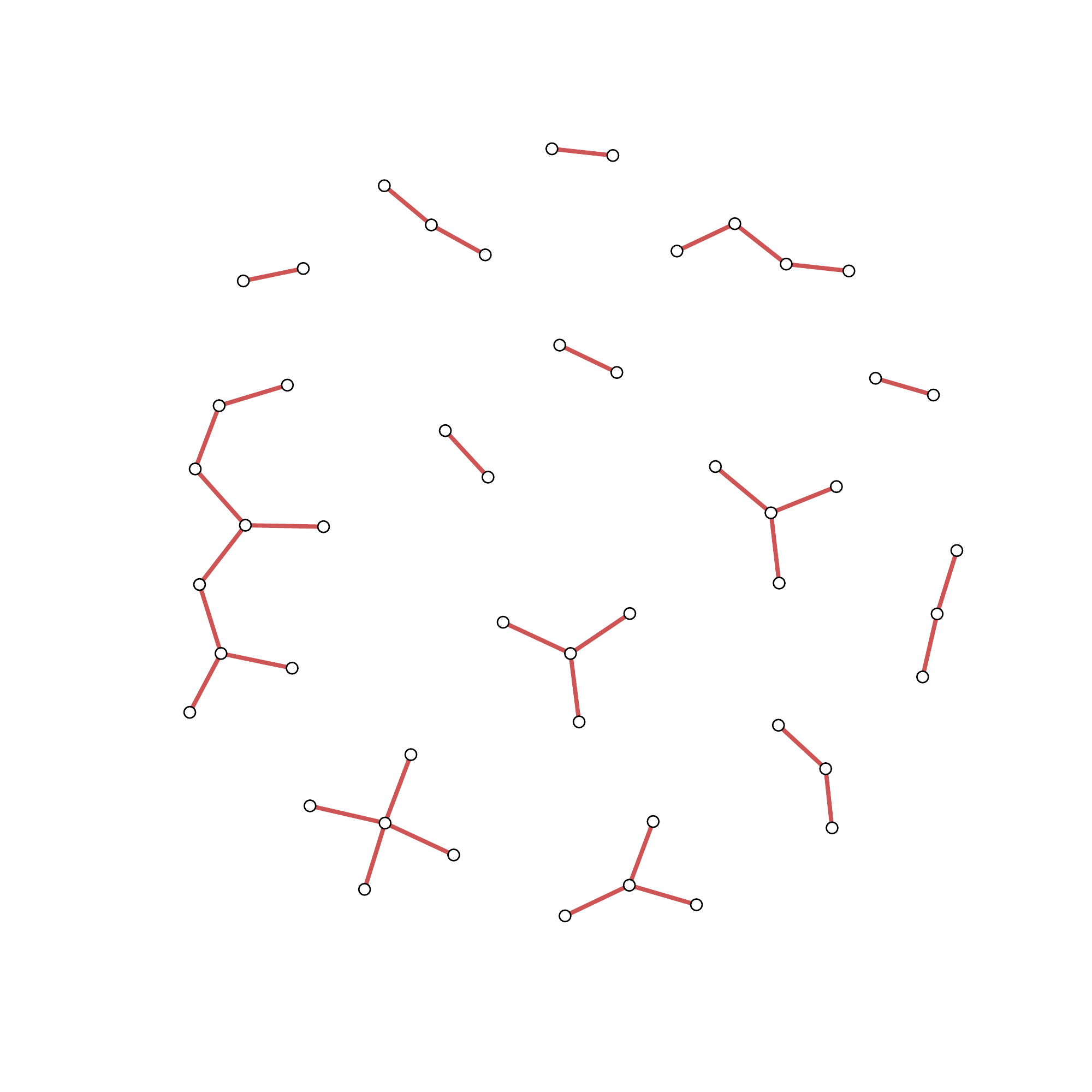}
    \vspace*{-1em}
    \subcaption{Negative Subgraph}
    \label{subfig:n22}
  \end{subfigure}
  \caption{\subref{subfig:u22} A signed social network (Village \# 22).
  Much has been studied about the positive subgraph \subref{subfig:p22}, but very little is known about the negative subgraph \subref{subfig:n22}. The example here suggests that the negative subgraph behaves more like a random graph.}
  \label{fig:intro_network}
\end{figure}

\topic{careful of null hypothesis}
As the ``replication crisis'' \citep{open2015estimating} plays out in the scientific community, and notions like p-hacking and ``garden of forking paths'' \citep{gelman2013garden} become mainstream, one overlooked but equally important issue
is the selection of an appropriate null hypothesis.
In most settings, there is a canonical choice for the null hypothesis.
On the other hand, in the context of complex social data such as social networks, we no longer have this luxury,
as it is difficult to accurately model such data with a simple parametric model.
In this regime, it is imperative to choose null models carefully.


\topic{our contribution: new null model}
The main contribution of this paper is to provide a new null model that resolves the issues raised above.
The crux of the solution is the following key observation: a crucial way in which negative and positive ties differ is through their \emph{embeddedness} level (the number of triangles that tie is a member of) --
transitivity and homophily encourage higher levels of embeddedness in positive ties.
For instance, the average embeddedness of positive ties and negative ties in the network shown in \Cref{fig:intro_network} was $1.2$ and $0.5$, respectively.
Our new method, therefore, is to stratify the permutation across embeddedness levels, thereby ensuring that the embeddedness profiles of negative and positive ties remain invariant. This preserves the fundamental differences between the two kinds of ties, creating a more accurate null model of a signed social network without balance.
This is supported by both our simulation studies and our theoretical results, where we show that for a reasonable definition of absence of balance in a graph,
the true type-I error rate of the old test converges to 1 while the type-I error rate for the new test is consistent with the specified $\alpha$.


\topic{describe theory}
To compare the relative performance of the two tests, we show asymptotic normality of the test statistic under the two null models.
Due to the stratified nature of the permutation, this is a nontrivial result, and, to the best of our knowledge, this is the first result showing asymptotic normality of this type of graph statistic under a stratified permutation model.
The key insight is that a distribution derived from a permutation test -- even a stratified permutation -- can be obtained as conditional distribution of independent random variables.
This is similar to the dichotomy between the $G(n,p)$ and $G(n,m)$ random graph model \citep{janson2011random}.
Under certain conditions, the limit and the conditioning operation may be interchanged, enabling us to carry the central limit theorem result in the independent case to the permutation case.
This proof technique of \citet{Janson:2007dx} has wide applicability, not least in the nascent field of (nonparametric) inference on random graphs.

Our final contribution is that we analyze a comprehensive dataset capturing both positive and negative ties between individuals in a social network -- namely, the networks of 32 villages in rural Honduras \citep{kim2015social}.
This novel dataset provides a first look into the behavior of interaction between negative and positive human relationships.
We find that negative ties behave very differently from positive ties.
Applying our new test of balance to the village networks reveals that balance barely registers as an underlying mechanism dictating the structure of signed networks, which is contrary to the conclusions drawn from the previous literature.

\subsection{Organization}

The rest of the paper is organized as follows.
In \Cref{sec:method}, we formally introduce the notion of balance, describe the old null model, its fundamental flaws, and our proposed new null model.
We prove asymptotic normality of the test statistic under both null models in \Cref{sec:theory}, and from this, we show that the new test has a lower type-I error rate than the old test.
This is also supported by the simulation studies we perform in \Cref{sec:simulations}.
Finally, in \Cref{sec:data}, we analyze the networks of 32 rural villages in Honduras.


\section{Method}
\label{sec:method}



\topic{defining balance and the consequences}
The theory of balance is an old theory, predating many of the ``classic'' celebrated ideas in social network analysis.
First proposed in \citet{Heider:1946vz}, it was later made formal by \citet{Cartwright:1956dh},
who recast the theory into the more natural graph-theoretic framework.
Adopting such a framework, let us first fix some notation.
Assume that we are given an undirected signed graph $G = (V,E,W)$, where
\begin{itemize}
  \item $V,E$ are vertex and edge sets, respectively, with sizes given by $|V| = N$, $|E| = n$;
  \item $W \in \left\{ -1, +1 \right\}^{n}$ is a vector of edge signs.
\end{itemize}
Let $\triangle, \triangle'$ be the ordered and unordered triplet of indices that form a triangle in $G$, respectively:
\begin{align}
  \triangle &\coloneqq \left\{ (i,j,k) \in E^{3}: i,j,k \text{ form a triangle} \right\}, \\
  \triangle' &\coloneqq  \left\{ (i,j,k) \in E^{3}: i,j,k \text{ form a triangle};\, i < j < k \right\}.
\end{align}
Then, the number of triangles in $G$ is equal to $\abs{\triangle'}$, while $\abs{\triangle} = 6 \abs{\triangle'}$.
For a triangle $(i,j,k)$ in $\triangle$, we say it is \textit{balanced} if $W_i W_j W_k = 1$, and unbalanced otherwise.
Specifically, the unbalanced triangles are those with one or three negative ties ($t_1$ and $t_3$ in \Cref{fig:baltriangles}, respectively), while the remaining two are balanced.
The graph $G$ is then deemed \textit{balanced} if all the realized triangles of $G$ are balanced -- that is,
\begin{align}
  G \text{ is balanced} \iff W_i W_j W_k = 1, \quad \forall (i,j,k) \in \triangle
\end{align}
A simple consequence of this definition is that $G$ is balanced if and only if it can be decomposed into two positive subgraphs that are joined by only negative edges (see \citet{Cartwright:1956dh}).
Finally, a key property of edges that will play a central role in our analysis is its embeddedness, which we define below.
\begin{defn}
  The \emph{embeddedness} of an edge $i \in E$, which we denote by $\varepsilon_{i}$, is the number of triangles that edge $i$ is a part of:
  \begin{align}
    \varepsilon_{i} \coloneqq \frac{1}{2} \sum_{(j,k) \in E^2} \I{ (i,j,k) \in \triangle}.
  \end{align}
\end{defn}

\begin{figure}[h]
  \small
  \begin{minipage}{.25\textwidth}
    \centering
    \includegraphics[width=0.9\linewidth]{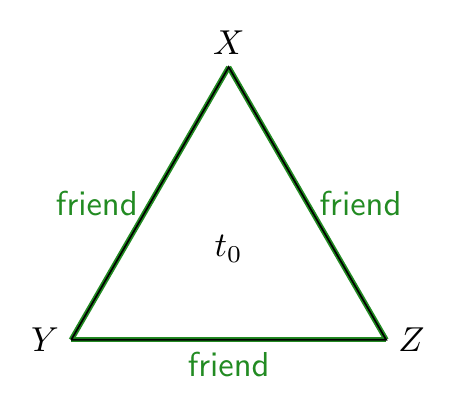}\\
  \end{minipage}%
  \begin{minipage}{.25\textwidth}
    \centering
    \includegraphics[width=0.9\linewidth]{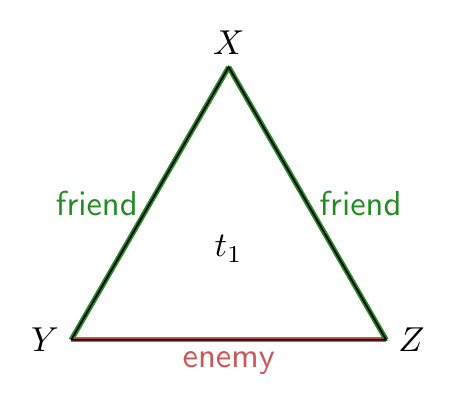}\\
  \end{minipage}
  \begin{minipage}{.25\textwidth}
    \centering
    \includegraphics[width=0.9\linewidth]{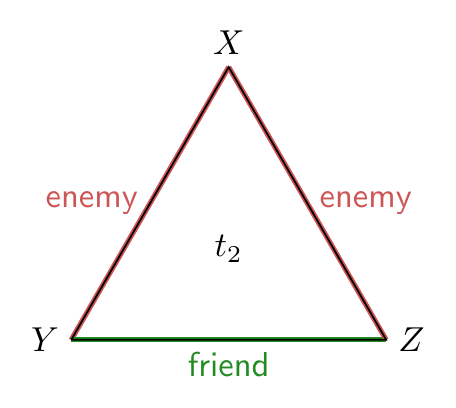}\\
  \end{minipage}%
  \begin{minipage}{.25\textwidth}
    \centering
    \includegraphics[width=0.9\linewidth]{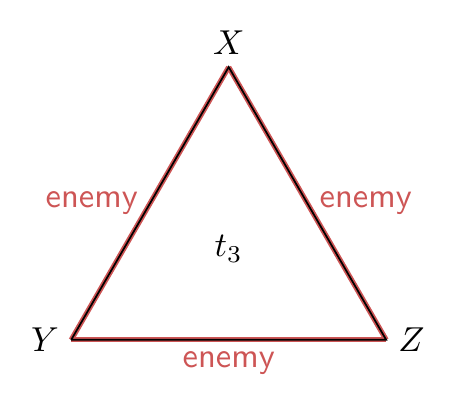}\\
  \end{minipage}
  \caption{The four possible arrangements of positive and negative ties in a triangle with undirected ties. Balance theory states that $t_1$ and $t_3$, with an uneven number of negative ties, are unbalanced.}
  \label{fig:baltriangles}
\end{figure}

\topic{psychological background for balance}
One theorized psychological mechanism for balance theory is \emph{cognitive dissonance}, the state of mental strain a person experiences while holding two conflicting beliefs.
This theory holds that it requires high cognitive load to feel both animosity and goodwill towards others \citep{festinger1962theory}.
For instance, in the case of $t_1$ (\Cref{fig:baltriangles}), $Y$ holds negative feelings towards $Z$, but also, by transitivity through $X$, $Y$ should possess positive feelings towards $Z$, leading to cognitive dissonance.
Balance theory posits that, to avoid the cognitive strain from imbalance, individuals ($Y$) will take measures to resolve such inconsistencies in their local social system,
such as switching the valence of their own relationships (befriend $Z$ instead).
On the other hand, agents may choose to simply ignore the facts that are in conflict.
The literature on \emph{cognitive biases} --
e.g. the work of \citet{tversky1981framing} who showed that people are capable of violating the transitivity of their own preferences
-- suggests that individuals might be unperturbed by, or simply unaware of, such inconsistencies or lack the willpower to correct their local systems.

Here, we adopt the frequentist approach favored by the existing literature and devise a hypothesis test for balance theory.
Two ingredients are needed to specify such a test:
a test statistic that measures the balance on a graph, and a null model that describes a graph without balance.
We begin with the choice of a measure of balance.
It is immediately obvious that the binary definition of balance from \citet{Cartwright:1956dh} is highly impractical, as almost every social graph is equally ``not balanced'' under this definition, even though some graphs are clearly more balanced than others.
A more appropriate measure, and the one used throughout this literature, is the number of unbalanced triangles, as higher levels of balance should result in fewer unbalanced triangles.\footnote{The ratio of unbalanced triangles to all triangles would seem like a more appropriate statistic, but since we are ultimately performing a statistical test with a null model that leaves the number of triangles invariant, using the ratio is therefore equivalent to using just the numerator.}
This count, however, is meaningless on its own.

The second and more important choice is the null model that describes a graph without balance, and it is here that we diverge from the current literature. 
Despite the broad spectrum of application settings, the existing literature is unanimous in its choice of a null model.
This null model corresponds to one derived from a \emph{permutation test}, where the permutation is over the signs of the edges on the graph.
To generate a graph under this null model, we simply shuffle the signs on the observed graph.
An equivalent formulation is the following generative model. Start with a positive social graph ($G$), and suppose that negative ties are only ever formed by switching the signs of positive ties. With a fixed budget of $m$ negative ties, generate a graph by picking $m$ ties at random to switch.

To see why this describes a graph without balance, note that balance is a statement about the relation between negative and positive ties, and makes claims about their relative positions on the graph.
Thus, a generative model where the sign of an edge is independent of its position on the graph is necessarily absent of balance.

Having chosen a null model (and a test statistic), the test for balance is now fully specified. For an input graph $G$, we test for balance by comparing the number of unbalanced triangles in $G$ against the distribution of the same statistic under the null model.
The details of the general testing procedure are shown in \Cref{alg:balance}, and the details for this test can be found in \Cref{alg:perm_old}.
Note that this is a \emph{left-tailed} test, as the presence of balance should reduce the number of unbalanced triangles.

Formally, the test statistic that we use to measure balance is the number of unbalanced triangles, given by
\begin{align}
  U \coloneqq \sum_{(i,j,k) \in \triangle'} \I{W_{i} W_{j} W_{k} = -1}.
\end{align}
Denote by $\tau$ a random variable which is uniformly distributed over the set of all permutations of $E$ -- namely over the symmetric group $S_n$ (of size $n!$). This enables us to define a random weight vector $W_{\tau}$ by $(W_{\tau})_i = W_{\tau(i)}$.
Then, the old null model is equivalent to the graph-valued random variable given by $G_{\tau} = (V, E, W_{\tau})$
\footnote{This uniform random permutation is technically \hl{overkill}, as it actually only has an effective size of $\binom{n}{m}$, where $m$ is the number of negative ties, since we are only permutating the signs around.}.
We shall use $G_{\tau}$ interchangeably to mean either the random variable or the associated probability distribution over signed graphs.

\begin{modl}[Old Model]\label{modl:old}
  The graph has distribution $G_{\tau}$, where $\tau$ is uniformly distributed over $S_{n}$.
\end{modl}
The old null hypothesis then corresponds to $H_{0}^{\tau}\colon \mathbf{G} \sim G_{\tau}$, while the number of unbalanced triangles now takes the form
\begin{align}
  U_{\tau} \coloneqq \sum_{(i,j,k) \in \triangle'} \I{ W_{\tau(i)} W_{\tau(j)} W_{\tau(k)} = -1 }.
\end{align}

\begin{algorithm}
\small
\caption{General Test for Balance}\label{alg:balance}
\begin{algorithmic}[1]
\Procedure{TestBalance}{$G$, $N$, \textsc{NullModel}} \Comment{$N$ is the number of simulations}
  \State $r \gets$ \Call{TestStatistic}{$G$} \Comment{observed statistic}
  \For{$i\gets 1$ to $N$}
    \State $\hat{G} \gets$ \Call{NullModel}{$G$} \Comment{generate a graph under the null model}
    \State $s[i] \gets $ \Call{TestStatistic}{$\hat{G}$} \Comment{calculate the statistic on the new graph}
  \EndFor
  \State \Return the fraction of $s \leq r$ \Comment{calculate the $p$-value}
\EndProcedure
\Statex
\Function{TestStatistic}{$G$} \Comment{unbalanced triangle count}
  \State \Return the number of triangles in $G$ that have an odd number of negative ties
\EndFunction
\end{algorithmic}
\end{algorithm}

\begin{algorithm}
\small
\caption{The Old Test}\label{alg:perm_old}
\begin{algorithmic}[1]
\Procedure{OldTest}{$G$, $N$}
  \State \Return \Call{TestBalance}{$G$, $N$, \Call{UniformPermute}{$G$}}
\EndProcedure
\Statex
\Procedure{UniformPermute}{$G$}
  \State \Return \Call{PermuteSign}{$G, \operatorname{edges}(G)$}
\EndProcedure
\Statex
\Function{PermuteSign}{$G, E$} \Comment{$E$ is a subset of the edges of $G$}
  \State Permute the signs on the edge set $E$ of $G$ at random
  \State \Return $G$
\EndFunction
\end{algorithmic}
\end{algorithm}

This choice of null model has several flaws, however.
First, it confines the formation of negative ties to switches from the existing positive edges, and so prohibits the formation of negative ties from locations where there are currently no edges.
However, this type of behavior might have \hl{some semblance to reality},
as it can be said that having negative feelings towards someone presupposes that you know them well enough to dislike them. Additionally, if we do not restrict the negative edges to the support, we then need to make a judgement as to how the presence of edges should be modeled, which introduces even more subjectivity to the null model.

A more concerning issue is that this null model treats negative and positive ties as interchangeable.
This raises two problems.
The first is that such an assumption is strictly stronger than assuming there is no balance.
Indeed, the two types of ties being exchangeable is equivalent to the sign of an edge being independent of its location, which necessarily implies that there is no balance.
Hence, this null model is in fact testing a stronger statement than lack of balance, leading to a potentially inflated type-I error.
\todo{try to incorporate the idea of conservative test, extent of conservatism}

The second problem is that such an assumption is unrealistic.
On the one hand, researchers have spent the past few decades cataloging the various ways in which positive social networks differ from \ER graphs.
Social networks from a wide variety of settings have been found to share structural similarities \citep{Apicella:2012dt}, such as degree assortativity, transitivity, and homophily -- all properties patently absent in \ER graphs.
On the other hand, as we demonstrate in \Cref{sec:data}, the negative tie subgraph behaves remarkably similar to a random graph.
Thus, by assuming that negative and positive ties are exchangeable, the existing literature has chosen a test that is essentially condemned to significance.


Our main methodological contribution is that we devise a new null model that addresses the aforementioned problems:
instead of having a uniform permutation across all edges, we stratify the permutation across edges of the same embeddedness.
Formally, the new null hypothesis differs from the old null hypothesis in the choice of random permutation.
Instead of the random permutation $\tau$, which is uniformly distributed over all permutations of $E$, the new random variable, which we denote by $\pi$, is a (disjoint) composition of uniform permutations, one for each level of embeddedness.
\begin{modl}[New Model]\label{modl:new}
  The graph has distribution $G_{\pi}$, where $\pi$ is given by
\begin{align}
  \pi (E) \coloneqq \left( \tau_{L} \circ \ldots \circ \tau_{1} \right) (E),
\end{align}
where $L$ is the maximum embeddedness level of $G$, and $\tau_{l}$ is uniformly distributed over permutations of the set of edges with embeddedness $l$, denoted by $E_l$, and all other edges are untouched.
\end{modl}
\begin{rmk}
  Note that we do not include the permutation $\tau_0$ (targeting edges which are not part of any triangle) in $\pi$, since such a permutation would not change the graph (and in particular, would not change the number of unbalanced triangles).
\end{rmk}
The new null hypothesis is now $H_0^{\pi}\colon \mathbf{G} \sim G_{\pi}$,
and the corresponding test statistic is given by
\begin{align}
  U_{\pi} = \sum_{(i,j,k) \in \triangle'} \I{ W_{\tau_{\varepsilon_i}(i)} W_{\tau_{\varepsilon_j}(j)} W_{\tau_{\varepsilon_k}(k)} = -1 }.
\end{align}
Define $n_l$ to be the number of edges of embeddedness $l$ and $m_l$ to be the number of negative edges of embeddedness $l$:
\begin{align}
  n_l &= \sum_{i \in E} \I{\varepsilon_i = l}, \\
  m_l &= \sum_{i \in E} \I{\varepsilon_i = l, W_i = -1}.
\end{align}
Note that $\pi$ is determined completely by $\left\{ n_l \right\}_{l = 0}^{L}$.

The exact steps of this procedure are described by the function \textsc{StratifiedPermute} in \Cref{alg:perm_new}.
Crucially, this stratified permutation leaves the embeddedness profile of both negative and positive ties invariant.
As a result, we maintain the embeddedness profiles of both types of tie, while still ensuring there is enough flexibility to garner meaningful variability as a model of a signed network without balance.
Essentially, we argue that ties are only exchangeable at the same embeddedness level.

\begin{algorithm}
\small
\caption{The New Test}\label{alg:perm_new}
\begin{algorithmic}[1]
\Procedure{NewTest}{$G$, $N$}
  \State \Return \Call{TestBalance}{$G$, $N$, \Call{StratifiedPermute}{$G$}}
\EndProcedure
\Statex
\Procedure{StratifiedPermute}{$G$}
  \State $tc \gets$ the triangle membership count for each edge \Comment{e.g. $tc = (1, 1, 3, 3, 0, 4)$}
  \State $utc \gets$ $\operatorname{unique}(tc)$, the unique counts in $tc$ \Comment{e.g. $utc = (1,3,0,4)$}
  \For{$t$ in $utc$}
    \State $E \gets$ the edges in $G$ with triangle count $t$
    \State $G \gets$ \Call{PermuteSign}{$G, E$}
  \EndFor
  \State \Return $G$
\EndProcedure
\Statex
\Function{PermuteSign}{$G, E$} \Comment{$E$ is a subset of the edges of $G$}
  \State Permute the signs on the edge set $E$ of $G$ at random
  \State \Return $G$
\EndFunction
\end{algorithmic}
\end{algorithm}

\vskip1em
The computational cost of the two procedures can be broken down into two tasks: the enumeration of all the triangles in the graph, and the determination of the null distribution of the statistic.
The first part is essentially unchanged by the new stratified test.
In particular, it turns out there is a very straightforward way of calculating both the embeddedness level of edge $(i,j)$ and unbalanced triangle counts: letting $A$ be the signed adjacency matrix corresponding to $G$, we have
\begin{align}
  \mathcal{E}_{ij}&=\left(\left|A\right|\circ\left|A\right|^{2}\right)_{ij}, \\
  U_{n}&=\frac{1}{12}\text{tr}\left(\left|A\right|^{3}-A^{3}\right)=\frac{1}{12}\text{tr}\left(\mathcal{E}\one\one'-A^{3}\right),
\end{align}
where $\one \in \mathbb{R}^{N}$ is a vector of all ones.
It is clear, however, that our method will be more computationally intensive compared to the original test for the latter part, as now we must perform (at most) $L$ permutations for each draw from the null distribution.
For large enough graphs, where Monte Carlo simulations are prohibitively expensive, these considerations are modulated by the fact that our limit theorem demonstrates that a normal approximation of the null distribution suffices.

\topic{shortcomings}
A potential shortcoming of our method is that, since it operates at the embeddedness level, networks where many of the embeddedness levels have only one sign edge type will not have an expressive null distribution.
A simple adjustment is the following: instead of permuting edges across embeddedness levels, we can permute across ranges of levels, which allows signs to hop to different levels.
The question then arises of how to construct the bins.

\topic{Example showcasing the difference between the two tests}
To clearly delineate the differences between our new model and the old model, consider a toy example shown in \Cref{fig:toy2}.
In this social network, there is a core group of six individuals that have formed a clique, as well as an additional three individuals on the periphery that have formed unbalanced triangles with the core group.
We claim that this network is not balanced, and so a statistical test should not reject the null that there is no balance. One might argue on the contrary, as there are more balanced triangles than unbalanced ones. However, the clique should not be treated as evidence for balance, as it can already be explained by transitivity.
\todo{come up with a better example? it's hard}
For instance, a graph with no negative ties is ``balanced'' in that there are no unbalanced triangles, but this is clearly not strong evidence supporting the full spectrum of balance. That is why the key triangles to monitor are those with negative ties ($t_1, t_2, t_3$ in \Cref{fig:baltriangles}).

Applying the old test (\cref{subfig:toy2-1}), which ignores embeddedness levels, we see that it allows the negative ties free rein over the entire support of the graph, including to the dense cluster in the middle, where the embeddedness level is 5. This results in an artificially inflated mean of the null distribution.
On the other hand, our test (\cref{subfig:toy2-2}) restricts this edge to those positions with also embeddedness level of 1, shown in blue.
The old test incorrectly rejects the null, while the new test does not.

\begin{figure}[h]
  \small
  \begin{subfigure}[b]{.49\textwidth}
    \hspace*{2em}
    \includegraphics[width=.75\textwidth]{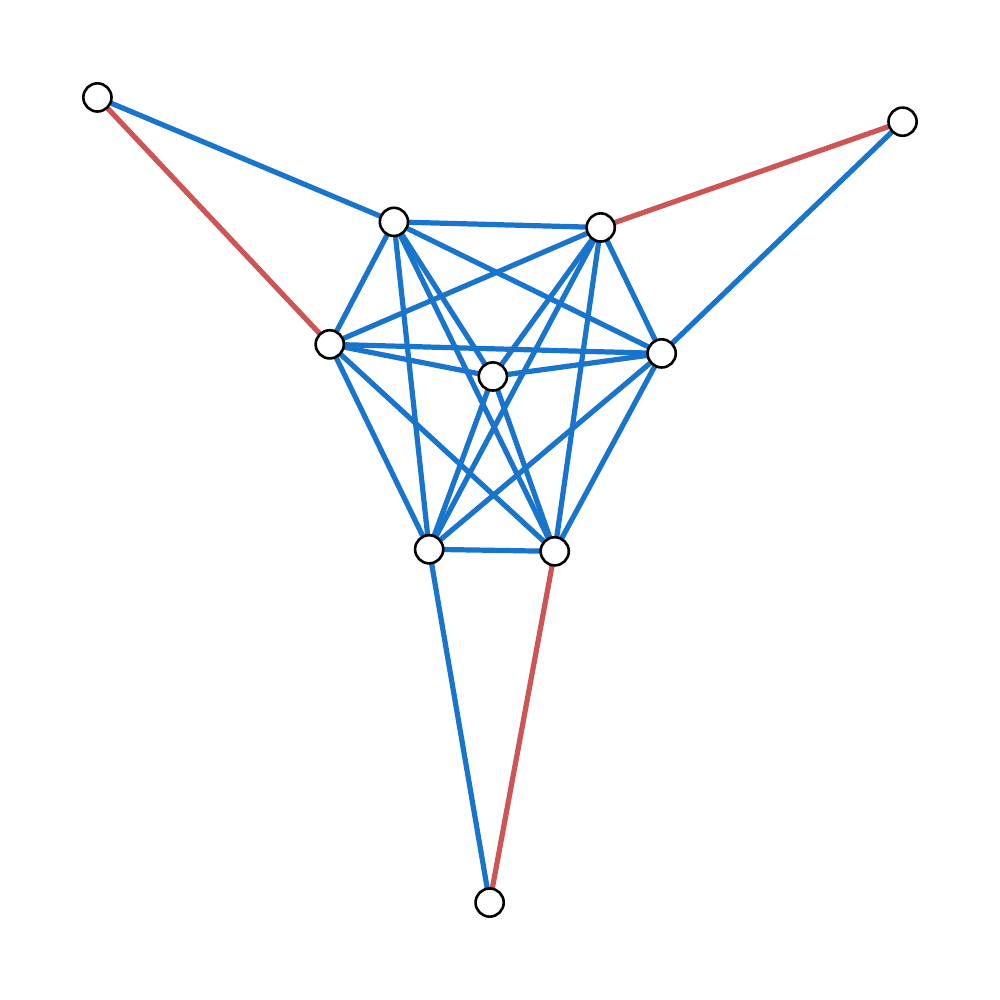}
    \vspace*{-1em}
    \subcaption{Old (Uniform)}
    \label{subfig:toy2-1}
  \end{subfigure}
  \begin{subfigure}[b]{.49\textwidth}
    \hspace*{2em}
    \includegraphics[width=.75\textwidth]{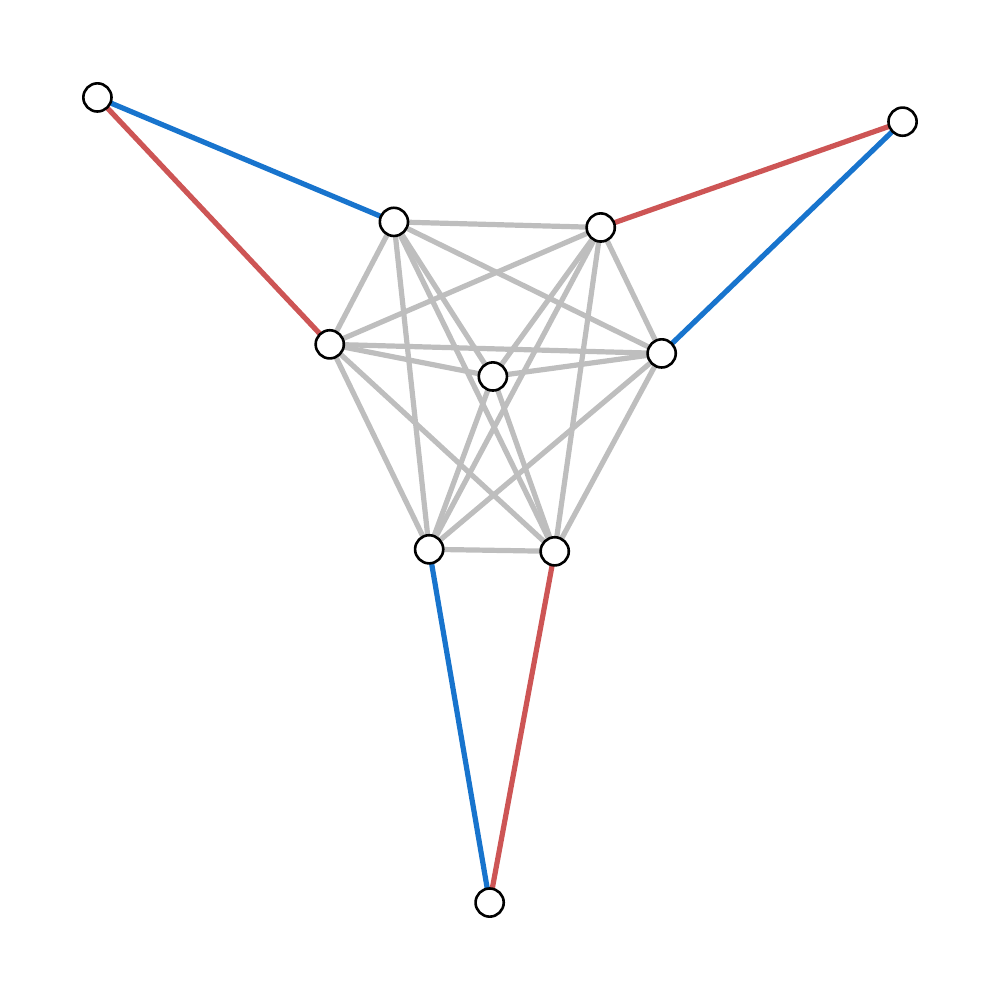}
    \vspace*{-1em}
    \subcaption{New (Stratified)}
    \label{subfig:toy2-2}
  \end{subfigure}
  \caption{Comparison of the two permutations. The blue edges correspond to the positions where the negative (red) edges are able to be moved to under the different tests.}
  \label{fig:toy2}
\end{figure}

\topic{discussion of dynamic vs static}
Finally, we discuss the apparent dynamic nature of balance.
The intuitive interpretation of balance presents itself as a dynamic process, one where individuals notice an unbalanced state and then correct it.
An appropriate test of balance would then involve monitoring the evolution of a graph to see if such local corrections are observed. Ignoring the difficulty of obtaining dynamic graph data, we claim that this dynamic analysis is problematic, as balance is not necessarily expressed in discrete steps.
Consider the first scenario in \Cref{fig:dynamic}, which shows a graph entering and leaving an unbalanced state. If we were to capture all three states, we would flag this as an example of balance at play.
Suppose, however, that the window of time between events 1 and 2 is smaller than the resolution of the graph evolution samples. We would then recover scenario two of \Cref{fig:dynamic} instead, which would not be flagged as evidence for balance under a dynamic model, even though there was a local correction.
Thus, as tempting as it may be to treat balance in terms of dynamic local corrections, we think it is more appropriate to consider balance as a holistic property of a graph.
\hl{Of course, the relative timing of data collection compared to the underlying social process is crucial here.}

\begin{figure}[h]
  \centering
  \includegraphics[width=0.7\linewidth]{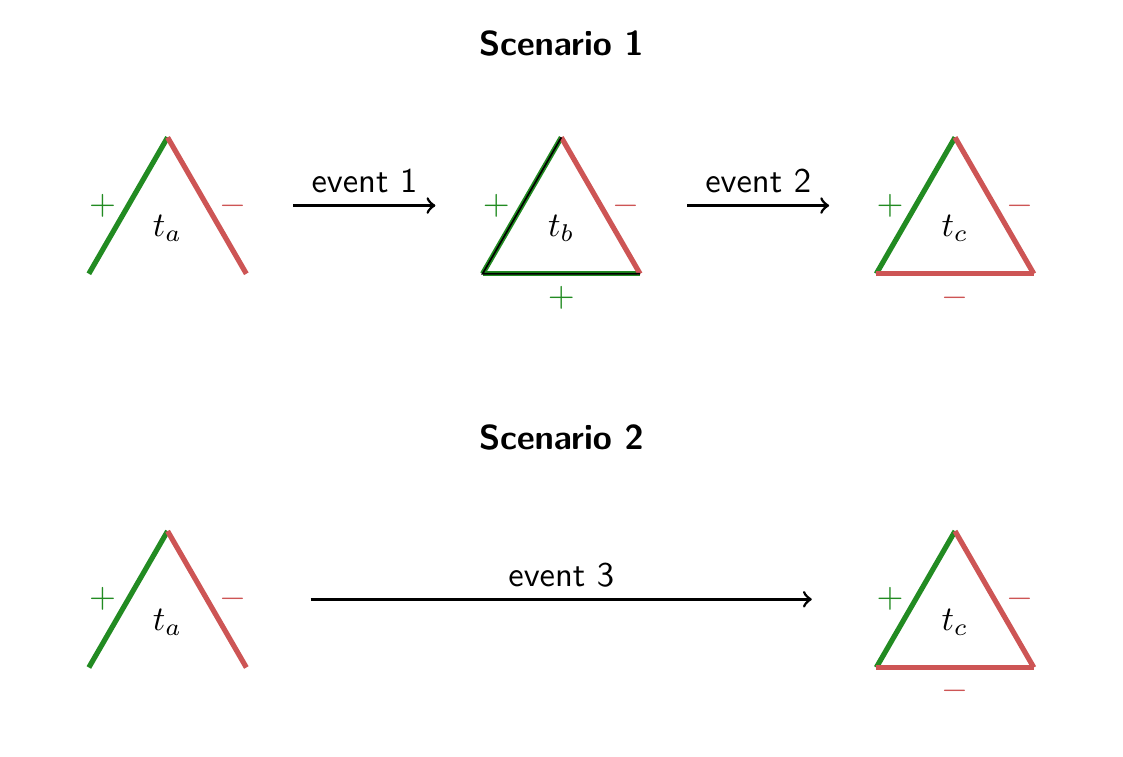}
  \caption{Dynamics of Balance. Two scenarios that start and end in the same balanced state, but the intermediary steps are different.
  }
  \label{fig:dynamic}
\end{figure}

\section{Asymptotics of the Null Distribution}
\label{sec:theory}

One drawback of a permutation test is that the accompanying null distribution has no closed form. Monte~Carlo methods are required to approximate the distribution, and, for large graphs, the number of samples needed for an accurate approximation might be prohibitive. Additionally, the lack of a closed form makes it difficult to compare different null distributions.
We resolve both issues by showing that our statistic, namely, the number of unbalanced triangles, is asymptotically normal under the new null hypothesis, and therefore can be reasonably approximated by a normal distribution with known moments
\footnote{Ideally, we would have a Berry-Esseen type result here to quantify precisely the error in the normal approximation. We leave this for future work.}.
This is the main theoretical result of the paper (\Cref{thm:main}).



The distribution of sums of permutation-based random variables are notoriously difficult to analyze as the random variables themselves are no longer independent.
Thankfully, there is already a whole field dedicated to these types of results, known as combinatorial central limit theorems.
This field was founded by Hoeffding in his seminal paper \citep{Hoeffding:1951im}, in which he utilized the method of moments. Subsequent results have used variations of Stein's method to prove more general combinatorial CLTs (see \citet{Barbour:1989bt}).
While it is a relatively straightforward exercise to extend the results of \citet{Barbour:2005vl} to show asymptotic normality under the old null hypothesis, the additional structure introduced by our stratified permutation renders this approach infeasible.

Inspired by the results of \citet{Janson:2007dx}, we instead take a completely different approach.
The key insight is that the distribution of the stratified permutation model is a conditional distribution of a much simpler model, where the signs are distributed as independent (non-symmetric) Rademacher random variables, and the conditioning is on the number of negative ties in each stratum.

\subsection{Notation}
Fix a signed graph $G = (V, E, W)$. 
We can associate with $G$ a new graph-valued random variable, $G_{p} = (V, E, X)$, having the same support as $G$, but with a random weight vector $X$ comprising of independent (non-symmetric) Rademacher random variables. Concretely, the weight vector $X$ is given by $X_i \sim \operatorname{Rad}\left( 1 - p_{\varepsilon_i} \right)$, with $\P{ X_i = 1 } = 1 - p_{\varepsilon_i}, \P{ X_i = -1 } = p_{\varepsilon_i}$, where $p_{\varepsilon_i} = \frac{m_{\varepsilon_i}}{n_{\varepsilon_i}}$.
By construction, edges of the same embeddedness level will have the same probability of being negative under $G_{p}$. The statistic for $G_p$ corresponds to
\begin{align}
  V_{p} \coloneqq \sum_{(i,j,k) \in \triangle'} \I{X_{i} X_{j} X_{k} = -1}.
\end{align}
\subsection{Main Results}

The relation between $G_p$ and $G_\pi$ is given by the following Lemma.
\begin{lem}\label{lem:cond_dist}
  Denote by $M_l$ the number of negative ties with embeddedness $l$ in $G_p$. Then the distribution of $G_{\pi}$ is just the conditional distribution of $G_{p}$, conditional on the event $M=\bm{m}$, where $M=\left\{ M_l \right\}_{l=0}^{L}$ and  $\bm{m}=\left\{ m_l \right\}_{l=0}^{L}$. In particular,
  \begin{align}
    \mathcal{L}\left( U_{\pi} \right) = \mathcal{L}\left( V_p \,\mid\, M=\bm{m} \right).
  \end{align}
\end{lem}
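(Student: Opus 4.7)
The plan is to exploit the classical bridge between independent Bernoulli trials and uniform sampling without replacement, applied stratum by stratum. Fix an embeddedness level $l \ge 1$ and let $E_l \subset E$ denote the $n_l$ edges of embeddedness $l$. Under $G_p$, the restriction $(X_i)_{i \in E_l}$ consists of i.i.d.\ Rademacher$(1-p_l)$ random variables with $p_l = m_l/n_l$. Every sign pattern on $E_l$ that assigns $-1$ to a prescribed set of $m_l$ edges has the same probability $p_l^{m_l}(1-p_l)^{n_l - m_l}$, so conditional on $\{M_l = m_l\}$ the law of $(X_i)_{i \in E_l}$ is uniform over the $\binom{n_l}{m_l}$ such patterns. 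On the $\pi$ side, $\tau_l$ is a uniform permutation of $E_l$, and by definition of $m_l$ the input weights $(W_i)_{i \in E_l}$ already contain exactly $m_l$ negatives; applying a uniform permutation to any fixed multiset-valued vector yields the uniform distribution over all rearrangements, which is again the uniform distribution on the same $\binom{n_l}{m_l}$ configurations. Hence the stratum-$l$ marginals of $(X_i)_{i \in E_l} \mid M_l = m_l$ and $(W_{\tau_l(i)})_{i \in E_l}$ coincide.

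Next I would assemble the joint law across strata. Under $G_p$, the variables $\{X_i\}_{i \in E}$ are independent, so the stratum-wise vectors are mutually independent, and the conditioning event $\{M = \bm{m}\}$ factorises as $\bigcap_l \{M_l = m_l\}$, which preserves independence across strata. Under $G_\pi$, the permutations $\tau_1, \ldots, \tau_L$ act on disjoint edge sets and are drawn independently by construction. Edges of embeddedness $0$ lie in no triangle, so they affect neither $U_\pi$ nor $V_p$ and can be ignored (this also explains why $\tau_0$ is excluded from $\pi$ in the definition of the new model). Combining the stratum-wise marginal identities with independence yields
\begin{align}
  \mathcal{L}\bigl( (X_i)_{i \in E,\, \varepsilon_i \ge 1} \,\big|\, M = \bm{m} \bigr) \;=\; \mathcal{L}\bigl( (W_{\pi(i)})_{i \in E,\, \varepsilon_i \ge 1} \bigr).
\end{align}

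Finally, observe that both $U_\pi$ and $V_p$ are the \emph{same} deterministic function of the sign vector restricted to edges of positive embeddedness, namely the count $\sum_{(i,j,k) \in \triangle'} \I{s_i s_j s_k = -1}$ applied to the respective signs $s$. Pushing the previous distributional identity forward through this map gives $\mathcal{L}(U_\pi) = \mathcal{L}(V_p \mid M = \bm{m})$. The argument is essentially bookkeeping; the only point requiring care is the identification between the permuted weights $W_{\tau_l}$, which rearrange the observed signs, and the conditional vector $(X_i)_{i \in E_l} \mid M_l = m_l$, which has the same multiset of values but arose from independent Bernoulli sampling. Exchangeability of i.i.d.\ Rademacher coordinates is exactly what bridges these two descriptions, and no further analytic input is needed.
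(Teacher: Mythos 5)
Your proposal is correct and follows essentially the same route as the paper's proof: reduce to a single embeddedness stratum via independence across strata, observe that both the conditional law of the i.i.d.\ Rademacher signs given $M_l = m_l$ and the law of the uniformly permuted observed signs are uniform over the $\binom{n_l}{m_l}$ admissible configurations, and push forward. The extra remarks about embeddedness-$0$ edges and the final push-forward step are harmless elaborations of what the paper leaves implicit.
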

The proof is deferred to \Cref{sec:appendix_lemmas}.
With respect to proving a CLT we must first define a sequence of graphs. Let $\left\{ G^{(n)} \right\}_{n=1}^{\infty}$ be a fixed sequence of signed graphs, indexed by the number of edges. Then, $G_\pi, G_p$ extend naturally to sequences of graph-valued random variables, $\left\{ G^{(n)}_{\pi} \right\}_{n=1}^{\infty}, \left\{ G^{(n)}_{p} \right\}_{n=1}^{\infty}$ with statistics $\left\{ U_\pi^{(n)} \right\}_{n=1}^{\infty}, \left\{ V_p^{(n)} \right\}_{n=1}^{\infty}$. To simplify notation, we will sometimes drop the index $n$. For instance, we still write $E_l$, $\pi$ and $p_l, m_l$, keeping in mind the dependence on $n$.

\Cref{lem:cond_dist} suggests that we can obtain a CLT for $U^{(n)}_\pi$ by proving one for $V^{(n)}_p$, which should be much easier by independence of edges in $G^{(n)}_p$. In general, however, weak convergence does not imply \emph{conditional} weak convergence. The crucial property that enables one to interchange conditioning and limits is \emph{stochastic monotonicity}, which we define below. Here, we use the partial order $x \leq y$ for vectors $x, y \in \mathbb{R}^{d}$ defined by $x_i \leq y_i$ for all $i$.
\begin{defn}\label{defn:stochastic_monot}
  Let $V \in \mathbb{R}^{q}$ and $M \in \mathbb{R}^{r}$ be random vectors. We say that $V$ is \emph{stochastically increasing} with respect to $M$ if the conditional distribution $\mathcal{L}(V \,\mid\, M = \bm{m})$ is increasing in $\bm{m}$. That is, if for any $v \in \mathbb{R}^{q}$ and $\bm{m}_1 \leq \bm{m}_2$, we have
  \begin{align}
    \P{V \leq v \,\mid\, M = \bm{m}_1} \geq \P{ V \leq v \,\mid\, M = \bm{m}_2}.
  \end{align}
  We say that $V$ is \emph{stochastically decreasing} with respect to $M$ if $-V$ is stochastically increasing with respect to $M$, and $V$ is \emph{stochastically monotone} with respect to $M$ if it is either stochastically increasing or decreasing with respect to $M$.
\end{defn}

Based on stochastic monotonicity it is indeed possible to transform weak convergence to conditional weak convergence. This beautiful result is known as \emph{Nerman's Theorem} (\citet{Nerman:1998fx,Janson:2007dx}).
\begin{thm}[Nerman's Theorem]\label{thm:nermans_theorem}
  Let $V^{(n)} \in \mathbb{R}^{q}$ and $M^{(n)} \in \mathbb{R}^{r}$ be random vectors such that $V^{(n)}$ is stochastically monotone with respect to $M^{(n)}$. Assume that
  \begin{align}
    (a_{n}^{-1} ( V^{(n)} - b_n), c_{n}^{-1}( M^{(n)} - d_n)) \xrightarrow{\mathcal{D}} (V,M)
  \end{align}
  for random vectors $V \in \mathbb{R}^{q}, M \in \mathbb{R}^{r}$ and $a_n, c_n > 0, b_n \in \mathbb{R}^{q}, d_n \in \mathbb{R}^{r}$. Let also $m_n \in \mathbb{R}^{r}$ be a sequence such that $c_n^{-1}(m_n - d_n) \to \xi \in \mathbb{R}^{r}$ and let $U^{(n)}$ be a random vector with distribution $\mathcal{L}(V^{(n)} \,\mid\, M^{(n)} = m_n)$. Suppose that $\xi$ is an interior point of the support of $M$ and that there exists a version of $m \mapsto \mathcal{L}(V \,\mid\, M = m)$ continuous at $m = \xi$ as a function of $m \in \mathbb{R}^{r}$ into $\mathcal{P}(\mathbb{R}^{q})$, the set of probability measures of $\mathbb{R}^{q}$. Then,
  \begin{align}
    a_{n}^{-1} \left( U^{(n)} - b_n \right) \xrightarrow{\mathcal{D}} \mathcal{L}(V \,\mid\, M = m).
  \end{align}
\end{thm}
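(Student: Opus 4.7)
The plan is to sandwich the point-conditional probability $\mathbb{P}(V^{(n)} \leq v \mid M^{(n)} = m_n)$ between conditional probabilities on small hyperrectangles around $m_n$, which can in turn be accessed through the assumed joint weak convergence. Stochastic monotonicity is precisely what makes such a sandwich valid, since in general the point-conditional distribution bears no a priori relation to neighborhood averages. Without loss of generality take $V^{(n)}$ stochastically increasing in $M^{(n)}$ (otherwise work with $-V^{(n)}$), fix a continuity point $v_0$ of $\mathcal{L}(V \mid M = \xi)$, and set $v_n := b_n + a_n v_0$. The target reduces to showing $\mathbb{P}(V^{(n)} \leq v_n \mid M^{(n)} = m_n) \to \mathbb{P}(V \leq v_0 \mid M = \xi)$.

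For fixed $\epsilon > 0$, introduce the hyperrectangles $I_n^+ := \prod_i [m_{n,i}, m_{n,i} + \epsilon c_{n,i}]$ and $I_n^- := \prod_i [m_{n,i} - \epsilon c_{n,i}, m_{n,i}]$. By stochastic monotonicity, for every $m \in I_n^+$ (so $m \geq m_n$ componentwise) we have $\mathbb{P}(V^{(n)} \leq v_n \mid M^{(n)} = m) \leq \mathbb{P}(V^{(n)} \leq v_n \mid M^{(n)} = m_n)$; integrating against the law of $M^{(n)}$ on $I_n^+$, and symmetrically on $I_n^-$, yields
\[
\mathbb{P}(V^{(n)} \leq v_n \mid M^{(n)} \in I_n^+) \;\leq\; \mathbb{P}(V^{(n)} \leq v_n \mid M^{(n)} = m_n) \;\leq\; \mathbb{P}(V^{(n)} \leq v_n \mid M^{(n)} \in I_n^-).
\]
The outer probabilities condition on positive-volume events whose rescalings do admit limits.

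Rescaling the events turns $\{V^{(n)} \leq v_n,\; M^{(n)} \in I_n^\pm\}$ into $\{a_n^{-1}(V^{(n)} - b_n) \leq v_0,\; c_n^{-1}(M^{(n)} - d_n) \in c_n^{-1}(I_n^\pm - d_n)\}$, and by hypothesis $c_n^{-1}(I_n^\pm - d_n) \to J_\pm$ where $J_+ := [\xi, \xi+\epsilon]$ and $J_- := [\xi-\epsilon, \xi]$. Choosing $\epsilon$ outside the at-most-countable bad set for which $\partial J_\pm$ or $\{v \leq v_0\} \times \partial J_\pm$ carry mass under the joint limit $(V, M)$, the Portmanteau theorem (combined with the inner/outer approximation $c_n^{-1}(I_n^\pm - d_n) \supset J_\pm^-$ and $\subset J_\pm^+$ eventually for any small inflation/deflation) gives
\[
\mathbb{P}(V^{(n)} \leq v_n \mid M^{(n)} \in I_n^\pm) \;\longrightarrow\; \frac{\mathbb{P}(V \leq v_0,\, M \in J_\pm)}{\mathbb{P}(M \in J_\pm)},
\]
with strictly positive denominators because $\xi$ lies in the interior of the support of $M$.

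Finally, letting $\epsilon \downarrow 0$, the continuous version of $m \mapsto \mathcal{L}(V \mid M = m)$ at $\xi$ together with $v_0$ being a continuity point of $\mathcal{L}(V \mid M = \xi)$ makes $m \mapsto \mathbb{P}(V \leq v_0 \mid M = m)$ continuous at $\xi$; hence each right-hand side above, an average of this function over the shrinking rectangle $J_\pm$ against the law of $M$, collapses to $\mathbb{P}(V \leq v_0 \mid M = \xi)$. The sandwich then forces the desired convergence. The main obstacle I foresee is the careful bookkeeping of continuity sets — specifically, ensuring that the boundaries of all hyperrectangles involved are null sets for both the marginal of $M$ and the joint limit, which requires restricting to a co-countable set of good $\epsilon$ and handling the $n$-dependence of $c_n^{-1}(I_n^\pm - d_n)$; this is where the interior-point assumption does its real work, by guaranteeing that the denominators stay bounded away from zero as $\epsilon \downarrow 0$ uniformly in $n$.
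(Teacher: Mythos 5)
The paper does not prove this theorem itself --- it is imported verbatim (as Theorem 2.1/2.2) from the cited references \citet{Nerman:1998fx} and \citet{Janson:2007dx}, and your sandwich argument is essentially the proof given there: use monotonicity to bound the point-conditional CDF between conditionals on one-sided shrinking rectangles, pass to the limit in $n$ via Portmanteau on continuity rectangles, then let the rectangles shrink using the continuity of $m \mapsto \mathcal{L}(V \mid M=m)$ at $\xi$ and the interior-point assumption to keep $\P{M \in J_{\pm}}>0$. The argument is correct as written (the only cosmetic slip is the closing remark about the denominators being bounded away from zero ``uniformly in $n$'' --- they need only be positive for each fixed $\epsilon$ after taking $n\to\infty$ first, which is the order of limits you actually use).
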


\begin{cor}[Corollary 2.5 of \cite{Janson:2007dx}]\label{cor:linear_op}
  We can replace the assumption above that $V^{(n)}$ is stochastically monotone with respect to $M^{(n)}$ by the assumption that $H V^{(n)}$ is stochastically monotone with respect to $M^{(n)}$ for some invertible linear operator $H$ on $\mathbb{R}^{d}$.
\end{cor}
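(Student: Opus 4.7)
The plan is to reduce the corollary to a direct application of Nerman's theorem (\Cref{thm:nermans_theorem}) to the transformed vector $\tilde V^{(n)} \coloneqq H V^{(n)}$, and then invert the transformation via the continuous mapping theorem. The point is that conditioning on $M^{(n)} = m_n$ does not interact with a deterministic bijection applied to the first coordinate, so conditional distributions of $V^{(n)}$ and of $H V^{(n)}$ are pushforwards of each other.

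Concretely, first I would verify that the joint weak convergence hypothesis transfers from $V^{(n)}$ to $\tilde V^{(n)}$. Since $a_n > 0$ is a scalar and $H$ is linear,
\begin{align}
  a_n^{-1}\bigl(H V^{(n)} - H b_n\bigr) \;=\; H \cdot a_n^{-1}\bigl(V^{(n)} - b_n\bigr),
\end{align}
so applying the continuous linear map $(x,y) \mapsto (Hx, y)$ to the assumed convergence and using the continuous mapping theorem yields
\begin{align}
  \bigl(a_n^{-1}(H V^{(n)} - H b_n),\, c_n^{-1}(M^{(n)} - d_n)\bigr) \;\xrightarrow{\mathcal{D}}\; (HV, M).
\end{align}
The stochastic monotonicity hypothesis of Nerman's theorem is satisfied for $\tilde V^{(n)}$ by assumption. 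The continuity hypothesis also transfers: if $m \mapsto \mathcal{L}(V \mid M = m)$ has a version continuous at $\xi$, then $m \mapsto \mathcal{L}(HV \mid M = m) = H_{*}\mathcal{L}(V \mid M = m)$ is also continuous at $\xi$, since the pushforward operator $H_{*}\colon \mathcal{P}(\mathbb{R}^{q}) \to \mathcal{P}(\mathbb{R}^{q})$ is continuous with respect to weak convergence whenever $H$ is continuous.

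Next, I would apply \Cref{thm:nermans_theorem} to the pair $(\tilde V^{(n)}, M^{(n)})$ with centering $\tilde b_n = H b_n$, scaling $a_n$, and the same $c_n, d_n, m_n, \xi$. Since $H$ is deterministic, conditioning commutes with it, and one has
\begin{align}
  \mathcal{L}\bigl(H V^{(n)} \,\bigm|\, M^{(n)} = m_n\bigr) \;=\; H_{*} \mathcal{L}\bigl(V^{(n)} \,\bigm|\, M^{(n)} = m_n\bigr) \;=\; \mathcal{L}(H U^{(n)}).
\end{align}
Nerman's theorem therefore delivers
\begin{align}
  a_n^{-1}\bigl(H U^{(n)} - H b_n\bigr) \;\xrightarrow{\mathcal{D}}\; \mathcal{L}(HV \mid M = \xi).
\end{align}
Finally, $H^{-1}$ exists and is continuous (being a linear map on $\mathbb{R}^{q}$), so another application of the continuous mapping theorem to $H^{-1}$ on both sides gives
\begin{align}
  a_n^{-1}\bigl(U^{(n)} - b_n\bigr) \;\xrightarrow{\mathcal{D}}\; H^{-1}_{*}\mathcal{L}(HV \mid M = \xi) \;=\; \mathcal{L}(V \mid M = \xi),
\end{align}
which is the conclusion.

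I do not anticipate a genuine obstacle here — the only substantive point is ensuring that both the weak convergence hypothesis and the continuity of the conditional law are preserved under the linear transformation $H$, and both reduce to the continuity of pushforward by a continuous bijection on $\mathbb{R}^{q}$. All remaining steps are bookkeeping with the continuous mapping theorem and with the fact that deterministic invertible transformations commute with conditioning.
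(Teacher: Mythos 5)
Your proof is correct and follows exactly the same route as the paper's (two-line) argument: apply Nerman's theorem to $(HV^{(n)}, M^{(n)})$ with centering $Hb_n$, then undo the transformation with $H^{-1}$. The extra bookkeeping you supply --- transfer of the weak convergence and of the continuity of the conditional law under the pushforward $H_{*}$, and the fact that conditioning commutes with a deterministic bijection --- is exactly what the paper leaves implicit.
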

\begin{proof}
  Apply the theorem to $(H V^{(n)}, M^{(n)})$ with $V$ and $b_n$ replaced by $HX$ and $H b_n$, respectively. The result follows by applying $H^{-1}$.
\end{proof}
The number of unbalanced triangles is not stochastically monotone with respect to the number of negative ties in each stratum, so we cannot apply this theorem directly to $V^{(n)}_{p}$. It is not hard to see, however, that the counts of triangles having at least $\alpha=1,2,3$ negative ties satisfies stochastic monotonicity. 
Let $T_\alpha^{(n)}$ denote the number of triangles in $G_p^{(n)}$ with $\alpha$ negative ties.
Then, we have the following result:
\begin{lem}\label{lem:stoch_mono}
The vector $H T = (T_{3}, T_{3} + T_{2}, T_{3} + T_{2} + T_{1})$ is stochastically increasing with respect to $M$, where $T\coloneqq(T_{1}, T_{2}, T_{3})$, $M\coloneqq(M_{0}, \ldots, M_{L})$ and the matrix $H$, given by $Hx=(x_3,x_3+x_2,x_3+x_2+x_1)$ for $x\in \mathbb{R}^{3}$, is invertible.
\end{lem}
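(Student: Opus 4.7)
The plan is to prove this in two essentially separate pieces: first dispose of the invertibility claim by a one-line determinant calculation, then establish stochastic monotonicity of $HT$ via a coupling argument.

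For invertibility, I would just write $H$ as a matrix,
\begin{align}
  H = \begin{pmatrix} 0 & 0 & 1 \\ 0 & 1 & 1 \\ 1 & 1 & 1 \end{pmatrix},
\end{align}
and observe that $\det H = -1 \neq 0$. The entire content of the lemma is in the stochastic monotonicity.

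The key observation for stochastic monotonicity is that the three coordinates of $HT$ are upper-tail counts:
\begin{align}
  (HT)_1 &= T_{3} = \#\{\text{triangles with all three edges negative}\}, \\
  (HT)_2 &= T_{3}+T_{2} = \#\{\text{triangles with at least two negative edges}\}, \\
  (HT)_3 &= T_{3}+T_{2}+T_{1} = \#\{\text{triangles with at least one negative edge}\}.
\end{align}
Each of these is a monotone increasing function of the set of negative edges of $G_p^{(n)}$: if $S \subseteq S'$ are two candidate sets of negative edges, then any triangle whose edge set contains at least $\alpha$ elements of $S$ also contains at least $\alpha$ elements of $S'$. Thus it suffices to construct, for any $\bm{m}_1 \le \bm{m}_2$, a coupling of $\mathcal{L}(X \mid M=\bm{m}_1)$ and $\mathcal{L}(X \mid M=\bm{m}_2)$ under which the negative-edge set $S^{(1)}$ of the first is a subset of the negative-edge set $S^{(2)}$ of the second.

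Building this coupling is the main (and only real) step. Because the entries of $X$ are independent with $\P{X_i=-1}=p_{\varepsilon_i}$ depending only on $\varepsilon_i$, conditioning on $M_l=m_l$ for each $l$ makes the sign vector a product, across strata $E_l$, of uniform random subsets of $E_l$ of prescribed size $m_l$. I can therefore couple stratum by stratum independently: within each stratum $E_l$, draw $S^{(2)}_l$ uniformly at random from size-$m_l^{(2)}$ subsets of $E_l$, then draw $S^{(1)}_l$ uniformly from size-$m_l^{(1)}$ subsets of $S^{(2)}_l$. A symmetry check shows $S^{(1)}_l$ is uniform on size-$m_l^{(1)}$ subsets of $E_l$, giving the correct marginal, while $S^{(1)}_l \subseteq S^{(2)}_l$ by construction; taking unions over $l$ yields the desired nested negative-edge sets.

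Combining the coupling with the monotonicity observation gives $HT(X^{(1)}) \le HT(X^{(2)})$ coordinate-wise almost surely, so for every $v \in \mathbb{R}^3$,
\begin{align}
  \P{HT \le v \mid M=\bm{m}_1} = \P{HT(X^{(1)}) \le v} \ge \P{HT(X^{(2)}) \le v} = \P{HT \le v \mid M=\bm{m}_2},
\end{align}
which is exactly the definition of stochastic increase. There is no real obstacle here: the only subtle point is recognizing that although $T$ itself is not stochastically monotone in $M$ (flipping the sign of an edge both adds to some $T_\alpha$ and subtracts from $T_{\alpha-1}$), the upper-tail transform $HT$ removes these cancellations, which is precisely why the invertible linear transformation $H$ is chosen to have this particular form.
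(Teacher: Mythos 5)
Your proposal is correct and follows essentially the same strategy as the paper: both arguments realize the two conditional laws $\mathcal{L}(HT \mid M=\bm{m}_1)$ and $\mathcal{L}(HT \mid M=\bm{m}_2)$ on a common probability space with nested negative-edge sets and then use that each coordinate of $HT$ is an upper-tail triangle count, hence monotone in the negative-edge set. The only difference is cosmetic: you build the coupling directly via nested uniform random subsets within each stratum, whereas the paper reduces to a single-edge increment and applies the same random permutation to two fixed weight vectors differing at one edge.
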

The proof is deferred to \Cref{sec:appendix_lemmas}.
Since $H$ is an invertible linear operator, and $V^{(n)}_{p} = T_{1}^{(n)} + T_{3}^{(n)}$, \Cref{cor:linear_op} and \Cref{lem:stoch_mono} together shows that it is sufficient to prove a CLT for $(T^{(n)}, M^{(n)})$.
This requires a few mild assumptions.


\begin{ass}\label{ass:max_emb}
  The embeddedness level of negative ties in $\left\{ G^{(n)} \right\}_{n = 1}^{\infty}$ is bounded from above by some $L_{-} < \infty$.
\end{ass}

Our first assumption is predominantly a technical one, as Nerman's Theorem does not apply when $L^{(n)}$, the dimension of $M^{(n)}$, is unbounded. Note that, under the current definition of $L^{(n)}$, which we recall is the largest embeddedness value in the graph $G^{(n)}$, we would require an upper bound on the embeddedness level of all ties (not just negative ties). But in fact it suffices to define $M^{(n)}$ up to the largest embeddedness level for negative ties (say $L_{-}^{(n)}$), as the permutations above $L_{-}^{(n)}$ (with no negative ties) would be degenerate.

Moreover, in practice, the embeddedness level does not grow with the size of the graph.
For instance, across the 32 village networks in our dataset (with the number of edges ($n$) ranging from 54 to 1109), the largest embeddedness for negative ties was 5, while the largest embeddedness for positive ties was 13.
In fact, the largest graph (with 1109 edges) had a maximum embeddedness for negative ties of only 2.

\begin{ass}\label{ass:emb2}
  We require $\sup_{i \in E} \varepsilon_i^2 = o(n).$
\end{ass}

\begin{ass}\label{ass:all_embed_limit}
  For $l_{1}, l_{2}, l_{3} = 0, \ldots, L_{-}$ define $E_{l_1,l_2,l_3} \coloneqq E_{l_1}\times E_{l_2}\times E_{l_3}$.
  We assume that all partial sums
  \begin{align}
    \frac{1}{n} \sum_{(i,j,k) \in E_{l_1,l_2,l_3}} \triangle_{ijk},
    \qquad
    \frac{1}{n} \sum_{i \in E_{l_1}}
    \left(\sum_{(j,k) \in E_{l_2,l_3}}
    \triangle_{ijk}\right)
    \left(\sum_{(j',k') \in E_{l_4,l_5}}
    \triangle_{ij'k'}\right),
  \end{align}
  converge to a limit as $n \to \infty$, where $\triangle_{ijk} \coloneqq \I{(i,j,k) \in \triangle}$.
  Separately, we require $p_{l} = \dfrac{m_l}{n_l}$ to also converge to a limit as $n \to \infty$.
\end{ass}
This assumption is needed to ensure that the limiting covariance terms exist.
  We require such a condition because our statistic is intimately related to the structure of the sequence $\left\{ G^{(n)} \right\}_{n=1}^{\infty}$, which is fixed. This contrasts with other random graph models where the support of the graph is the main modeling task.
  A simple consequence of \Cref{ass:all_embed_limit} is that $\dfrac{n_l}{n}$ also converges, as
\begin{align}
  \frac{1}{2l}
  \sum_{l_2, l_3 = 0}^{L_{-}} \frac{1}{n} \sum_{(i,j,k) \in E_{l,l_2,l_3}} \triangle_{ijk}
  &= \frac{2}{2nl}  \sum_{i \in E_{l}} \varepsilon_i =  \frac{1}{nl} n_l l = \frac{n_l}{n} \label{eqn:n_l}
\end{align}
shows that $\dfrac{n_l}{n}$ is a finite, linear combination of terms that converge.
Consider therefore the normalized random variables $\widetilde{T}^{(n)} \coloneqq \frac{1}{\sqrt{n}}{\left(T^{(n)} - \E{T^{(n)}} \right)}$, $\widetilde{M}^{(n)} \coloneqq \frac{1}{\sqrt{n}}{\left(M^{(n)} - \E{M^{(n)}} \right)}$, where we recall that the variables $T^{(n)} = (T^{(n)}_{1}, T^{(n)}_{2}, T^{(n)}_{3})$ relate to the independent graph model $G_{p}^{(n)}$.

\begin{prop}\label{prop:clt_radamacher}
 Under \Cref{ass:max_emb,ass:emb2,ass:all_embed_limit}, we have for $n \to \infty$,
  \begin{align}
    \left( \widetilde{T}^{(n)}, \widetilde{M}^{(n)} \right) \xrightarrow{\mathcal{D}} \left( \widetilde{T}, \widetilde{M} \right)
    \sim \mathcal{N}(0, \Sigma),
  \end{align}
  where $\Sigma$ is given in \cref{eqn:sigma3} of \Cref{sec:appendix_proof}.
\end{prop}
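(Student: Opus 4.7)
The plan is a Cram\'er--Wold reduction followed by a CLT for locally dependent random variables. Fix any $\theta = (\theta_T, \theta_M) \in \mathbb{R}^{3} \times \mathbb{R}^{L_- + 1}$ and consider the centred linear combination
\begin{equation*}
\sqrt{n}\, S_n \;=\; \sum_{\alpha=1}^{3} \theta_{T,\alpha} \sum_{(i,j,k) \in \triangle'} \bigl(Z_{\alpha,ijk} - \E{Z_{\alpha,ijk}}\bigr) \;+\; \sum_{l=0}^{L_-} \theta_{M,l} \sum_{i \in E_l} \bigl(Y_i - p_l\bigr),
\end{equation*}
where $Y_i \coloneqq \I{X_i = -1}$ and $Z_{\alpha,ijk} \coloneqq \I{\text{exactly } \alpha \text{ of } X_i, X_j, X_k \text{ equal } -1}$. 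Both inner sums are sums of bounded $\{0,1\}$-indicators of events in the independent edge signs $\{X_i\}_{i \in E^{(n)}}$ of the fixed host graph $G^{(n)}$.

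\textbf{CLT via local dependence.} Index each summand of $\sqrt{n}\, S_n$ by its underlying edge set (a singleton for $M$-terms, a triangle for $T$-terms). Two summands are independent whenever their edge sets are disjoint, so the resulting dependency graph has maximum degree $D_n = O(\sup_{i \in E^{(n)}} \varepsilon_i)$: any edge $i$ belongs to $\varepsilon_i$ triangles, and any triangle $(i,j,k)$ is coupled to $O(\varepsilon_i + \varepsilon_j + \varepsilon_k) = O(D_n)$ other summands. A standard dependency-graph CLT (for instance via Stein's method, or the classical Baldi--Rinott theorem) then delivers $S_n \xrightarrow{\mathcal{D}} \mathcal{N}(0, \theta^\top \Sigma \theta)$, provided the $O(n)$ summands are uniformly bounded (clear) and $D_n = o(\sqrt{n})$, which is exactly Assumption~\ref{ass:emb2}. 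Applying this for every $\theta$ yields the claimed multivariate CLT.

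\textbf{Limiting covariance.} It remains to identify $\Sigma$ by computing $n^{-1}\C{\cdot,\cdot}$ blockwise. The $M$--$M$ block is diagonal with entries $p_l(1-p_l)\, n_l/n$, which converge by Assumption~\ref{ass:all_embed_limit} together with the identity in \cref{eqn:n_l} and the assumed convergence of $p_l$. For the $T$--$M$ block, the covariance vanishes unless the edge and triangle share that edge, reducing the sum to a finite linear combination of the first partial sum in Assumption~\ref{ass:all_embed_limit}. For the $T$--$T$ block, pairs of disjoint triangles contribute nothing by independence, pairs sharing two edges coincide, and the surviving ``single-shared-edge'' pairs are exactly the second partial sum in Assumption~\ref{ass:all_embed_limit}. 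The explicit coefficients (polynomials in $p_l$) arise from expanding each $Z_{\alpha,ijk}$ as a multilinear polynomial in the $Y_i$'s and collapse into \cref{eqn:sigma3}.

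\textbf{Main obstacle.} The chief difficulty is the combinatorial bookkeeping in the covariance step: each triangle indicator must be expanded as a multilinear polynomial in the $Y_i$, the edge-sharing patterns between pairs of triangles (further stratified by the embeddedness levels of the three edges) must be carefully enumerated, and the resulting partial sums matched to the precise ones in Assumption~\ref{ass:all_embed_limit}. A secondary concern is verifying that the dependency-graph CLT really does apply at the modest growth rate permitted by Assumption~\ref{ass:emb2}; boundedness of the indicators together with the $\sqrt{n}$ scaling makes this tight but manageable, and is where the hypothesis $\sup_i \varepsilon_i^2 = o(n)$ is used.
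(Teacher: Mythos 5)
Your covariance identification (blockwise, via shared-edge patterns, matched to the partial sums of \Cref{ass:all_embed_limit}) is sound and agrees with what the paper computes. The gap is in the central-limit step: the dependency-graph CLT does not close under \Cref{ass:emb2}. Write $D_n=\sup_{i\in E}\varepsilon_i$. Your summands number $N_n=O(nD_n)$ (one per triangle plus one per edge, since $\abs{\triangle'}=\tfrac13\sum_i\varepsilon_i\le \tfrac{n}{3}D_n$), are uniformly bounded, the dependency graph has maximum degree $O(D_n)$, and $\V{\sqrt{n}S_n}\asymp n$. Every standard theorem of this type is controlled by a third-order quantity: the Baldi--Rinott/Rinott/Chen--Shao bounds have leading term of order $N_nD_n^2B^3/\sigma_n^3$, and the cumulant route is limited by $\abs{\kappa_3(\sqrt{n}S_n)}\lesssim N_nD_n^2B^3$. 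Either way the error is of order $(nD_n)D_n^2/n^{3/2}=D_n^3n^{-1/2}$, which forces $D_n=o(n^{1/6})$, i.e.\ $\sup_i\varepsilon_i^6=o(n)$. \Cref{ass:emb2} only grants $\sup_i\varepsilon_i^2=o(n)$, i.e.\ $D_n=o(n^{1/2})$, so the step you describe as ``tight but manageable'' is precisely where the argument fails: $D_n=o(\sqrt{n})$ is not sufficient for any of the off-the-shelf dependency-graph theorems at this scaling.

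The paper's proof is structured to avoid exactly this loss. It first performs a Hoeffding-type decomposition of each $T_\alpha$ into Rademacher chaoses of orders $1,2,3$ (\Cref{lem:decomposition}) and then applies a fourth-moment theorem for vectors of Rademacher chaoses (Theorem 1.1 of \citet{Zheng:2017tr}). The hypotheses there --- convergence of the covariance matrix, vanishing influence-type quantities such as $\sup_i n^{-1}t_{\alpha,1}^2(i)$ and $\sup_i n^{-1}\sum_{(j,k)}t_{\alpha,3}^2(i,j,k)$, and convergence of the normalized fourth moments to $3\Gamma^2$ --- involve only \emph{second} powers of the embeddedness, and are therefore verifiable under $\sup_i\varepsilon_i^2=o(n)$; no worst-case cubic degree bound ever appears. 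To salvage your route you would either have to strengthen \Cref{ass:emb2} to $\sup_i\varepsilon_i^6=o(n)$ (weakening the proposition), or replace the worst-case degree bound by moment/influence functionals of the summands --- which is essentially what the chaos decomposition accomplishes.
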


\begin{rmk}
  As $\Sigma$ is an asymptotic variance, in practice, it is enough to approximate $\Sigma$ by the covariance matrix $\Sigma^{(n)}$ of $(\widetilde{T}^{(n)}, \widetilde{M}^{(n)})$, as given in \cref{eqn:sigma2} of \Cref{sec:appendix_proof}, using \cref{eqn:cov_t1,eqn:cov_t2,eqn:cov_t3,eqn:cov_t_m,eqn:cov_m}.
\end{rmk}


The proof is deferred to \Cref{sec:appendix_proof}.
We are now ready to state and prove our main theorem.
\begin{thm}\label{thm:main}
  Grant \Cref{ass:max_emb,ass:emb2,ass:all_embed_limit}.
  Under the new null hypothesis, $H_{0}^{\pi}$,
  we have for $n \to \infty$ that the normalized count of unbalanced triangles has a limiting normal distribution:
  \begin{align}
    n^{-1/2}\left( U^{(n)}_{\pi} - \E{ V^{(n)}_p } \right) \xrightarrow{\mathcal{D}} \mathcal{N}(0, \sigma^2_{u}),
  \end{align}
  with $\sigma^2_{u} = \Sigma^{s}_{1,1} + \Sigma^{s}_{3,3} + 2 \Sigma^{s}_{1,3}$ and $\Sigma^{s} = \Sigma_{\widetilde{T},\widetilde{T}} - \Sigma_{\widetilde{T},\widetilde{M}} \Sigma_{\widetilde{M},\widetilde{M}}^{-1} \Sigma_{\widetilde{T},\widetilde{M}}^{\top}$, where $\Sigma_{\widetilde{T},\widetilde{T}}$, $\Sigma_{\widetilde{M},\widetilde{M}}$ and $\Sigma_{\widetilde{T},\widetilde{M}}$ are the covariance matrices of $\widetilde{T}, \widetilde{M}$ from \Cref{prop:clt_radamacher}.
\end{thm}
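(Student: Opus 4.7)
The plan is to combine \Cref{lem:cond_dist} with the joint CLT in \Cref{prop:clt_radamacher} via Nerman's Theorem (applied through \Cref{cor:linear_op}), and then extract the limiting variance through Gaussian conditioning. Concretely, since $V^{(n)}_p = T^{(n)}_1 + T^{(n)}_3 = \ell \cdot T^{(n)}$ with $\ell=(1,0,1)$, and $\mathcal{L}(U^{(n)}_\pi) = \mathcal{L}(V^{(n)}_p \mid M^{(n)} = \bm{m})$ by \Cref{lem:cond_dist}, the task reduces to a conditional CLT for $\ell \cdot \widetilde T^{(n)}$ given $M^{(n)} = \bm{m}$.

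First, I would set up the application of Nerman's Theorem with $V^{(n)} = T^{(n)}$, $a_n = c_n = \sqrt{n}$, $b_n = \E{T^{(n)}}$, $d_n = \E{M^{(n)}}$. \Cref{prop:clt_radamacher} delivers the unconditional joint convergence $(\widetilde T^{(n)}, \widetilde M^{(n)}) \xrightarrow{\mathcal{D}} \mathcal{N}(0, \Sigma)$. By \Cref{lem:stoch_mono}, $HT^{(n)}$ is stochastically increasing in $M^{(n)}$ for the invertible $H$ defined there, so \Cref{cor:linear_op} legitimizes the replacement of the stochastic monotonicity hypothesis. The natural conditioning sequence is $m_n = \bm{m}^{(n)}$. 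Because $p_l = m_l/n_l$ by construction, $\E{M_l^{(n)}} = n_l p_l = m_l$, and hence $c_n^{-1}(m_n - d_n) = 0$ for every $n$; the relevant interior point is $\xi = 0$.

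Second, I would identify the conditional limit. Since $(\widetilde T, \widetilde M)$ is jointly centered Gaussian with covariance $\Sigma$, provided $\Sigma_{\widetilde M,\widetilde M}$ is non-degenerate, the Gaussian regression formula supplies a version of $m \mapsto \mathcal{L}(\widetilde T \mid \widetilde M = m) = \mathcal{N}\!\left(\Sigma_{\widetilde T,\widetilde M} \Sigma_{\widetilde M,\widetilde M}^{-1} m,\; \Sigma^{s}\right)$ that is continuous at every $m$. Evaluating at $m = \xi = 0$ yields $\mathcal{L}(\widetilde T \mid \widetilde M = 0) = \mathcal{N}(0, \Sigma^{s})$, and Nerman's Theorem delivers
\begin{equation*}
  n^{-1/2}\!\left(T^{(n)} - \E{T^{(n)}}\right) \,\Big|\, \left(M^{(n)} = \bm{m}\right) \xrightarrow{\mathcal{D}} \mathcal{N}(0, \Sigma^{s}).
\end{equation*}
Contracting with $\ell$ via the continuous mapping theorem and noting that $\E{V^{(n)}_p} = \ell \cdot \E{T^{(n)}}$ gives the conditional limit $\mathcal{N}(0, \ell \Sigma^{s} \ell^{\top})$. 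A direct computation yields $\ell \Sigma^{s} \ell^{\top} = \Sigma^{s}_{1,1} + \Sigma^{s}_{3,3} + 2 \Sigma^{s}_{1,3} = \sigma_u^2$, and invoking \Cref{lem:cond_dist} to translate the conditional statement into one about $U^{(n)}_\pi$ concludes the argument.

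The main obstacle I anticipate is the careful verification of the technical hypotheses of Nerman's Theorem — specifically, that $\Sigma_{\widetilde M,\widetilde M}$ is strictly positive definite, so that $\xi = 0$ is an interior point of the support of $\widetilde M$ and the conditional law map is continuous at $0$. This should follow from the independence of the Rademacher blocks across embeddedness strata in $G^{(n)}_p$ together with \Cref{ass:all_embed_limit}, since $M^{(n)}_l$ is a sum of $n_l$ independent $\operatorname{Rad}(1-p_l)$ variables disjoint from the blocks contributing to other $M^{(n)}_{l'}$, yielding a diagonal limiting covariance with nonzero entries $p_l(1-p_l)\lim_n (n_l/n)$ whenever $p_l \in (0,1)$; strata with degenerate limits $p_l \in \{0,1\}$ would have to be removed from $M$ beforehand, which is harmless since they contribute no variability to $U^{(n)}_\pi$.
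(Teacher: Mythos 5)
Your proposal is correct and follows essentially the same route as the paper's own proof: joint normality from \Cref{prop:clt_radamacher}, stochastic monotonicity of $HT^{(n)}$ from \Cref{lem:stoch_mono} fed into Corollary 2.5 of \citet{Janson:2007dx} with the conditioning point $\bm{m}^{(n)}$ sitting exactly at the mean of $M^{(n)}$ (so $\xi=0$), Gaussian conditioning to get $\mathcal{N}(0,\Sigma^{s})$, and summing the first and third coordinates. Your closing remark about removing degenerate strata where $p_l\in\{0,1\}$ matches the remark the paper places immediately after the theorem, so nothing is missing.
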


\begin{proof}
  \Cref{prop:clt_radamacher} gives us joint asymptotic normality of $(\widetilde{T}^{(n)}, \widetilde{M}^{(n)})$. 
  By stochastic monotonicity of $(T^{(n)}, M^{(n)})$ (\Cref{lem:stoch_mono}), we can apply Corollary 2.5 of \citet{Janson:2007dx} (a variation of Nerman's Theorem) to get
  \begin{align}
    \widetilde{S}^{(n)} \coloneqq \mathcal{L}\left( \widetilde{T}^{(n)} \,\mid\, \widetilde{M}^{(n)} = 0 \right) \xrightarrow{\mathcal{D}} \mathcal{L}\left( \widetilde{T} \,\mid\, \widetilde{M} = 0 \right).
  \end{align}
  Since $(\widetilde{T}, \widetilde{M})$ has a joint normal distribution, it is well known that the conditional distribution is also normally distributed. In other words, $\widetilde{S}^{(n)} \xrightarrow{\mathcal{D}} \widetilde{S} \sim N(0, \Sigma^{s})$ and, in particular,
  \begin{align}
    n^{-1/2}
    \left( U^{(n)}_{\pi} - \E{V^{(n)}_{p}} \right) = \widetilde{S}^{(n)}_{1} + \widetilde{S}^{(n)}_{3} \xrightarrow{\mathcal{D}} \mathcal{N}(0, \sigma^2_{u}).
  \end{align}
\end{proof}


\begin{rmk}
  Note that $\widetilde{M}_l^{(n)}$ is degenerate if $\frac{m_l}{n} \to 0$.
  To ensure that we don't invert a degenerate covariance matrix, we can simply remove the degenerate $\widetilde{M}_l^{(n)}$ from $\widetilde{M}^{(n)}$. The result still holds, as stochastic monotonicity is maintained with the smaller $\widetilde{M}^{(n)}$.  
\end{rmk}

  From the form of the limiting distribution, it is clear that the asymptotic means of $U^{(n)}_{\pi}$ and $V^{(n)}_{p}$ coincide (though the variances differ).
  This suggests that one could save a lot of trouble by adopting the independent model $G_{p}$ instead of using the permutation model $G_{\pi}$, with little difference in results besides some \hl{inevitable increase} in variance.
  This is indeed true when the size of the graph is very large, but for the rest, like the graphs we collected in our dataset, the discrepancy between the two models is nontrivial.
  In particular, due to the small size of our graphs, and the small number of negative ties observed, empirically we find that the graphs generated under the independent model will often have no negative ties, rendering the task of measuring balance moot.

As a byproduct, we also obtain asymptotic normality for $U^{(n)}_{\tau}$ in the old model.
\begin{cor}\label{cor:an_uniform}
  Asymptotic normality of $U^{(n)}_{\tau}$, the statistic under \Cref{modl:old}, follows from \Cref{thm:main}, by replacing the vector $M^{(n)}$ by the sum $\sum_{l = 0}^{L_{-}} M_{l}^{(n)}$, and letting $p_{\varepsilon_i} = \frac{m}{n}$ for all $i \in E$.
\end{cor}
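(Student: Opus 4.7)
The plan is to mimic the proof of \Cref{thm:main} with the $(L_-{+}1)$-dimensional vector $M^{(n)}$ collapsed to the scalar $\bar M^{(n)} \coloneqq \sum_{l=0}^{L_-} M_l^{(n)}$, the total count of negative ties under the independent Rademacher model $G_p^{(n)}$ specialized to a constant success probability $p_{\varepsilon_i} \equiv m/n$, where $m$ is the number of negative edges in $G^{(n)}$. This is the natural ``$G(n,p)$ versus $G(n,m)$'' collapse: the uniform permutation model $G_\tau^{(n)}$ is flat over all sign patterns with exactly $m$ negatives, and the constant-$p$ Rademacher model conditional on $\{\bar M^{(n)} = m\}$ produces exactly that flat distribution. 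This yields the analog of \Cref{lem:cond_dist}, namely $\mathcal{L}(U_\tau^{(n)}) = \mathcal{L}(V_p^{(n)} \mid \bar M^{(n)} = m)$.

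Next I would check that $HT^{(n)}$ is stochastically increasing with respect to the \emph{scalar} $\bar M^{(n)}$. The coupling argument from \Cref{lem:stoch_mono} only ever flips a single positive edge to negative, and such a flip weakly increases each entry of $HT = (T_3, T_3+T_2, T_3+T_2+T_1)$ regardless of the embeddedness stratum of the flipped edge; hence the same monotonicity survives after summing $M_l$ over $l$. Joint asymptotic normality of $(\widetilde T^{(n)}, \widetilde{\bar M}^{(n)})$ then follows from \Cref{prop:clt_radamacher}, whose proof carries over verbatim to the constant-$p$ case (the CLT machinery only uses independence of the $X_i$ and the moment hypotheses in \Cref{ass:max_emb,ass:emb2,ass:all_embed_limit}, with the trivial limit $p_l \equiv m/n \to p$ for the last one), combined with the continuous mapping theorem applied to the linear functional $M \mapsto \mathbf{1}^\top M = \bar M$.

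With joint normality and scalar stochastic monotonicity in hand, \Cref{cor:linear_op} delivers
\begin{align}
  n^{-1/2}\left( T^{(n)} - \E{T^{(n)}} \right) \,\Big|\, \{\bar M^{(n)} = m\}
  \xrightarrow{\mathcal{D}}
  \mathcal{N}\!\left(0, \Sigma_{\widetilde T,\widetilde T} - \Sigma_{\widetilde T, \widetilde{\bar M}} \Sigma_{\widetilde{\bar M}, \widetilde{\bar M}}^{-1} \Sigma_{\widetilde T,\widetilde{\bar M}}^{\top}\right),
\end{align}
and reading off the $(1,1) + (3,3) + 2(1,3)$ combination of entries gives the asymptotic variance of $n^{-1/2}(U_\tau^{(n)} - \E{V_p^{(n)}}) = \widetilde S_1^{(n)} + \widetilde S_3^{(n)}$, exactly mirroring the final step of the proof of \Cref{thm:main}.

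The main obstacle, such as it is, is pure bookkeeping: verifying that the CLT in \Cref{prop:clt_radamacher} and its covariance computations descend cleanly to the degenerate case where all $p_l$ coincide, and that the hypotheses of Nerman's theorem (interior-point support and continuity of the conditional law at the conditioning value) are preserved when we pass from the vector $M$ to its scalar sum. Both hold immediately because the limiting $(\widetilde T, \widetilde{\bar M})$ is a non-degenerate multivariate Gaussian (modulo the zero-probability strata, which are excised exactly as in the remark following \Cref{thm:main}), so the conditional Gaussian density is continuous in the conditioning coordinate and $0$ lies in the interior of the support.
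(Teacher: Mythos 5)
Your proposal is correct and follows essentially the same route the paper intends: the corollary's statement is itself the proof sketch (collapse $M^{(n)}$ to the scalar total, set $p_{\varepsilon_i}=m/n$), and the remark following \Cref{thm:main} makes exactly your two key observations — that \Cref{lem:cond_dist} only needs constant within-level probabilities so conditioning on the total count recovers the uniform permutation model, and that with $p=m/n$ the conditioning event sits at the mean so Nerman's theorem applies non-degenerately. Your additional check that the single-edge-flip coupling preserves stochastic monotonicity of $HT$ with respect to the scalar $\bar M^{(n)}$ is the right verification and matches the paper's argument.
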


\begin{rmk}
  The proof of \Cref{lem:cond_dist} requires only that the probabilities for an embeddedness level are identical (i.e. $p_i = p_j \text{ if } \varepsilon_i = \varepsilon_j$). In particular, this means that one could also choose $p_{\varepsilon_i} = \frac{m}{n}$ for all $i \in E$. By doing so, one could start with the same independent graph random variable, and then we can recover both permutation random variables, depending on if we were to condition on each embeddedness level separately ($M = \left\{ m_i \right\}_{i=1}^{L_{-}}$) or on the total number of negative ties ($M = \sum_{i = 1}^{L_{-}} m_i$).

  However, if we were to use the random graph with uniform probability $\frac{m}{n}$ of being negative, then the event that we condition on, $\left\{ M^{(n)} = \bm{m}^{(n)} \right\}$, is not at the mean of the distribution (recall that we are forced to condition on the empirical counts of the embeddedness levels). Thus the statement of our theorem becomes degenerate as, in the limit, $\P{ M = \bm{m}} = 0$.
\end{rmk}

\subsection{Comparison of the two Models}
\label{sub:comparing_both}



We have argued in \Cref{sec:method} that a graph can be considered balance-free, if the sign of an edge is independent of its position in the graph, relative to its embeddedness. Among other issues, ignoring embeddedness means treating all edge labels as exchangeable which is not true for social networks. We concluded that reasonably balance-free graphs can be generated with respect to the restricted random permutation $\pi$. Building on this idea, in this section we will construct a hierarchy of generative models producing balance-free graphs and show, if these graphs are assumed as null models, that the old test with respect to the critical values $c_{\alpha,\tau}^{(n)}$ derived from \Cref{modl:old} has Type-I error converging to 1 as $n \to \infty$, while our test with critical values $c_{\alpha,\pi}^{(n)}$ from \Cref{modl:new}  has Type-I error matching the specified $\alpha$. This proves formally that the new test is more conservative than the old one. \Cref{sec:simulations} will show, on the other hand, that the new test still detects balance, if it exists. 

Arguing by the normal approximations in \Cref{thm:main} and \Cref{cor:an_uniform}, comparing the critical values essentially reduces to comparing the means and variances of Gaussians. This yields the following result.
\begin{prop}\label{lem:old_test_fails}
Grant \Cref{ass:max_emb,ass:emb2,ass:all_embed_limit} and assume that there exists a constant $c>0$ such that for large $n$
\begin{align}
    (1-p)^2 \frac{\sum_{i \in E} \varepsilon_i}{n} - \frac{ \sum_{i \in E^{-}} \varepsilon_i }{m} > cm^{1/2}\log{m}, \label{eqn:difference_means}
\end{align} 
where $E^{-}\subset E$ is the set of negative ties in $G^{(n)}$. 
Then $\P{U^{(n)}_{\pi}\leq c_{\alpha,\pi}^{(n)}}\rightarrow \alpha$, while $\P{U^{(n)}_{\pi}\leq c_{\alpha,\tau}^{(n)}} \rightarrow 1$ as $n \to \infty$, i.e. the Type-I error of the original test applied to graphs generated by $G^{(n)}_{\pi}$ converges to 1.
\end{prop}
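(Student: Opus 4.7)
The plan is to leverage the Gaussian approximations of \Cref{thm:main} and \Cref{cor:an_uniform} to express both critical values explicitly, reducing the claim to a comparison of the two null means. The first assertion is immediate: by \Cref{thm:main}, $U^{(n)}_{\pi}$ is asymptotically $\mathcal{N}(\E{V^{(n)}_{p}}, n\sigma^{2}_{u})$ under $H^{\pi}_{0}$, and since $c^{(n)}_{\alpha,\pi}$ is the $\alpha$-quantile of the null law of $U^{(n)}_{\pi}$, continuity of the limiting normal distribution gives $\P{U^{(n)}_{\pi} \leq c^{(n)}_{\alpha,\pi}} \to \alpha$.

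For the second assertion, \Cref{cor:an_uniform} yields $U^{(n)}_{\tau}$ asymptotically $\mathcal{N}(\mu^{(n)}_\tau, n\sigma^{2}_{\tau})$, where $\mu^{(n)}_\tau \coloneqq |\triangle'|(3p(1-p)^{2} + p^{3})$ is the mean under the independent Rademacher model with uniform probability $p = m/n$. Consequently $c^{(n)}_{\alpha,\tau} = \mu^{(n)}_\tau + z_\alpha\sigma_\tau\sqrt{n}(1+o(1))$, and combining with the CLT for $U^{(n)}_{\pi}$ under $H^{\pi}_{0}$ I obtain
\begin{align*}
    \P{U^{(n)}_{\pi} \leq c^{(n)}_{\alpha,\tau}} = \Phi\!\left(\frac{\mu^{(n)}_\tau - \E{V^{(n)}_{p}}}{\sigma_u\sqrt{n}} + z_\alpha\frac{\sigma_\tau}{\sigma_u}\right) + o(1),
\end{align*}
so it suffices to show the mean gap $\mu^{(n)}_\tau - \E{V^{(n)}_{p}}$ is asymptotically much larger than $\sigma_\tau\sqrt{n}$.

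The core of the argument is a lower bound on this gap. Retaining only the ``exactly one negative edge'' component in the Rademacher product gives $\mu^{(n)}_\tau \geq 3p(1-p)^{2}|\triangle'|$, while the elementary union bound $\P{X_iX_jX_k = -1} \leq p_{\varepsilon_i} + p_{\varepsilon_j} + p_{\varepsilon_k}$ yields $\E{V^{(n)}_{p}} \leq \sum_{(i,j,k)\in\triangle'}(p_{\varepsilon_i} + p_{\varepsilon_j} + p_{\varepsilon_k}) = \sum_{i \in E^{-}}\varepsilon_i$ after the standard edge-by-edge double-counting identity. Subtracting, and using $3|\triangle'| = \sum_{i\in E}\varepsilon_i$ together with $p = m/n$, gives
\begin{align*}
    \mu^{(n)}_\tau - \E{V^{(n)}_{p}} \geq (1-p)^{2}\frac{m}{n}\sum_{i \in E}\varepsilon_i - \sum_{i \in E^{-}}\varepsilon_i = m\left[(1-p)^{2}\frac{\sum_{i\in E}\varepsilon_i}{n} - \frac{\sum_{i\in E^{-}}\varepsilon_i}{m}\right],
\end{align*}
which by hypothesis \eqref{eqn:difference_means} exceeds $cm^{3/2}\log m$. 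The main obstacle is the subsequent dominance check: a covariance computation in the independent Rademacher model (on the lines used in the proof of \Cref{prop:clt_radamacher}, where the dominant contribution to $\sigma_\tau^2 n$ comes from pairs of triangles sharing an edge) must confirm $\sigma_\tau\sqrt n = o(m^{3/2}\log m)$. This is precisely the role of the $\log m$ slack in \eqref{eqn:difference_means} combined with the $\varepsilon_i^2 = o(n)$ bound from \Cref{ass:emb2}. Once this variance bound is in hand, the argument of $\Phi$ diverges to $+\infty$ and $\P{U^{(n)}_{\pi} \leq c^{(n)}_{\alpha,\tau}} \to 1$ as desired.
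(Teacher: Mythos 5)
Your proposal follows essentially the same route as the paper's proof: reduce the claim to comparing Gaussian critical values via \Cref{thm:main} and \Cref{cor:an_uniform}, then lower-bound the mean gap $\mu^{(n)}_{\tau}-\mu^{(n)}_{\pi}$ using the union bound $\P{X_iX_jX_k=-1}\leq p_{\varepsilon_i}+p_{\varepsilon_j}+p_{\varepsilon_k}$ together with the identity $\sum_{i\in E}p_{\varepsilon_i}\varepsilon_i=\sum_{i\in E^{-}}\varepsilon_i$, arriving at exactly the same bound $m\bigl[(1-p)^2\tfrac{\sum_{i\in E}\varepsilon_i}{n}-\tfrac{\sum_{i\in E^{-}}\varepsilon_i}{m}\bigr]$ controlled by \eqref{eqn:difference_means}. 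The final dominance check you flag as the remaining obstacle is handled in the paper not by a fresh covariance computation but simply by the scalings $n^{-1/2}\sigma^{(n)}_{\tau}\to\sigma_{\tau}$ and $n^{-1/2}\sigma^{(n)}_{\pi}\to\sigma_{\pi}$ already furnished by the two limit theorems, so your argument is otherwise the same.
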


The proof is deferred to \Cref{sec:appendix_lemmas}. Observe that \cref{eqn:difference_means} requires the difference in the average embeddedness values of all ties versus just negative ties to have enough separation. These two terms essentially reflect the means of the two distributions.


Now, let us generalize the model. Instead of using the stratified permutation, the same results hold if we replace $G_{\pi}^{(n)}$ with the Rademacher model $G_p^{(n)}$.
Recall that the new test derived from \Cref{modl:new} (and the respective critical value) depends only on $\bm{m}^{(n)}$ (provided the support of the graph is fixed). Now, graphs generated from $G_{\pi}^{(n)}$ all have the same value of $\bm{m}^{(n)}$ so they all share the same critical value. However, when we generalize to $G_{p}^{(n)}$, then the critical value will be a function of the $\bm{m}^{(n)}$. Let us denote the critical value by $c_{\alpha,\pi}(\bm{m}^{(n)})$. Then, 
\begin{align}
  \P{ V_{p}^{(n)} \leq c_{\alpha,\pi}(\bm{m}^{(n)})}
  &= \sum_{\bm{m}^{(n)}} \P{ V_{p}^{(n)} \leq c_{\alpha,\pi}(\bm{m}^{(n)}) \,\mid\, M^{(n)} = \bm{m}^{(n)}} \P{ M^{(n)} = \bm{m}^{(n)} }
  \\&= \sum_{\bm{m}^{(n)}} \P{ U_{\pi}^{(n)} \leq c_{\alpha,\pi}(\bm{m}^{(n)})} \P{ M^{(n)} = \bm{m}^{(n)} }  \label{eqn:crit_error}
  \\&= \sum_{\bm{m}^{(n)}} \alpha \cdot \P{ M^{(n)} = \bm{m}^{(n)} }.
\end{align}
On the other hand, considering the critical value for the old test, \cref{eqn:crit_error} would be
\begin{align}
  \P{ V_{p}^{(n)} \leq c_{\alpha, \tau}(\bm{m}^{(n)}) }
  &= \sum_{\bm{m}^{(n)}} \P{ U_{\pi}^{(n)} \leq c_{\alpha, \tau}(\bm{m}^{(n)})} \P{ M^{(n)} = \bm{m}^{(n)} }.
\end{align}
By \Cref{lem:old_test_fails}, we have that for each term, $\P{ U_{\pi}^{(n)} \leq c_{\alpha, \tau}(\bm{m}^{(n)})} \to 1$ as $n \to \infty$. Thus, since $\sum_{\bm{m}^{(n)}} \P{ M^{(n)} = \bm{m}^{(n)} } = 1$, we have that $ \P{ V_{p}^{(n)} \leq c_{\alpha, \tau}(\bm{m}^{(n)}) } \to 1$ as $n \to \infty$ as required.

Finally, we can generalize the balance-free graph model $G_p^{(n)}$ even further to consider general graphs (that is, no longer confined to a fixed support). This can be achieved by simply adding an additional step to the generative model: first randomly draw a unsigned graph $G$ (the support) from some distribution over all possible graphs on $N$ vertices, then draw from $G_p^{(n)}$. A similar argument to the one above gives the result.

\section{Simulations}
\label{sec:simulations}

In this section, we first provide empirical verification of our theoretical results showing asymptotic normality of the statistic under \Cref{modl:new}, the stratified permutation null model.
The rest of the section is then dedicated to comparing the performance of the tests under generative models of graphs without balance $(H_0)$ as well as graphs with balance $(H_1)$.
Under models of balance-free graphs,
we show that our test is non-significant, while the old test exhibits spurious significance.
This corroborates with our analysis of the Type-I error in \Cref{sub:comparing_both}.
Finally, we simulate graphs with balance, and show that our test maintains the same level of statistical power as the old test. Since our test is generally more conservative, this is the best result possible.



\subsection{Asymptotic Distribution}
We generated three progressively larger graphs from a Watts-Strogatz model \citep{Watts:1998db}, with parameters given in \Cref{tbl:exact_sw}, and then randomly assigned a fraction of these edges to be negative.
The Watts--Strogatz model, with parameters ($d, n, k, p$), has the following generative mechanism. Form a $d$-dimensional lattice with $n$ nodes per dimension. Then, connect two vertices together if the number of hops on the original lattice between them is at most $k$.
Finally, iterating over each edge, rewire each end with probability $p$.
We ran $10^4$ Monte Carlo simulations to calculate the empirical distribution of the number of unbalanced triangles under $H^{\pi}_{0}$.
These empirical distributions are shown in \Cref{fig:exact}, where we see a clear trend towards convergence to normality.
\begin{figure}[h]
  \centering
  \includegraphics[width=.8\linewidth]{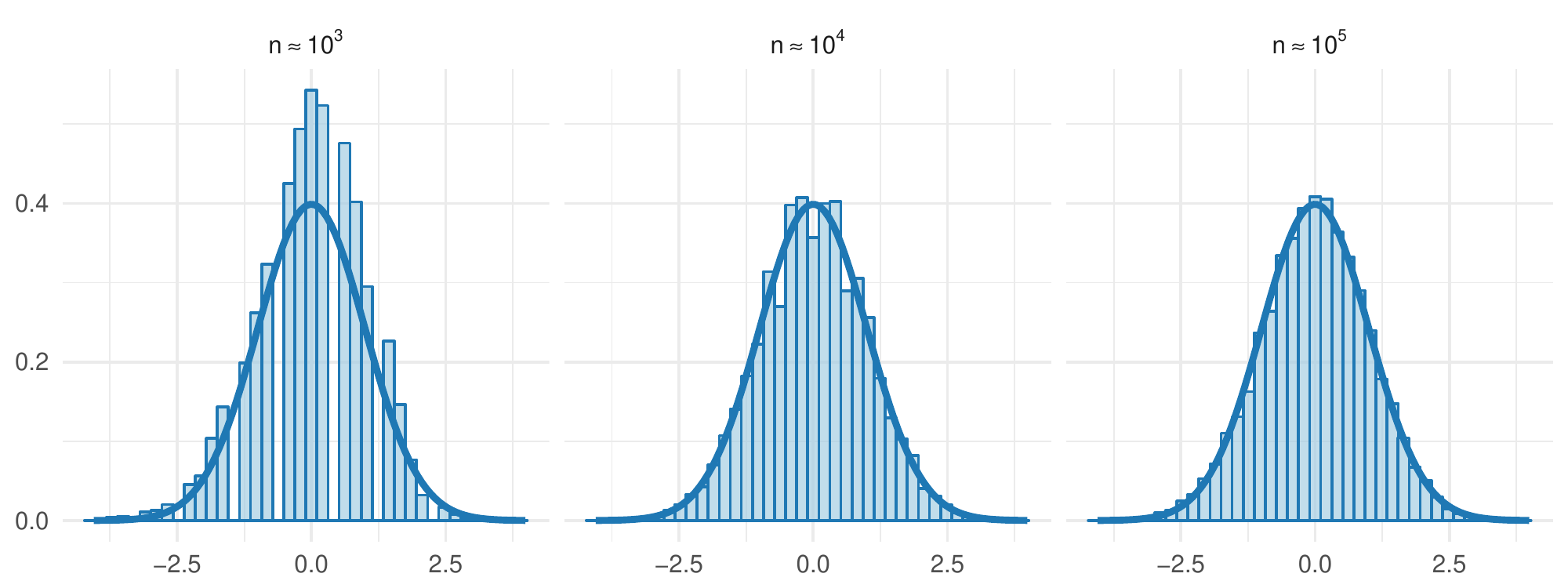}
  \caption{Histogram of the sample null distribution given the three base graphs, with the density curve of a standard Normal distribution superimposed.}
  \label{fig:exact}
\end{figure}
\begin{table}[h]
  \small
  \centering
  \begin{tabular}{lrrrrr}
    \toprule
    & \emph{d} & \emph{n} & \emph{k} & $p$ \\
    \midrule
    $n \approx 10^3$ & 3 & 4 & 3 & 0.2 \\
    $n \approx 10^4$ & 3 & 6 & 3 & 0.2 \\
    $n \approx 10^5$ & 5 & 10 & 5 & 0.1 \\
    \bottomrule
  \end{tabular}
  \caption{Parameters used in each model of the Watts-Strogatz graph.}
  \label{tbl:exact_sw}
\end{table}

\subsection{Performance Comparisons}

\subsubsection{Comparison under \texorpdfstring{$H_0$}{H0}}

To ensure a fair comparison, we chose different models of balance-free graphs from the permutation based models our tests use.
Our generating process is to generate positive and negative subgraphs independently (on the same vertex set), and then combine the graphs together, in such a way that the negative subgraph takes precedence. By this procedure, the positive and negative edge sets will be independent, which by definition produces a balance-free graph.

The negative subgraph will be simply drawn from an \ER model (turning all edges to negative ones).
For the positive subgraph model, we would like to use a graph model that is a faithful representation of real-life social networks. Here we present two choices:

\textsc{Choice 1: Small-world}.
In this simulation model, we generate the positive part of the social network from the Watts-Strogatz model \citep{Watts:1998db}.

The model parameters we used in this simulation are \emph{d} $=1$, \emph{n} $=100$, \emph{k} $=2$, and we varied the rewiring probabilities from $p=0.1, 0.2, 0.3$.
We draw $10^3$ samples each from three instances of the above generative model, and compare the $p$-values the two tests produce, the histograms of which are shown in \Cref{fig:hist_sw}.
For a rewiring probability of $p = 0.1$, we find that the old test is rejecting all $10^3$ graphs, at a significance value of 0\%. On the other hand, the new test rejects the null only 10\% of the time.
As we increase $p$, the graph becomes more like an \ER graph, and the two tests tend towards uniformity, though from opposite directions. Crucially, in all three instances of $p$, our new test is conservative about rejecting the null, while the old test is not.
\begin{figure}[h]
  \centering
  \includegraphics[width=.8\linewidth]{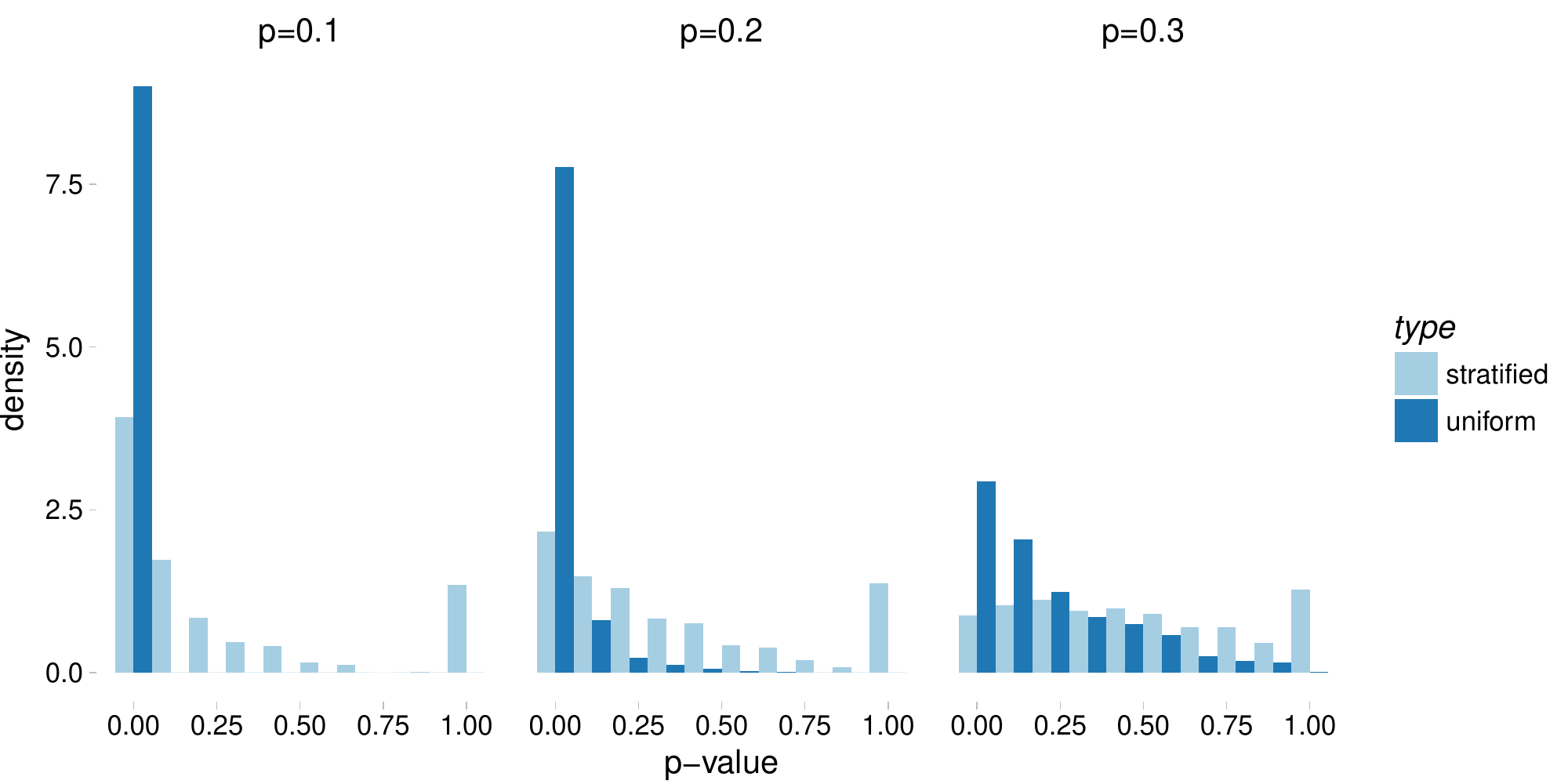}
  \caption{Histogram of the $p$-value for differing parameters of the Watts-Strogatz model}
  \label{fig:hist_sw}
\end{figure}

\textsc{Choice 2: Real Data}.
A simple way to ensure that the positive subgraph possesses the features we see in real data is
to just use real data.
We simply removed the negative edges from the social networks we collected, leaving a positive subgraph.
The negative \ER graph is then added.
Selecting three representative villages, we draw $10^3$ samples each from the three resulting models and compare the $p$-values we get from the two tests. The histograms are shown in \Cref{fig:hist_village}.
We see a similar story across the villages, with the old test rejecting very often, while the new test almost never rejects. \todo{maybe some conclusion here?}

\begin{figure}[h]
  \centering
  \includegraphics[width=.8\linewidth]{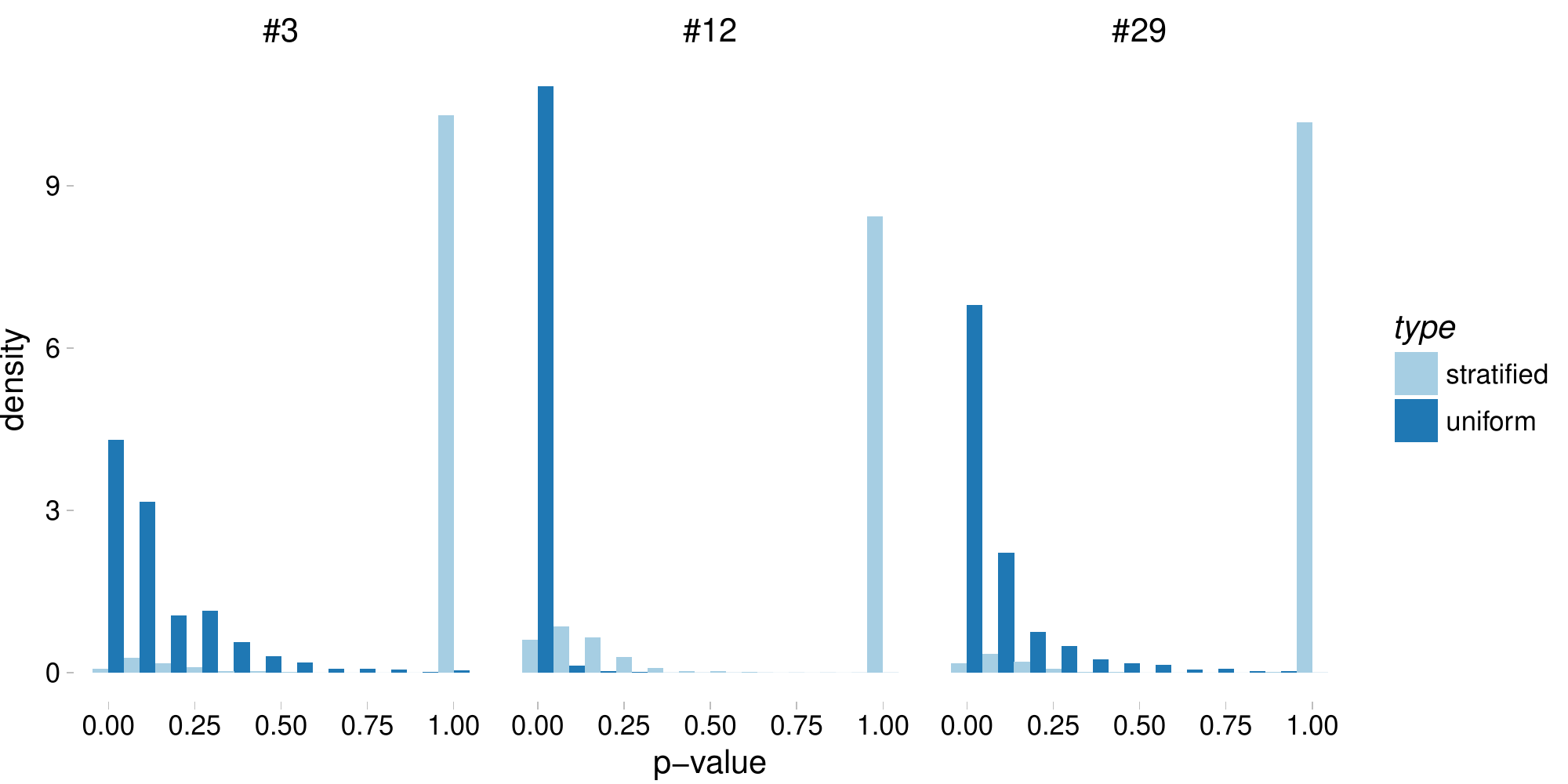}
  \caption{Histogram of the $p$-values for a sample of the village networks (\#2, \#12, \#29).}
  \label{fig:hist_village}
\end{figure}

\subsubsection{Comparison under \texorpdfstring{$H_1$}{H1}}


In order to derive a model of a graph with balance,
it will be informative to revisit the original definition of balance in \citep{Heider:1946vz}. There,
a graph is balanced if and only if it can be decomposed into two ``communities'' such that positives ties are within communities and negative ties are between.
Define a signed stochastic blockmodel as the combination of two stochastic blockmodels (SBM) over the same community structure, one for each sign\footnote{Clashes between two edges are deemed as void.}.
For notational convenience, we shall only consider the symmetric 2-community model. Let the parameters of the two SBMs be given by
\begin{align}
  B^{+} = 
  \begin{bmatrix}
    p^{+} & q^{+} \\
    q^{+} & p^{+}
  \end{bmatrix} &&
  B^{-} =
  \begin{bmatrix}
    p^{-} & q^{-} \\
    q^{-} & p^{-}
  \end{bmatrix} &
\end{align}
Then, balance corresponds to $q^{+} = p^{-} = 0$. On the other hand, the graph is balance-free when $p^{+} = q^{+}$ and $p^{-} = q^{-}$. A natural solution to modeling a graph with balance is therefore one where the parameteres are somewhere between the two degenerate solutions.

We ran the model with three different sets of parameters (see \Cref{tbl:sim_sbm_params}), and the histograms of $p$-values are shown in \Cref{fig:hist_sbm}. The distribution of $p$-values in the presence of balance are the same across the two methods, which demonstrates that our method doesn't lose out on statistical power. Compared to Model 1 and 2, the level of balance in Model 3 is much lower, and accordingly, the tests are more uncertain, leading to a more uniform distribution.

\begin{table}[h]
  \small
  \centering
  \begin{tabular}{lrrrrr}
    \toprule
    & $n$ & $p^{+}$ & $q^{+}$ & $p^{-}$ & $q^{-}$ \\
    \midrule
    Model 1 & 100 & 0.4 & 0.1 & 0.03 & 0.1 \\
    Model 2 & 50 & 0.3 & 0 & 0 & 0.3 \\
    Model 3 & 50 & 0.3 & 0.2 & 0.2 & 0.3 \\
    \bottomrule
  \end{tabular}
  \caption{Parameters used in each model of the signed SBM.}
  \label{tbl:sim_sbm_params}
\end{table}

\begin{figure}[h]
  \centering
  \includegraphics[width=.8\linewidth]{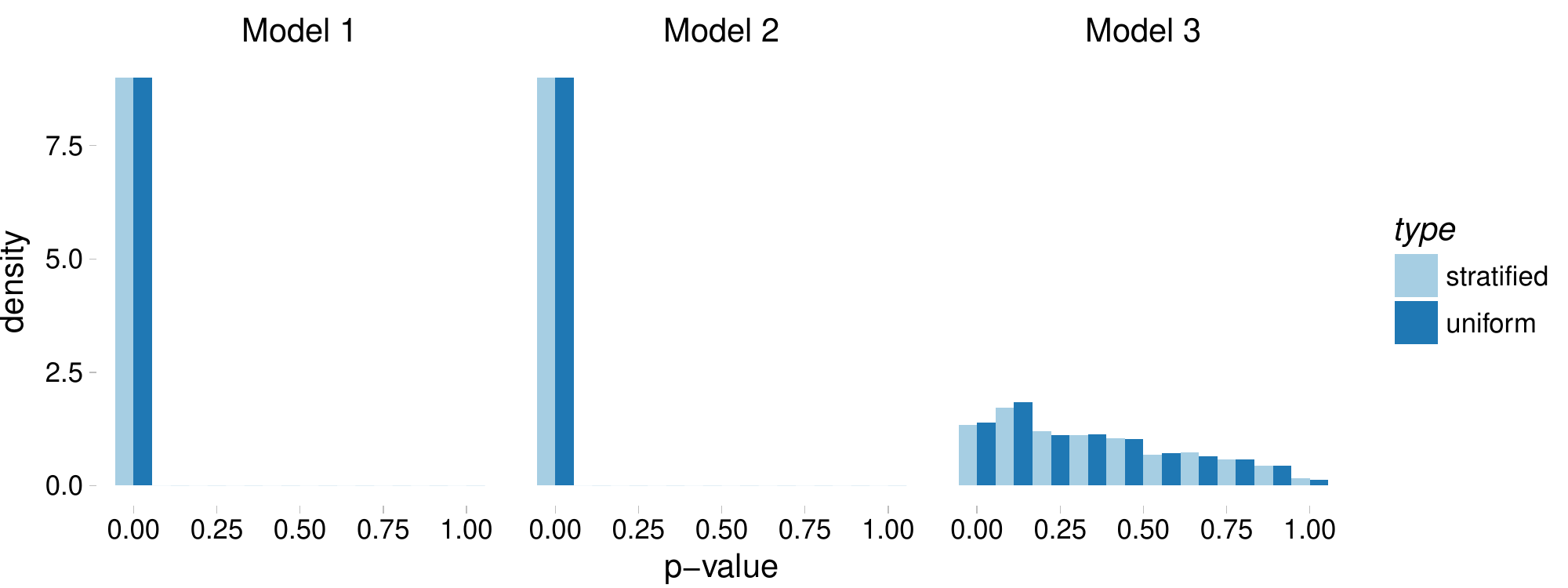}
  \caption{Histogram of the $p$-values from testing for balance in the signed SBM.}
  \label{fig:hist_sbm}
\end{figure}


\section{Case Study: Villages in Rural Honduras}
\label{sec:data}


Signed networks were first studied in the context of international relations: \citet{harary1961structural,moore1979structural} studied several international conflicts, from the Middle Eastern Suez Crisis of 1956 to the Indo-Pakistani War of 1971, while \citet{Antal:2006cia} studied the evolution of alliances during World War I.
Similarly, though at a scale considerably smaller than nation-states, \citet{hage1973graph,Hage1984} studied the New Guinea tribe warfare relations.
The nodes in these signed graphs are all collective entities, as opposed to individuals, and the ties themselves are primarily focused on warfare, rather than simple social engagements.

The actual sociocentric mapping of negative ties in parallel with positive ties in social networks is uncommon \citep{Everett:2014cb}.
\citet{rawlings2017structural} examined antagonistic ties in 129 people in a sample of 31 urban communes in the USA from the 1970's, and the classic \citet{sampson1969crisis} study of 18 novitiate monks collected information about members of the group who were disliked. Studies have also mapped helpful and adversarial relationships in small groups in classrooms \citep{huitsing2012bullying,mouttapa2004social} or workplaces \citep{xia2009exploring,Labianca:2006hk}. Other work has examined the networks formed by wild mammals \citep{Ilany:2013ig,lea2010heritable}.


A more recent source of signed networks are those derived from online websites: \citet{Kunegis:2009ks} considered the endorsement graph on the web forum Slashdot; \citet{Guha:2004hs} analyzed the trust/distrust network on the online review website Epinions; and \citet{burke2008mopping} looked at the public voting records for Wikipedia admin candidates.
Online networks have the advantage of scale (the above networks are on the order of $10^5$ nodes), but their artifically constructed notions of valence make them much less generalizable to real world settings.



\subsection{Rural Social Networks Study}


In the summer of 2010, the Rural Social Networks Study (RSNS) collected data on the social networks of around 5000 respondents spread across 32 rural villages in the La Union, Lempira region of Honduras \citep{kim2015social}.
The villages are geographically close to one another but there is little between-village communication.
The study gave a small survey to about 87 percent of the respondents in each village before playing a set of economic games.
The survey included a small number of demographic controls and concentrated on ``name generators'' to collect data on the social networks of each village.

For the name generators, the RSNS used a photographic census of all residents, coupled with bespoke software (a much revised version of which, known as \texttt{Trellis}, is available online at \url{http://trellis.yale.edu/}).  This software program was created to use photographs for the identification of ``alters'' to increase accuracy and efficiency of collecting social network data in the field. The photos help solve the name similarity problem. For instance, in one village, there were 16 Maria Hernandezs.



The name generators primarily focused on receiving strong affective relationships of each respondent: kinship, best friends, and matrimony. But the study also asked a question about ``general dislike''. In all, about 10 percent responded with alters to the negative affective name generator.
One of the weaknesses of self-reported antagonistic ties is that people display a reticence to speak negatively of other people in their communities to strangers.
We assume that the actual levels of animosity in these communities are higher than we are able to report because of this social desirability.
Name generators, by their very nature, produce data that is directional. We shall work with the symmetrized undirected version of this dataset\footnote{This dataset is hosted at our Lab's portal (\url{http://humannaturelab.net/}).}.


\subsection{The Behaviour of Negative Ties}

We begin by considering the negative ties as a separate entity, and compare their behavior to positive ties.
\Cref{fig:villages} shows representative village subgraphs of each type.
Clearly there is no mistaking the two.
The positive subgraphs are as expected, conforming to our established ideas of social networks.
Meanwhile, the negative subgraphs are extremely sparse, with most ties being isolated, and almost all the components are trees (i.e. no cycles).
There are instances of cycles, but they are exceedingly rare.
In fact, as hinted earlier, the negative subgraphs bear a striking resemblance instead to sparse random graphs, with their locally tree-like structure.


\begin{figure}[h]
  \centering
  \begin{subfigure}[b]{.49\textwidth}
    \includegraphics[width=.9\textwidth]{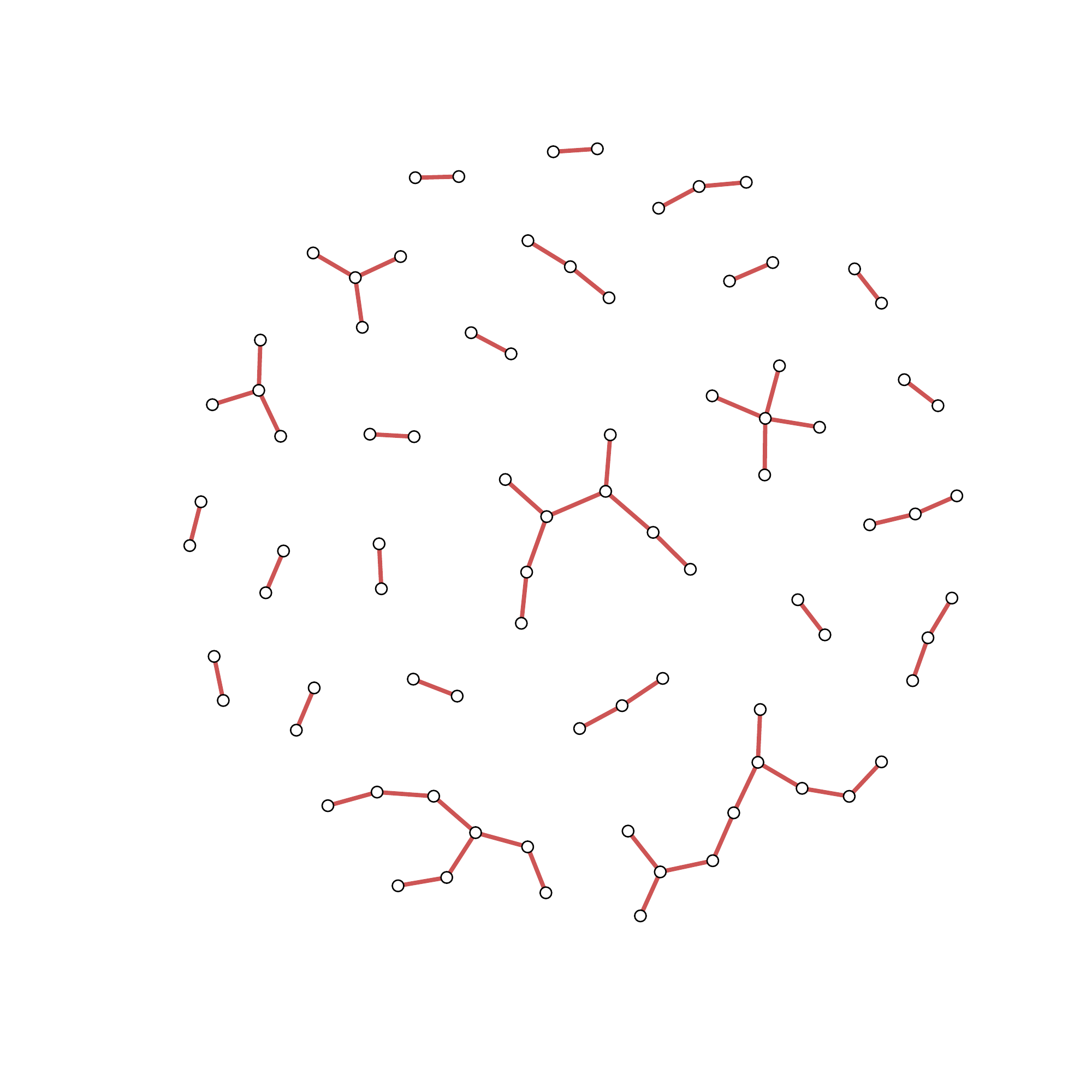}
    \vspace*{-2em}
    \subcaption{Village \# 11: Negative}
    \label{subfig:n11}
  \end{subfigure}
  \begin{subfigure}[b]{.49\textwidth}
    \includegraphics[width=.9\textwidth]{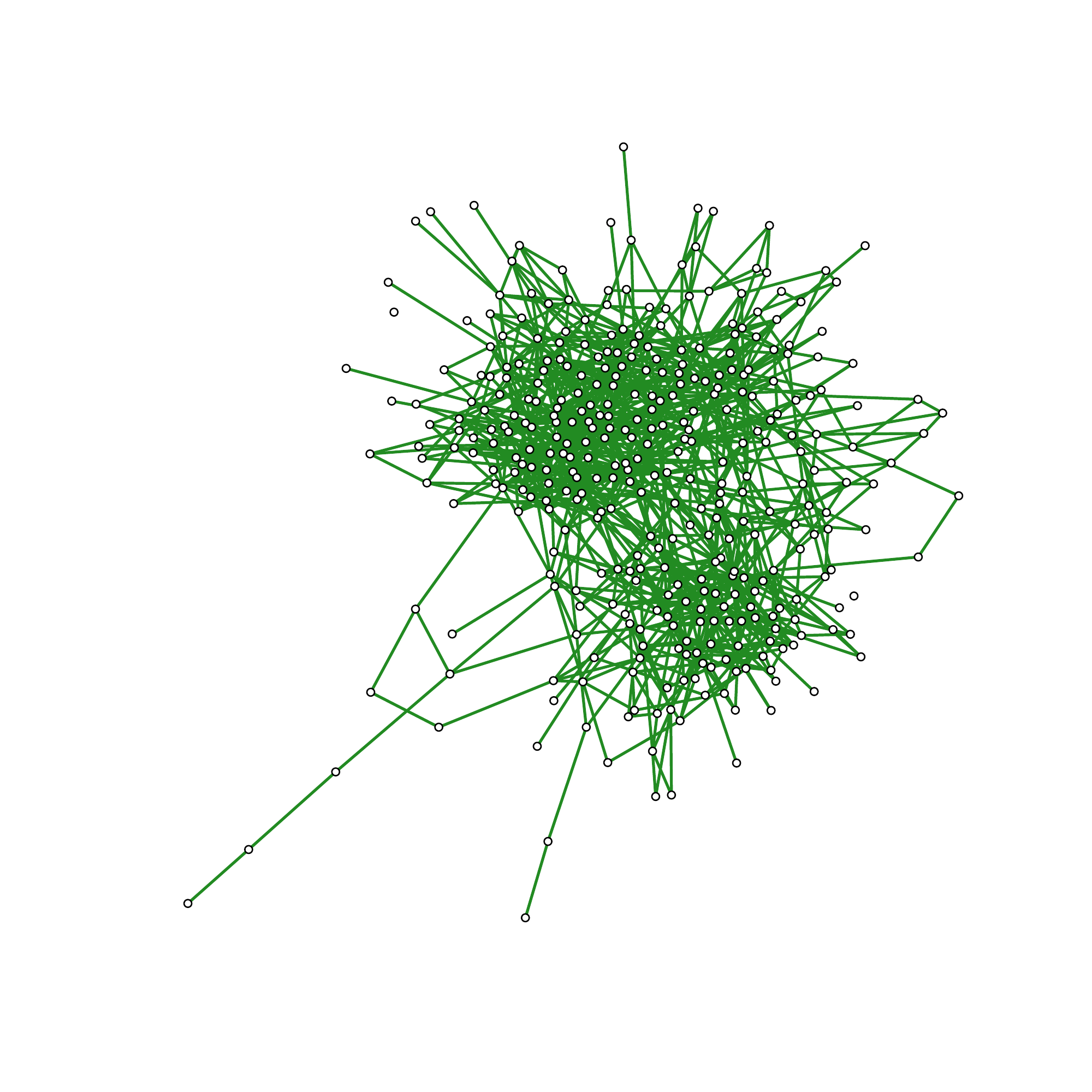}
    \vspace*{-2em}
    \subcaption{Village \# 11: Positive}
    \label{subfig:p11}
  \end{subfigure}
  \begin{subfigure}[b]{.49\textwidth}
    \includegraphics[width=.9\textwidth]{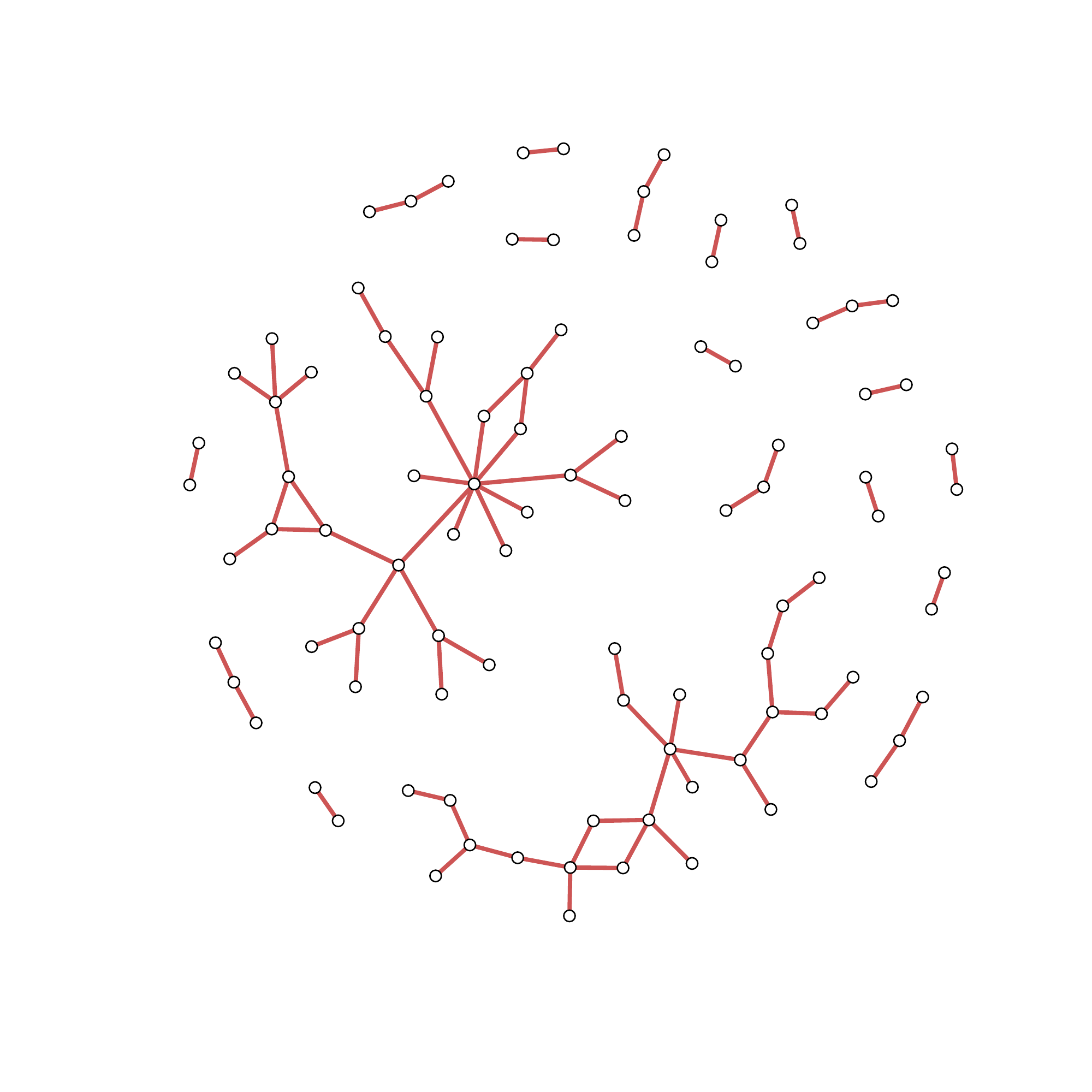}
    \vspace*{-2em}
    \subcaption{Village \# 26: Negative}
    \label{subfig:n26}
  \end{subfigure}
  \begin{subfigure}[b]{.49\textwidth}
    \includegraphics[width=.9\textwidth]{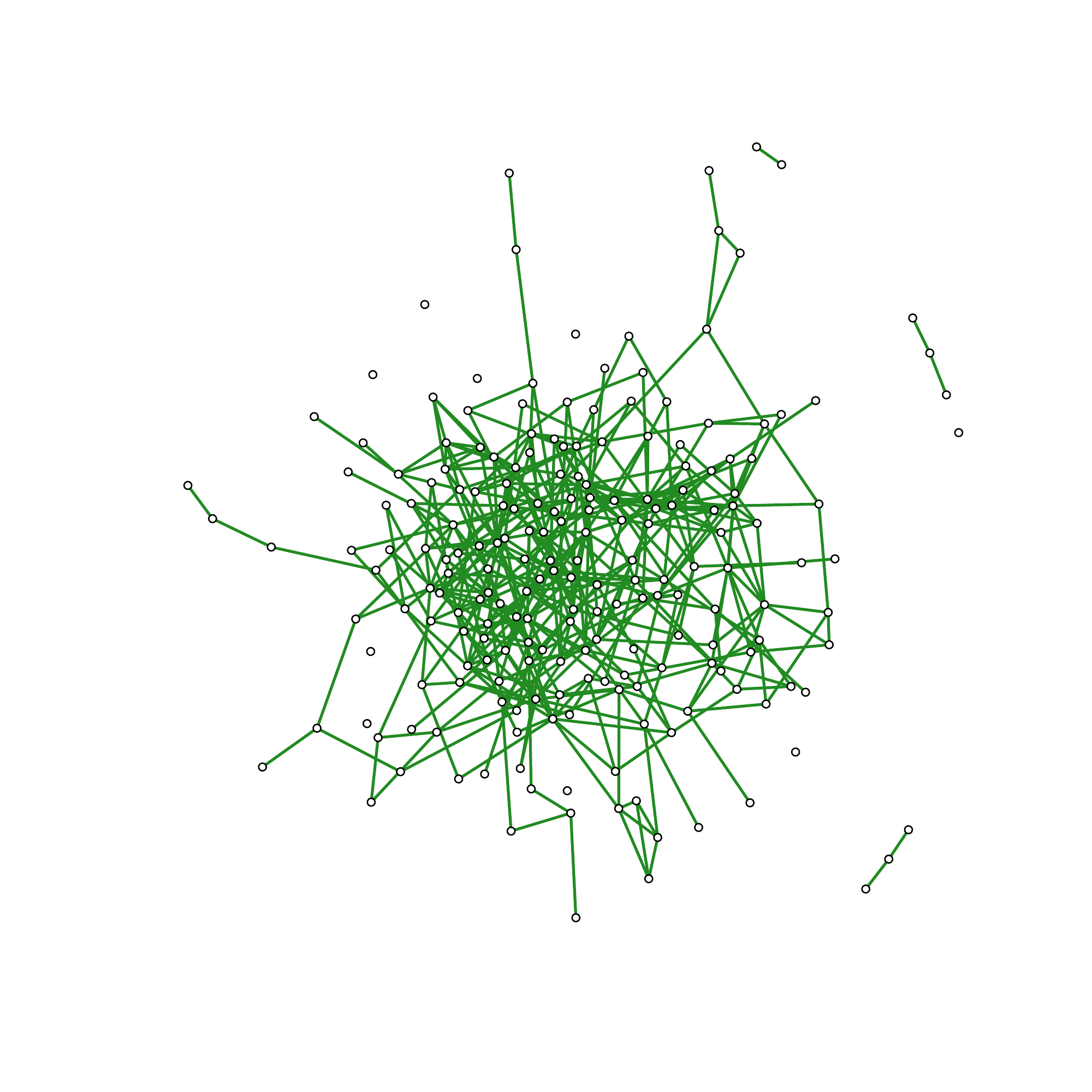}
    \vspace*{-2em}
    \subcaption{Village \# 26: Positive}
    \label{subfig:p26}
  \end{subfigure}
  \caption{
  Representitive examples of negative and positive subgraphs. In \subref{subfig:p11}, there is evidence of clustering into two communities. On the other hand, the negative subgraphs contain many components of very long trees. The largest component in \subref{subfig:n26}, for instance, would never be mistaken for a positive social network.
  }
  \label{fig:villages}
\end{figure}

More concretely, we calculated various graph statistics aggregated over the 32 villages, which we show in the first two rows of \Cref{tbl:summary_statistics}.
We see that negative ties are much less common than positive ties, exhibit very little transitivity, do not form long connected paths, nor do they coagulate into a giant component -- corroborating our observational conclusions.
In other words, they lack the common features found in positive social networks.

\begin{table}[h]
  \centering
  \footnotesize
  \begin{threeparttable}
  \begin{tabular}{$l^r^r^r^r^r^r}
    \toprule
    Type & \# Edges & Graph Density & Transitivity & Mean Path Length & \# Components \\ 
    \midrule
    Pos. & 331.9 & 0.056 & 0.26 & 3.59 & 1.66 \\ 
    Neg. & 19.7 & 0.002 & 0.01 & 1.87 & 8.44 \\ 
    \rowstyle{\itshape} Rnd. & - & - & 0.04 & 1.32 & 13.59 \\ 
    \bottomrule
  \end{tabular}
  \begin{tablenotes}
  \footnotesize
  \item[] \emph{Graph Density} is the ratio of edges to the number of possible edges
  \item[] \emph{\# Components} is the number of non-denegerate components, ignoring isolated vertices
  \end{tablenotes}
  \caption{Average values of select graph statistics across the two subgraphs (positive and negative) of the 32 village networks, as well as an \ER graph with the same number of edges and vertices as the negative subgraph.
  }
  \label{tbl:summary_statistics}
  \end{threeparttable}
\end{table}



To facilitate the more appropriate comparison with sparse random graphs, we generated, for each negative subgraph, an associated \ER graph under the $G(n,m)$ model with the same $n,m$ as the village graph (where $n$ is the number of nodes, and $m$ is the number of negative ties). The same graph statistics were calculated on this random graph sequence (shown in the third row (\emph{Rnd.}) of \Cref{tbl:summary_statistics}).
As expected, these numbers are very similar to that of the negative subgraphs.
Of particular note is that the transitivity in the random graphs is higher than that from the negative subgraph, suggesting that negative ties might be repelled from forming triangles.

\subsection{Interdependence between Negative and Positive Ties}

So far, we have considered negative ties in isolation, as an entity separate and independent of their positive counterparts.
However, such independent analysis can only provide a partial picture, as it ignores the positive landscape in which the negative ties are embedded. The same can be said for positive ties.
It is therefore crucial that we understand the interactions between positive and negative ties, as to start building the foundations for a simultaneous model of positive and negative relationships on a network.



Before we investigate balance, let us first test for structural differences between positive and negative ties in the networks. To that end, we shall test for a difference in the average embeddedness levels between the two ties, under the original null model, which we will refer to as the \emph{structural test}. 
We find that 17 graphs are significant at the 5\% level, out of a possible 28 graphs (4 graphs have no negative ties).
This reinforces our claim that negative and positive ties are structurally different.
More importantly, all 17 graphs found significant in structural differences were also significant for the old test.
If we look at a contingency table between these two tests (\Cref{subtbl:structvstan}), it is clear from the concentration of mass on the diagonal that the old test is essentially a test for structural differences, in disguise.
Thus, if we simply used the old test on this data, the conclusion, erroneously drawn, would be that balance is highly significant.



\begin{table}[h]
  \small
  \centering
  \begin{subtable}{.49\linewidth}
    \centering
    \begin{tabular}{r|cccc}
      \diagbox{Struct.}{Old} & Not Significant & Significant \\
      \midrule
      Not Significant & 9 & 2 \\
      Significant & 0 & 17 \\
    \end{tabular}
    \caption{Structural vs. Old Test}
    \label{subtbl:structvstan}
  \end{subtable}
  \begin{subtable}{.49\linewidth}
    \centering
    \begin{tabular}{r|cccc}
      \diagbox{New}{Old} & Not Significant & Significant \\
      \midrule
      Not Significant & 9 & 9 \\
      Significant & 0 & 10 \\
    \end{tabular}
    \caption{New Test vs. Old Test}
    \label{subtbl:stratvsstan}
  \end{subtable}
  \caption{Contigency table between the various tests.}
  \label{tbl:cont_table_sig}
\end{table}

Applying our new test to the dataset, we find that at the 5\% level, instead of the 19 graphs found significant from the old test, only half of them are actually significant.
The exact significance levels are shown in \Cref{tbl:individual_sig}.
While there is sufficient evidence to reject the null hypothesis that these social networks do not follow balance, it is clear that balance is not a ubiquitous force. In fact, with only a minority of villages exhibiting balance, the natural question then arises of what makes some social networks more receptive to balancing forces than others.

\begin{table}
  \centering
  \begin{threeparttable}
  \small
  \begin{tabular}{$l^l^l^l^l^l^l}
    \toprule
    ID & \multicolumn{2}{l}{Old} & \multicolumn{2}{l}{Structural} & \multicolumn{2}{l}{New} \\ 
    \midrule
  1 & 0.15 & \scriptsize{ } & 0.14 & \scriptsize{ } & 1.00 & \scriptsize{ } \\ 
    2 & 0.22 & \scriptsize{ } & 0.37 & \scriptsize{ } & 0.31 & \scriptsize{ } \\ 
    4 & 0.19 & \scriptsize{ } & 0.19 & \scriptsize{ } & 1.00 & \scriptsize{ } \\ 
    5 & 0.00 & \scriptsize{$\ast\ast$} & 0.02 & \scriptsize{$\ast$} & 0.01 & \scriptsize{$\ast$} \\ 
    \rowstyle{\bfseries} 6 & 0.00 & \scriptsize{$\ast{\ast}\ast$} & 0.00 & \scriptsize{$\ast{\ast}\ast$} & 0.07 & \scriptsize{.} \\ 
    7 & 0.24 & \scriptsize{ } & 0.22 & \scriptsize{ } & 1.00 & \scriptsize{ } \\ 
    \rowstyle{\bfseries} 8 & 0.01 & \scriptsize{$\ast$} & 0.01 & \scriptsize{$\ast$} & 1.00 & \scriptsize{ } \\ 
    9 & 0.00 & \scriptsize{$\ast{\ast}\ast$} & 0.06 & \scriptsize{.} & 0.00 & \scriptsize{$\ast{\ast}\ast$} \\ 
    10 & 0.00 & \scriptsize{$\ast{\ast}\ast$} & 0.01 & \scriptsize{$\ast\ast$} & 0.00 & \scriptsize{$\ast{\ast}\ast$} \\ 
    \rowstyle{\bfseries} 11 & 0.00 & \scriptsize{$\ast{\ast}\ast$} & 0.00 & \scriptsize{$\ast{\ast}\ast$} & 0.36 & \scriptsize{ } \\ 
    12 & 0.00 & \scriptsize{$\ast{\ast}\ast$} & 0.01 & \scriptsize{$\ast\ast$} & 0.02 & \scriptsize{$\ast$} \\ 
    13 & 0.00 & \scriptsize{$\ast{\ast}\ast$} & 0.00 & \scriptsize{$\ast{\ast}\ast$} & 0.00 & \scriptsize{$\ast\ast$} \\ 
    14 & 0.10 & \scriptsize{ } & 0.25 & \scriptsize{ } & 0.14 & \scriptsize{ } \\ 
    15 & 0.24 & \scriptsize{ } & 0.50 & \scriptsize{ } & 0.22 & \scriptsize{ } \\ 
    \rowstyle{\bfseries} 16 & 0.04 & \scriptsize{$\ast$} & 0.04 & \scriptsize{$\ast$} & 1.00 & \scriptsize{ } \\ 
    19 & 0.07 & \scriptsize{.} & 0.38 & \scriptsize{ } & 0.06 & \scriptsize{.} \\ 
    20 & 0.52 & \scriptsize{ } & 0.51 & \scriptsize{ } & 1.00 & \scriptsize{ } \\ 
    21 & 0.00 & \scriptsize{$\ast{\ast}\ast$} & 0.03 & \scriptsize{$\ast$} & 0.00 & \scriptsize{$\ast{\ast}\ast$} \\ 
    \rowstyle{\bfseries} 22 & 0.00 & \scriptsize{$\ast{\ast}\ast$} & 0.00 & \scriptsize{$\ast{\ast}\ast$} & 0.39 & \scriptsize{ } \\ 
    23 & 0.00 & \scriptsize{$\ast{\ast}\ast$} & 0.00 & \scriptsize{$\ast{\ast}\ast$} & 0.01 & \scriptsize{$\ast$} \\ 
    \rowstyle{\bfseries} 24 & 0.00 & \scriptsize{$\ast{\ast}\ast$} & 0.00 & \scriptsize{$\ast{\ast}\ast$} & 0.06 & \scriptsize{.} \\ 
    \rowstyle{\bfseries} 25 & 0.01 & \scriptsize{$\ast\ast$} & 0.01 & \scriptsize{$\ast\ast$} & 1.00 & \scriptsize{ } \\ 
    26 & 0.00 & \scriptsize{$\ast{\ast}\ast$} & 0.00 & \scriptsize{$\ast{\ast}\ast$} & 0.00 & \scriptsize{$\ast\ast$} \\ 
    \rowstyle{\bfseries} 27 & 0.00 & \scriptsize{$\ast\ast$} & 0.00 & \scriptsize{$\ast\ast$} & 1.00 & \scriptsize{ } \\ 
    28 & 0.29 & \scriptsize{ } & 0.27 & \scriptsize{ } & 1.00 & \scriptsize{ } \\ 
    29 & 0.00 & \scriptsize{$\ast{\ast}\ast$} & 0.06 & \scriptsize{.} & 0.00 & \scriptsize{$\ast\ast$} \\ 
    30 & 0.00 & \scriptsize{$\ast{\ast}\ast$} & 0.01 & \scriptsize{$\ast\ast$} & 0.00 & \scriptsize{$\ast{\ast}\ast$} \\ 
    \rowstyle{\bfseries} 31 & 0.01 & \scriptsize{$\ast\ast$} & 0.04 & \scriptsize{$\ast$} & 0.07 & \scriptsize{.} \\ 
     \bottomrule
  \end{tabular}
  \begin{tablenotes}
  \footnotesize
  \item[] .~---~$0.05 \leq p < 0.1$, *~---~$0.01 \leq p < 0.05$, **~---~$0.005 \leq p < 0.01$,
  \item[] ***~---~$0 \leq p < 0.005$
  \end{tablenotes}
  \caption{%
  Table showing the results of three statistical tests on each village network.
  The nine rows in bold are those where the old test is significant while the new one is not.}
  \label{tbl:individual_sig}
  \end{threeparttable}
\end{table}



\section{Conclusion and Future Work}



Models of social network structure and function would benefit from not ignoring negative ties. 
Our hope is that, \hl{in much the same way that moving from $\mathbb{R}$ to $\mathbb{C}$ produces new insights and simplifications}, the extension to signed graphs can provide simple mechanisms to hitherto complicated models.

Balance theory is potentially one such mechanism.
In this paper, we introduced a new test for balance that, unlike the standard test currently being used, takes into consideration the different behaviors of negative and positive ties.
We showed through theoretical analysis and simulations that our test outperforms the original.
We applied our test to a novel dataset of village social networks in rural Honduras, and found that balance theory holds true in a minority of the villages, though the villages varied in their extent of balance.






There is much more to be done in this nascent field of understanding and modeling signed social networks.
In addition to the task of understanding the causes of the distribution of balancing effects across social networks, we also have the following future directions:
\begin{itemize}
  \item 
  The count of unbalanced triangles is by its very nature a global measure of balance. For reasonable sized graphs such as our village dataset, a global measure is appropriate.
  However, once we move towards much larger social networks -- especially those on the order of, say, Facebook's social graph -- the assumption that there is one measure of balance across the entire graph is no longer tenable. This regime requires a completely new framework for measuring and interpreting balance, the first step of which is to define a new local definition of balance.
  \item The definition of balance as a function of the product of signs extends effortlessly to higher order cycles. 
  Here, we have restricted our attention to triads, as we think the first order is the most important (and most plausible).
  There has been some work considering higher order cycles \citep{Estrada:2014vb,iosifidis2018}, but little justification for doing so.
  Similarly, while balance theory is the most natural means of relating negative and positive ties, perhaps there are other types of relations possible.
  \item Stratification is only one way to account for the differences in negative and positive ties when performing a statistical test.
  An alternative method would be to incorporate existing network models of tie formation, such as exponential random graph models.
  We preferred to adopt a nonparametric approach here to minimize the number of model assumptions, but we leave this potential extension to future work.
  \item In this manuscript, we have considered \emph{undirected} signed graphs.
  One could extend our results to directed graphs, but this leads to a combinatorial explosion in the number of different triangle arrangements, and so one loses the simple interpretation of the 4 different states.
  Another extension is to consider weighted signed edges, not just binary signed edges. There, the question is how best to generalize the notion of balance in this setting -- one possibility is to take the product of the edges as a measure of the balance of that triangle.
\end{itemize}

\appendix

\section{Proof of \texorpdfstring{\Cref{prop:clt_radamacher}}{Proposition \ref{prop:clt_radamacher}}}
\label{sec:appendix_proof}

Our goal in this section is to prove that, under \Cref{modl:new}, the joint distribution of $(T^{(n)}, M^{(n)})$ is asymptotically normally distributed. For simplicity, we drop the index $n$ most of the time.

Let us first sketch the main ideas of the proof. We begin by decomposing the centered versions of the $T_{\alpha}$'s in the spirit of a Hoeffding decomposition (\Cref{lem:decomposition}) into terms depending on one, two or three different $X_i$. This decomposition enables us to place our problem into the framework of functionals of Rademacher random variables. In \Cref{prop:zheng_clt} we show that the fourth moments of the terms from the decomposition converge, along with some additional upper bounds. This turns out to be enough to conclude that the terms live in a fixed \emph{Rademacher chaos} and so we obtain asymptotic normality via the Malliavin-Stein approach, in the form of \citep{Zheng:2017tr}. A simple recomposition of the decomposed random variables completes the proof of \Cref{prop:clt_radamacher}.


Let us first establish some additional notation. Recall that we are working under \Cref{modl:new}, so that the edge signs are independent and given by $X_i \sim \operatorname{Rad}\left( 1-p_{\varepsilon_i} \right)$. Then,
\begin{align}
  r_i \coloneqq \E{ X_i } = 1 - 2 p_{\varepsilon_i}, \qquad s_i^2 \coloneqq \V{ X_i } = 4 p_{\varepsilon_i} (1 - p_{\varepsilon_{i}}),
\end{align}
and we can define the normalized $X_i$ by
$\widetilde{X}_i \coloneqq \frac{X_i - r_i}{s_i}.$
On multiple occasions in the forthcoming proofs, it will be more intuitive to work with the Bernoulli random variable signifying the presence of a negative edge at position $i$, $Y_i \coloneqq \frac{1 - X_i}{2} \sim \operatorname{Bern}(p_{\varepsilon_i})$, rather than the Rademacher random variable $X_i$.
Finally, define $\triangle_{ijk} \coloneqq \I{(i,j,k) \in \triangle}$. Then the number of triangles with 1 negative tie is given by
\begin{align}
  T_1
  &= \frac{1}{2!} \sum_{(i,j,k) \in \triangle} Y_i (1-Y_j) (1- Y_k)
  \\&= \frac{1}{2}\sum_{(i,j,k) \in E^3} \triangle_{ijk} Y_i (1-Y_j) (1-Y_k), \label{eqn:defn_t1}
\end{align}
where the factor $\frac{1}{2}$ comes from overcounting.
Similarly,
\begin{align}
  T_2 &= \frac{1}{2}\sum_{(i,j,k) \in E^3} Y_i Y_j (1-Y_k),\,\,\,\, T_3 = \frac{1}{6} \sum_{(i,j,k) \in E^3} Y_i Y_j Y_k.
\end{align}

\begin{lem}\label{lem:decomposition}
  We have for $\alpha,\beta=1,2,3$ the decompositions $T_\alpha - \E{T_\alpha} = \sum_{\beta=1}^{3} T_{\alpha,\beta}$ with 
  \begin{align}
    T_{\alpha,1} &\coloneqq \sum_{i \in E} t_{\alpha,1}(i) \widetilde{X}_{i}, \\
    T_{\alpha,2} &\coloneqq \sum_{(i,j) \in E^2} t_{\alpha,2}(i,j) \widetilde{X}_{i} \widetilde{X}_{j},\\
    T_{\alpha,3} &\coloneqq \sum_{(i,j,k) \in E^{3}} t_{\alpha,3}(i,j,k) \widetilde{X}_{i} \widetilde{X}_{j} \widetilde{X}_{k},
  \end{align}
  and
  \begin{align}
    \begin{aligned}
    t_{1,1}(i) &\coloneqq s_{i} \sum_{(j,k) \in E^2} \frac{\triangle_{ijk}}{16} (1-3p_{\varepsilon_j})(1-p_{\varepsilon_k}),
    & t_{1,2}(i,j) &\coloneqq s_{i} s_{j} \sum_{k \in E}\frac{\triangle_{ijk}}{16} (3 p_{\varepsilon_k} - 2), \\
    t_{2,1}(i) &\coloneqq s_{i} \sum_{(j,k) \in E^2} \frac{\triangle_{ijk}}{16} p_{\varepsilon_j} (3p_{\varepsilon_k} - 2),
    & t_{2,2}(i,j) &\coloneqq s_{i} s_{j} \sum_{k \in E}\frac{\triangle_{ijk}}{16} (1-3p_{\varepsilon_k}), \\
    t_{3,1}(i) &\coloneqq s_{i} \sum_{(j,k) \in E^2} \frac{\triangle_{ijk}}{16} \left( -p_{\varepsilon_j} p_{\varepsilon_k} \right), &
    t_{3,2}(i,j) &\coloneqq s_{i} s_{j} \sum_{k \in E}\frac{\triangle_{ijk}}{16} p_{\varepsilon_k}, \\
    \end{aligned}
  \end{align}
  \begin{align}
    t_{1,3}(i,j,k) = -t_{2,3}(i,j,k) = \frac{1}{3} t_{3,3}(i,j,k) &\coloneqq - s_i s_j s_k \frac{\triangle_{ijk}}{16}.
  \end{align}
\end{lem}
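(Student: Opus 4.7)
The plan is to prove this decomposition by a direct Hoeffding-style expansion of the multilinear random variables $T_\alpha$ in the independent Bernoullis $Y_i$. From $X_i = r_i + s_i \widetilde{X}_i$ and $r_i = 1-2p_{\varepsilon_i}$ one obtains the affine substitutions $Y_i = p_{\varepsilon_i} - (s_i/2)\widetilde{X}_i$ and $1-Y_j = (1-p_{\varepsilon_j}) + (s_j/2)\widetilde{X}_j$. Inserting these into the definitions of $T_1, T_2, T_3$ turns each summand into a product of three affine functions of distinct $\widetilde{X}_\bullet$, which can then be expanded term by term.

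Expanding each threefold product yields eight monomials per triple $(i,j,k)$, graded by the number $\beta \in \{0,1,2,3\}$ of $\widetilde{X}$-factors they contain. Because the $\widetilde{X}_i$ are independent and mean-zero, the degree-zero part, summed over $(i,j,k)$ with weight $\triangle_{ijk}$ and the combinatorial prefactor $1/2!$ or $1/3!$ from the definition of $T_\alpha$, is exactly $\E{T_\alpha}$; hence $T_\alpha - \E{T_\alpha}$ consists only of the degree-one, degree-two, and degree-three parts, which are by construction $T_{\alpha,1}$, $T_{\alpha,2}$, and $T_{\alpha,3}$.

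The rest is algebraic bookkeeping. For each degree $\beta$, I would use the total symmetry of $\triangle_{ijk}$ in its three indices to relabel dummy variables so that the free $\widetilde{X}$-indices appear in a canonical position, and then collect the three (or three-choose-two, or one) monomial contributions into a single coefficient $t_{\alpha,\beta}$. Simplifying with $p_\bullet + (1-p_\bullet) = 1$ and using the symmetry of the remaining dummy $\triangle$-indices will then reduce the coefficient to the stated closed form. The degree-three kernel requires no relabeling since $\widetilde{X}_i \widetilde{X}_j \widetilde{X}_k$ is totally symmetric, and immediately yields the common expression $t_{1,3} = -t_{2,3} = \tfrac{1}{3} t_{3,3} = -s_i s_j s_k\,\triangle_{ijk}/16$; the sign pattern here is dictated by how many factors of $Y$ versus $1-Y$ appear in each $T_\alpha$.

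The main obstacle is the combinatorial bookkeeping: the expansion produces many monomials, and one must carefully track the prefactor $1/2!$ or $1/3!$, the multiplicity with which each unordered triangle appears as an ordered triple in $\triangle$, and the signs inherited from $Y_i$ versus $1-Y_i$. A useful safeguard is first to verify the degree-three kernel directly (no relabeling is needed) and then to check the degree-one and degree-two formulas on a single isolated triangle whose edges all have embeddedness $1$; once the coefficients agree in that toy setting, linearity of the full sum over $\triangle$ extends the identities to an arbitrary signed graph.
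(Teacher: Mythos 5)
Your proposal is correct and follows essentially the same route as the paper: the paper's proof centers each factor (writing $T_1-\E{T_1}$ via the telescoping identity for a product of three independent variables, which is exactly the expansion of a product of affine functions of the $\widetilde{X}_\bullet$), converts to the $\widetilde{X}_i$ via $\widetilde{Y}_i=-\tfrac12\widetilde{X}_i$, $\widetilde{Y}'_i=\tfrac12\widetilde{X}_i$, and collects coefficients by degree using the symmetry of $\triangle_{ijk}$. The only difference is organizational, so no further comment is needed.
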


\begin{proof}
  For clarity of the proof, it will be helpful to introduce the temporary variables $Y_i' = 1 - Y_i$, so that $\E{ Y_i'} = p_{\varepsilon_i}' \coloneqq (1 - p_{\varepsilon_i})$.
  The normalized versions of $Y_i, Y_i', X_i$ satisfy $\widetilde{Y}_{i} = -\frac{1}{2} \widetilde{X}_{i}$, $   \widetilde{Y}_{i}' = \frac{1}{2} \widetilde{X}_{i}$.
 By independence $\E{T_1}= \sum_{(i,j,k) \in E^{3}} \frac{\triangle_{ijk}}{2} p_{\varepsilon_i} p_{\varepsilon_j}' p_{\varepsilon_k}'$ and therefore
  \begin{align}
    &T_1 - \E{ T_1 }
    = \sum_{(i,j,k) \in E^{3}} \frac{\triangle_{ijk}}{2} Y_i Y_j' Y_k' - \sum_{(i,j,k) \in E^{3}} \frac{\triangle_{ijk}}{2} p_{\varepsilon_i} p_{\varepsilon_j}' p_{\varepsilon_k}'
    \\&=
    \sum_{(i,j,k) \in E^{3}} \frac{\triangle_{ijk}}{2} \left[ (Y_i - p_{\varepsilon_i}) (Y_{j}' - p_{\varepsilon_j}') (Y_{k}' - p_{\varepsilon_{}k}') \right.
    \\&\quad + p_{\varepsilon_i} (Y_j' - p_{\varepsilon_j}') (Y_k' - p_{\varepsilon_k}')
     + (Y_i - p_{\varepsilon_i}) p_{\varepsilon_j}' (Y_k' - p_{\varepsilon_k}')
     + (Y_i - p_{\varepsilon_i}) (Y_j' - p_{\varepsilon_j}') p_{\varepsilon_k}'
    \\&\quad - p_{\varepsilon_i} p_{\varepsilon_j}' (Y_k' - p_{\varepsilon_k}')
     - p_{\varepsilon_i} (Y_j' - p_{\varepsilon_j}') p_{\varepsilon_k}'
     - (Y_i - p_{\varepsilon_i}) p_{\varepsilon_j}' p_{\varepsilon_k}' \left. \vphantom{ \frac{1}{2} } \right].
 \end{align}
 Rewriting this in terms of the normalized random variables $\widetilde{X}_i$ shows that this is equal to
 \begin{align}
 	& \sum_{(i,j,k) \in E^{3}} \frac{\triangle_{ijk}}{16} \left[ s_{i} s_{j} s_{k} \cdot (-\widetilde{X}_{i}) \widetilde{X}_{j} \widetilde{X}_k \right.
     \\&\quad + s_{j} s_{k} \cdot p_{\varepsilon_i} \widetilde{X}_j \widetilde{X}_k
      + s_{i} s_{k} \cdot (- \widetilde{X}_{i}) p_{\varepsilon_j}' \widetilde{X}_{k}
      + s_{i} s_{j} \cdot (- \widetilde{X}_{i}) \widetilde{X}_{j} p_{\varepsilon_k}'
     \\&\quad - s_{k} \cdot p_{\varepsilon_i} p_{\varepsilon_j}' \widetilde{X}_{k}
      - s_{j} \cdot p_{\varepsilon_i} \widetilde{X}_{j} p_{\varepsilon_k}'
      - s_{i} \cdot (- \widetilde{X}_i) p_{\varepsilon_j}' p_{\varepsilon_k}' \left. \vphantom{ \frac{1}{16} } \right]
     \\&= 
     \sum_{(i,j,k) \in E^{3}} \frac{\triangle_{ijk}}{16} \big[ - s_{i} s_{j} s_{k} \cdot \widetilde{X}_{i} \widetilde{X}_{j} \widetilde{X}_k
      + s_{i} s_{j} \cdot (p_{\varepsilon_k} - 2 p_{\varepsilon_k}') \widetilde{X}_i \widetilde{X}_j
      \\&\quad - s_{i} \cdot (2p_{\varepsilon_j} p_{\varepsilon_k}' - p_{\varepsilon_j}' p_{\varepsilon_k}' ) \widetilde{X}_{i}
      \big] = T_{1,1} + T_{1,2} + T_{1,3}.
\end{align}

  With similar calculations for $T_2$ and $T_3$, we get the claimed expressions.
\end{proof}
By independence of the $\widetilde{X}_i$, this decomposition provides a simple form for the variances of the $T_{\alpha, \beta}$. For instance, 
\begin{align}
  \V{ T_{\alpha,3}} &= \sum_{(i,j,k) \in E^3} t^2_{\alpha,3}(i,j,k).
\end{align}
Recall that $M_l$ for $l=0,\dots,L_{-}$ denotes the number of negative ties in embeddedness level $l$. This means $M_l = \sum_{i \in E} \I{ i \in E_{l} } Y_i$, where $E_{l}$ is the set of edges with embeddedness $l$. Define the centered versions by
\begin{align}
  \check{M}_{l} \coloneqq M_{l} - \E{ M_{l} }
  &= \sum_{ i \in E } \I{ i \in E_{l} } \left( Y_i - p_{\varepsilon_i} \right)
  = \sum_{ i \in E } h_{l}(i) \widetilde{X}_{i},
\end{align}
where $h_{l}(i) \coloneqq -\frac{s_i}{2} \I{ i \in E_{l} }$, with variances
\begin{align}
  \V{ \check{M}_{l} } 
  &= \sum_{i \in E} \I{ i \in E_{l}} \frac{s_i^2}{4}
  \\&
  = n_l p_l (1 - p_l) = m_l (1 - p_l). \label{eqn:var_m}
\end{align}
\begin{prop}\label{prop:zheng_clt}
  Consider the vector
  \begin{align}
    B^{(n)} \coloneqq
    \frac{1}{\sqrt{n}}
    \left( T^{(n)}_{1,1}, T^{(n)}_{1,2}, T^{(n)}_{1,3}, T^{(n)}_{2,1}, T^{(n)}_{2,2}, T^{(n)}_{2,3}, T^{(n)}_{3,1}, T^{(n)}_{3,2}, T^{(n)}_{3,3},
    \check{M}^{(n)}_{0}, \ldots, \check{M}^{(n)}_{L_{-}} \right)^{\top}.
  \end{align}
  with covariance matrix $\Gamma^{(n)} \in \mathbb{R}^{(10+L_{-})\times(10+L_{-})}$.
  Under \Cref{ass:all_embed_limit} we have the following:
  \begin{enumerate}
    \item $\Gamma^{(n)}$ converges to a matrix $\Gamma$ component-wise as $n \to \infty$;
    \item For $\alpha = 1,2,3$ and $l = 0, \ldots, L_{-}$, we have
    \begin{align}
      \sup_{i \in E} \frac{1}{n} h_{l}^2 (i) &\to 0, \,\,\,\, \sup_{i \in E} \frac{1}{n} t_{\alpha,1}^2(i) \to 0, \label{eqn:sup_condition_1} \\
      \sup_{i \in E} \frac{1}{n} \sum_{j \in E} t_{\alpha,2}^2(i,j) &\to 0, \,\,\,\,\sup_{i \in E} \frac{1}{n} \sum_{(j,k) \in E^2} t_{\alpha,3}^2(i,j,k) \to 0 \label{eqn:sup_condition_2}.
    \end{align}
    \item For $\alpha,\beta = 1,2,3$ and $l = 0, \ldots, L_{-}$ we have
    \begin{align}
      \frac{1}{n^2} \E{ (T^{(n)}_{\alpha,\beta})^4 } &\to 3 \Gamma^2_{3(\alpha-1)+\beta, 3(\alpha-1)+\beta}, \\
      \frac{1}{n^2} \E{ (\check{M}^{(n)}_{l})^4 } &\to 3 \Gamma^2_{10+l, 10+l}.
    \end{align}
  \end{enumerate}
\end{prop}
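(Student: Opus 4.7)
The three parts of the proposition are precisely the hypotheses of the multivariate fourth moment theorem for Rademacher chaos of \citet{Zheng:2017tr}: convergence of the covariance matrix (Part 1), vanishing maximal influence of each coordinate (Part 2), and convergence of fourth moments to their Gaussian value (Part 3). I would prove them in the order 2, 1, 3, since Part 3 will invoke Part 2. For the sup conditions in Part 2, the key observation is that each coefficient from \Cref{lem:decomposition} is bounded by a constant times a sum of triangle indicators. Since $|s_i|\le 1$ and every bracketed $p$-expression in \Cref{lem:decomposition} is bounded, one gets $|t_{\alpha,1}(i)| \le C\sum_{(j,k)\in E^2} \triangle_{ijk} = 2C\varepsilon_i$, so $\sup_i t_{\alpha,1}^2(i)/n \le 4C^2\sup_i \varepsilon_i^2/n \to 0$ by \Cref{ass:emb2}. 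For $\beta=2$, since any two distinct edges share at most one vertex, $\sum_k \triangle_{ijk} \le 1$ when $i\ne j$, hence $|t_{\alpha,2}(i,j)|\le C$, and because only the (at most $2\varepsilon_i$) edges $j$ that co-occur with $i$ in some triangle give a nonzero contribution, $\sum_j t_{\alpha,2}^2(i,j) \le 2C^2\varepsilon_i$; \Cref{ass:emb2} implies $\sup_i \varepsilon_i = o(n^{1/2})$, so this divided by $n$ tends to $0$. The $\beta=3$ case is analogous, via $\sum_{(j,k)} t_{\alpha,3}^2(i,j,k) \le C'\varepsilon_i$, and the bound for $h_l$ is trivial since $|h_l(i)| \le 1/2$.

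For Part 1, the independence of the $\widetilde{X}_i$ together with the structural fact that $\triangle_{ijk}=0$ whenever two of $i,j,k$ coincide implies that covariances between components in different chaos orders vanish: an expectation of the form $\E{\widetilde{X}_{i_1}\cdots\widetilde{X}_{i_p}}$ can only be nonzero if the indices pair up, and for a cross-chaos product this forces at least two of the indices inside a higher-order coefficient $t_{\alpha,\beta}$ to coincide, which kills that coefficient through the $\triangle$ factor. Hence only the within-chaos covariances are nontrivial. Each one is an explicit sum of products of the coefficients from \Cref{lem:decomposition}; substituting the definitions and stratifying the summation by embeddedness triples $(l_1,l_2,l_3)$ (respectively pairs or singletons for $\beta=2,1$), each covariance becomes a finite linear combination, with coefficients polynomial in the $p_l$, of the partial sums appearing in \Cref{ass:all_embed_limit}. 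Convergence of $p_l$, of those partial sums, and of $n_l/n$ (derived in \cref{eqn:n_l}) then yields convergence of every entry of $\Gamma^{(n)}$.

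For Part 3, I would expand $\E{F^4}$ for each component $F$ as a sum over $4\beta$-tuples of indices grouped by coincidence pattern. The perfectly paired patterns contribute $3(\E{F^2})^2$, which converges to $3\Gamma_{\cdot\cdot}^2$ by Part 1. Every other pattern either contains a lone index (killing the term since $\E{\widetilde{X}_i}=0$) or contains a triple or quadruple of coinciding indices, producing a factor that is a supremum of a partial coefficient norm times a variance-like sum. In the first-chaos case this reads directly
\begin{align*}
  \E{T_{\alpha,1}^4} = 3 \Bigl(\sum_i t_{\alpha,1}^2(i)\Bigr)^2 + \sum_i t_{\alpha,1}^4(i)\bigl(\E{\widetilde{X}_i^4}-3\bigr),
\end{align*}
and the remainder is at most $C(\sup_i t_{\alpha,1}^2(i))\sum_i t_{\alpha,1}^2(i)$; dividing by $n^2$, the first factor vanishes by Part 2 and the second is $O(1)$ by Part 1. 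The main obstacle is executing this bookkeeping for $\beta=2$ and especially $\beta=3$, where one must enumerate all coincidence patterns of the $4\beta$ indices appearing in $\E{T_{\alpha,\beta}^4}$ while correctly accounting for the nonzero odd moments of the non-symmetric Rademacher variables $\widetilde{X}_i$. I would handle this systematically using the product formula for discrete multiple stochastic integrals as in \citet{Zheng:2017tr}, which represents each non-pairing contribution as an inner product of contraction kernels bounded by (maximal influence) $\times$ (kernel $L^2$-norm), with both factors going to zero (resp.~staying bounded) by Parts 2 and 1.
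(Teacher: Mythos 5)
Your proposal is correct and follows essentially the same route as the paper's proof: verifying the three hypotheses of the fourth-moment theorem of \citet{Zheng:2017tr}, with Part 2 controlled by bounding the coefficients of \Cref{lem:decomposition} through the embeddedness $\varepsilon_i$ and \Cref{ass:emb2}, Part 1 by stratifying each covariance sum over embeddedness levels so that it becomes a finite combination of the quantities assumed to converge in \Cref{ass:all_embed_limit}, and Part 3 by the explicit first-chaos fourth-moment expansion whose non-pairing remainder vanishes via the sup conditions. Your extra care on the cross-chaos cancellation and the higher-chaos contraction bookkeeping only fills in details the paper leaves to ``the remaining limits follow similarly.''
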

\begin{proof} For simplicity, in the following we drop the index $n$.
\begin{enumerate}[wide, labelwidth=!, labelindent=0pt]
    \item 
    We have for the covariances with $\alpha, \beta = 1,2,3$ and $l = 0, \ldots, L_{-}$
    \begin{align}
      \frac{1}{n} \C{ T_{\alpha,1}, T_{\beta,1} }
      &= \frac{1}{n} \sum_{i \in E} t_{\alpha,1}(i) t_{\beta,1}(i), \label{eqn:cov_t1}
      \\
      \frac{1}{n} \C{ T_{\alpha,2}, T_{\beta,2}}
      &= \frac{1}{n} \sum_{(i,j) \in E^2} t_{\alpha,2}(i,j) t_{\beta,2}(i,j),  \label{eqn:cov_t2}
      \\
      \frac{1}{n} \C{ T_{\alpha,3}, T_{\beta,3} }
      &= \frac{1}{n} \sum_{(i,j,k) \in E^3} t_{\alpha,3}(i,j,k) t_{\beta,3}(i,j,k), \label{eqn:cov_t3}
      \\
      \frac{1}{n} \C{ T_{\alpha,1}, \check{M}_{l}}
      &= \frac{1}{n} \sum_{i \in E} t_{\alpha,1}(i) h_{l}(i), \label{eqn:cov_t_m} \\
      \frac{1}{n} \C{ \check{M}_{l}, \check{M}_{l}}
      &= \frac{1}{n} \sum_{i \in E} h_{l}^2(i), \label{eqn:cov_m}
    \end{align}
    while all other covariances are zero. In order to see that these terms converge, it suffices to show that they are functions of (finite) linear combinations of the $p_l$ and of 
    \begin{align}
      q_{l_1,l_2,l_3}=\frac{1}{n} &\sum_{(i,j,k) \in E_{l_1,l_2,l_3}} \triangle_{ijk},
      \,\,\,\,
      u_{l_1,l_2,l_3,l_4,l_5}=
      \frac{1}{n}\sum_{i \in E_{l_1}}
      \left(\sum_{(j,k) \in E_{l_2,l_3}} 
      \triangle_{ijk}\right)
      \left(\sum_{(j',k') \in E_{l_4,l_5}}
      \triangle_{ij'k'}\right),
    \end{align}
    for $l_1,l_2,l_3,l_4,l_5=0,\dots,L_{-}$, which we assume to converge by \Cref{ass:all_embed_limit}.
    With respect to \cref{eqn:cov_t1} observe that the $t_{\alpha,1}(i)$ are of the form
    \begin{align}
      s_i \sum_{(j,k) \in E^2} \triangle_{ijk} f_{\alpha}(p_{\varepsilon_j}, p_{\varepsilon_k})
      &= s_i \sum_{\substack{ l_{2}, l_{3} = 0 }}^{L_{-}} f_{\alpha}(p_{l_{2}}, p_{l_{3}}) \sum_{(j,k) \in E_{l_2,l_3}}\triangle_{ijk}
    \end{align}
   for some functions $f_\alpha$. Hence, the covariances satisfy
    \begin{align}
      \frac{1}{n} \C{ T_{\alpha, 1}, T_{\beta, 1} }
      &= \sum_{i \in E} s_i^2 \left( \sum_{(j,k) \in E^2} \triangle_{ijk} f_{\alpha}(p_{\varepsilon_j}, p_{\varepsilon_k}) \right) \left( \sum_{(j,k) \in E^2} \triangle_{ijk} f_{\beta}(p_{\varepsilon_j}, p_{\varepsilon_k}) \right)
      \\&= \sum_{l_1=0}^{L_{-}} 4 p_{l_1} (1 - p_{l_1})
      F(l_1),
    \end{align}
    where
    \begin{align}
      F(l_1) &= 
      \frac{1}{n}
      \sum_{i \in E_{l_{1}}}
      \left( \sum_{l_2, l_3 = 0}^{L_{-}} f_{\alpha}(p_{l_2}, p_{l_3}) \sum_{(j,k) \in E_{l_2,l_3}} \triangle_{ijk}  \right)
      \left( \sum_{l_2, l_3 = 0}^{L_{-}} f_{\beta}(p_{l_2}, p_{l_3}) \sum_{(j,k) \in E_{l_2,l_3}} \triangle_{ijk}  \right)
      \\&= 
      \frac{1}{n}
      \sum_{i \in E_{l_{1}}}
      \sum_{l_2, l_3 = 0}^{L_{-}}
      \sum_{l_4, l_5 = 0}^{L_{-}} f_{\alpha}(p_{l_2}, p_{l_{3}}) f_{\beta}(p_{l_{4}}, p_{l_{5}})
      \sum_{(j,k) \in E_{l_2,l_3}} 
      \triangle_{ijk}
      \sum_{(j',k') \in E_{l_4,l_5}}
      \triangle_{ij'k'}
      \\&=\sum_{l_2, l_3 = 0}^{L_{-}}
      \sum_{l_4, l_5 = 0}^{L_{-}} f_{\alpha}(p_{l_2}, p_{l_{3}}) f_{\beta}(p_{l_{4}}, p_{l_{5}})
      u_{l_1, l_2, l_3, l_4, l_5}.
    \end{align}
    Next, with respect to \cref{eqn:cov_t2},
    \begin{align}  
      & \frac{1}{n} \C{ T_{\alpha, 2}, T_{\beta, 2} } = \frac{1}{n} \sum_{(i,j) \in E^2} s_{i}^2 s_{j}^2
      \left( \sum_{k \in E} \triangle_{ijk} f_{\alpha}(p_{\varepsilon_k}) \right)
      \left( \sum_{k' \in E} \triangle_{ijk'} f_{\beta}(p_{\varepsilon_k'}) \right)
      \\&=
      \frac{1}{n} \sum_{l_1, l_2 = 0}^{L_{-}} 16 p_{l_1} (1-p_{l_1}) p_{l_2} (1-p_{l_2})
      \sum_{\substack{ i \in E_{l_1}, \\ j \in E_{l_2} }}
      \sum_{\substack{k \in E, \\ k' \in E}}
      \triangle_{ijk} \triangle_{ijk'} f_{\alpha}(p_{\varepsilon_k}) f_{\beta}(p_{\varepsilon_{k'}}).
      \intertext{For fixed $i,j$ there is only one $k$ with $\triangle_{ijk} = 1$. Thus, the last line equals}
      &
      \frac{1}{n} \sum_{l_1, l_2 = 0}^{L_{-}} 16 p_{l_1} (1-p_{l_1}) p_{l_2} (1-p_{l_2})
      \sum_{(i,j) \in E_{l_1,l_2}}
      \sum_{k \in E}
      \triangle_{ijk} f_{\alpha}(p_{\varepsilon_k}) f_{\beta}(p_{\varepsilon_{k}})
      \\&= 
      \sum_{l_1, l_2, l_3 = 0}^{L_{-}} 16 p_{l_1} (1-p_{l_1}) p_{l_2} (1-p_{l_2})
      f_{\alpha}(p_{l_{3}}) f_{\beta}(p_{l_{3}}) q_{l_1,l_2,l_3},
     \end{align}
    Similarly, for some $C_{\alpha,\beta}$, \cref{eqn:cov_t3} can be written as
    \begin{align}
      & \frac{1}{n} \C{ T_{\alpha, 3}, T_{\beta, 3} }= \frac{1}{n} C_{\alpha, \beta} \sum_{(i,j,k) \in E^{3}} s_{i}^2 s_{j}^2 s_{k}^2 \triangle_{ijk}
      \\&= C_{\alpha, \beta} \sum_{l_1, l_2, l_3 = 0}^{L_{-}} 4^3 p_{l_1}(1 - p_{l_1}) p_{l_2}(1 - p_{l_2}) p_{l_3}(1 - p_{l_3}) q_{l_1,l_2,l_3}.
    \end{align} 
    On the other hand, the covariances between $T_{\alpha,1}$ and $\check{M}_{l}$ in \cref{eqn:cov_t_m} satisfy
    \begin{align}
      \frac{1}{n} \C{ T_{\alpha,1}, \check{M}_{l} }
      &= - \frac{1}{n} \sum_{i \in E} \I{ i \in E_{l} } \frac{s_i^2}{2} \sum_{(j,k) \in E^2} \triangle_{ijk} f_{\alpha}( p_{\varepsilon_j}, p_{\varepsilon_k} )
      \\&= -\sum_{l_2, l_3 = 0}^{L_{-}} 2 p_{l} (1 - p_{l})
      f_{\alpha}(p_{l_2}, p_{l_{3}}) q_{l_1,l_2,l_3}.
    \end{align}
    Finally, with respect to  \cref{eqn:var_m}, $\frac{1}{n} \V{ \check{M}_{l} }= \frac{n_l}{n} p_l (1 - p_l)$     converges by convergence of $n_l/n$ (cf. \cref{eqn:n_l}) and of the $p_l$.
    \item Starting with \cref{eqn:sup_condition_1}, we have
    \begin{align}
      \sup_{i \in E} \frac{1}{n} h_l^2(i) 
      &= \frac{\I{ i \in E_{l}} \frac{s_{i}^2}{4}}{n} = \frac{p_{l} (1 - p_{l})}{n} \to 0,
    \end{align}
    since, by \Cref{ass:all_embed_limit}, the $p_{l}$ converge and are therefore bounded. On the other hand, observe that
    \begin{align}
      \varepsilon_i^2 = 
      \left( \frac{1}{2} \sum_{(j,k) \in E^2} \triangle_{ijk} \right)^2
      >
      \begin{cases}
        t^2_{\alpha, 1} (i) \\
        \sum_{j \in E} t^2_{\alpha, 2}(i,j) \\
        \sum_{(j,k) \in E^2} t^2_{\alpha, 3}(i,j,k).
      \end{cases}
    \end{align}
    Applying \Cref{ass:emb2} gives
    \begin{align}
      \sup_{i \in E} \frac{1}{n} t^2_{\alpha, 1}(i) 
      &\leq \frac{1}{n} \sup_{i \in E} \varepsilon_i^2 \to 0.
     \end{align}
  The two statements in \cref{eqn:sup_condition_2} follow similarly.
    \item Let us first consider $T_{1,1}$. We have
    \begin{align}
      \frac{1}{n^2} \E{ T_{1,1}^{4} }
      &= \frac{1}{n^2}\sum_{(i,j,k,l) \in E^{4}} t_{1,1}(i) t_{1,1}(j) t_{1,1}(k) t_{1,1}(l) \E{ \widetilde{X}_{i} \widetilde{X}_{j} \widetilde{X}_{k} \widetilde{X}_{l} }
      \\&= \frac{1}{n^2}\sum_{i \in E} t_{1,1}^{4}(i) \E{ \widetilde{X}_{i}^4 } + \frac{1}{n^2}\frac{\binom{4}{2}}{2} \sum_{(i,j) \in E^{2}, i \neq j} t_{1,1}^2(i) t_{1,1}^2(j) \E{ \widetilde{X}_{i}^2 } \E{ \widetilde{X}_{j}^2 }
      \\&= \frac{1}{n^2}\sum_{i \in E} t_{1,1}^{4}(i) \left[ \E{ \widetilde{X}_{i}^4 } - 3 \right] + 3 (\Gamma^{(n)}_{1,1})^2.
    \end{align}
    Focusing on the first term, note that 
    \begin{align}
      \frac{1}{n^2} \sum_{i \in E} t^{4}_{1,1}(i)
      &\leq \frac{1}{n} \sup_{i \in E} t^2_{1,1}(i) \frac{1}{n} \sum_{i \in E} t^2_{1,1}(i)= \frac{1}{n} \sup_{i \in E} t^2_{1,1}(i) \Gamma^{(n)}_{1,1}.
    \end{align}
    We already know, however, from part $(2.)$ of this proof that $\frac{1}{n} \sup_{i \in E} t^2_{1,1}(i) \to 0$. Thus, the first term converges to 0, which means that $\frac{1}{n^2}\E{ T_{1,1}^{4} } \to 3 ( \Gamma_{1,1})^2$, as required.
    The remaining limits follow similarly.
  \end{enumerate}
\end{proof}
We are now ready to prove \Cref{prop:clt_radamacher}.
\begin{proof}[Proof of \Cref{prop:clt_radamacher}] The three conditions in  \Cref{prop:zheng_clt} imply by Theorem 1.1 of \citet{Zheng:2017tr} that $B^{(n)} \xrightarrow{\mathcal{D}} \mathcal{N}(\mathbf{0}, \Gamma)$ as $n \to \infty$. Define the matrix $Q: \mathbb{R}^{10+L_{-}} \to \mathbb{R}^{4+L_{-}}$ by
  \begin{align}
    Q=\begin{pmatrix}
     1 & 1 & 1\\
     &  &  & 1 & 1 & 1\\
     &  &  &  &  &  & 1 & 1 & 1\\
     &  &  &  &  &  &  &  &  & I_{L_{-}+1}
    \end{pmatrix},
  \end{align}
  where $I_{L_{-}+1}$ is the identity matrix of size $L_{-}+1$. We conclude that
  \begin{align}
    \left(\widetilde{T}^{(n)}, \widetilde{M}^{(n)}\right)^\top = QB^{(n)} \xrightarrow{\mathcal{D}} \mathcal{N}(0, \Sigma),
  \end{align}
  with covariance matrix $ \Sigma = Q \Gamma Q^{\top}$. We have
  \begin{align}
    \Sigma = \lim_{n\rightarrow\infty} \Sigma^{(n)}, \label{eqn:sigma3}
  \end{align}
   where $\Sigma^{(n)}$ is the covariance matrix of $QB^{(n)}$. The form of $Q$ yields for $\alpha,\beta=1,2,3$, $l=0,\dots,L_{-}$
  \begin{align}
    \Sigma^{(n)}_{\alpha,\alpha} &= \frac{1}{n} \sum_{\beta = 1}^{3} \V{ T^{(n)}_{\alpha,\beta} },\,\,\,\,\Sigma^{(n)}_{\alpha,\beta} = \frac{1}{n} \sum_{\gamma = 1}^{3} \C{ T^{(n)}_{\alpha, \gamma}, T^{(n)}_{\beta, \gamma} },
    \\
    \Sigma^{(n)}_{\alpha, 4+l} &= \frac{1}{n} \C{ T^{(n)}_{\alpha,1}, \check{M}^{(n)}_{l} }, \,\,\,\, \Sigma^{(n)}_{4+l,4+l}= \V{\check{M}^{(n)}_l}, \label{eqn:sigma2}
  \end{align}
  and $\Sigma^{(n)}_{4+l,4+j}=0$ for $l,j=0,\dots,L_{-}$ with $l\neq j$. 
\end{proof}


\section{Remaining Proofs}
\label{sec:appendix_lemmas}

\begin{proof}[Proof of \Cref{lem:cond_dist}]
We have to show for $x=(x_i)_{i\in E}$, $x_i \in \{-1,1\}$, that 
\begin{align}
	\P{W_{\pi} = x} = \P{X=x\,\mid\, M=m}.
\end{align}
By independence of edges $X_i$ for different levels of embeddedness it is enough to show
\begin{align}
	\P{W_{\tau_l} = x} = \P{(X_i)_{i\in E_l}=x\,\mid\, M_l=m_l}
\end{align}
for all levels $l=0,\ldots,L_{-}$ separately and $x=(x_i)_{i\in E_l}$, $x_i \in \{-1,1\}$. Without loss of generality $\sum_{i\in E_l}x_i=m_l$ (otherwise both sides in the last display are zero). Since $\tau_l$ is a  uniform random permutation on $E_l$, it is clear that $\P{W_{\tau_l}=x}=\binom{n_l}{m_l}^{-1}$. On the other hand,
\begin{align}
  \P{ (X_i)_{i\in E_l}=x \,\mid\, M_l = m_l }
  = \frac{\P{ (X_i)_{i\in E_l}=x , M_l = m_l }}{\P{M_l = m_l}}
  &= \frac{\P{ (X_i)_{i\in E_l}=x }}{\P{M_l = m_l}}
  \\= \frac{p_{l}^{m_{l}}(1 - p_{l})^{n_{l} - m_{l}}}{ \binom{n_l}{m_l} p_{l}^{m_{l}}(1 - p_{l})^{n_{l} - m_{l}}}
  = \binom{n_l}{m_l}^{-1}.
\end{align}
\end{proof}

\begin{proof}[Proof of \Cref{lem:stoch_mono}]
  Let $m=(m_0,\ldots,m_{L_{-}})$, $\widetilde{m}=(\widetilde{m}_0,\ldots,\widetilde{m}_{L_{-}})$ with integers $0\leq m_l \leq \widetilde{m}_{l}\leq n$ for all $l = 0, \ldots, L_{-}$ such that $\sum_{l=0}^{L_{-}} m_l $, $\sum_{l=0}^{L_{-}} \widetilde{m}_l \leq n$. Consider two graphs $G=(V,E,(W_i)_{i \in E})$ and $\widetilde{G}=(V,E,(\widetilde{W_i})_{i\in E})$ with edge weights $W_i, \widetilde{W}_i \in \left\{ -1, 1 \right\}$ such that $\sum_{i \in E_l} B_l = m_l, \sum_{i \in E_l} \widetilde{B}_l = \widetilde{m}_l$ for all $l = 0, \ldots, L_{-}$, where $B_l=\frac{1-W_i}{2}$, $\widetilde{B}_l = \frac{1-\widetilde{W}_i}{2}$. Define further
  \begin{align}
    R_r &= \sum_{(i,j,k) \in \triangle} \I{ B_{\pi(i)} + B_{\pi(j)} + B_{\pi(k)} \geq r}, \\
    \widetilde{R}_r &= \sum_{(i,j,k) \in \triangle} \I{ \widetilde{B}_{\pi(i)} + \widetilde{B}_{\pi(j)} + \widetilde{B}_{\pi(k)} \geq r},
  \end{align}
  for $r = 1, 2, 3$ and with the random permutation $\pi$. $R_r$ and $\widetilde{R_r}$ count the number of triangles with at least $r$ negative edges in $G$ and $\widetilde{G}$ (up to overcounting factors). According to \Cref{lem:cond_dist}, we have
  \begin{align}
    \mathcal{L}( R_3, R_2, R_1) &= \mathcal{L}( H T \,\mid\, M = m), \\
    \mathcal{L}( \widetilde{R}_3, \widetilde{R}_2, \widetilde{R}_1) &= \mathcal{L}( H T \,\mid\, M = \widetilde{m}).
  \end{align}
  Since $R_r, \widetilde{R}_{r}$ are defined on the same probability space, it is sufficient to show $R_r \leq \widetilde{R}_r$ to prove stochastic monotonicity (cf. Remark 2.1 of \citep{Janson:2007dx}). Moreover, it is enough to consider the special case that $\widetilde{m}_1 = m_1 + 1$ and $m_l = \widetilde{m}_l$ for $l = 2, \ldots, L_{-}$. Since $\pi$ restricted to $E_l$ is a uniform random permutation, we can further assume without loss of generality that $W_{i_0} = -1, \widetilde{W}_{i_0} = 1$ for some $i_0 \in E_1$ and $W_i = \widetilde{W}_i$ otherwise. We claim that for $(i,j,k) \in \triangle$
  \begin{align}
    \I{B_{\pi(i)} + B_{\pi(j)} + B_{\pi(k)}  \geq r} \leq \I{\widetilde{B}_{\pi(i)} + \widetilde{B}_{\pi(j)} + \widetilde{B}_{\pi(k)} \geq r}.
  \end{align}
  If $\pi(i),\pi(j),\pi(k)$ are all different from $i_0$ then equality holds trivially. Otherwise, assume that $\pi(i)= i_0$ and $\pi(j),\pi(k) \neq i_0$. Then, the inequality reduces to
  \begin{align}
  \I{B_{\pi(j)} + B_{\pi(k)} \geq r} \leq \I{\widetilde{B}_{\pi(j)} + \widetilde{B}_{\pi(k)} \geq r - 1},
  \end{align}
  which is clearly true. Thus, $R_r \leq \widetilde{R}_r$ for $r = 1, 2,3$.
\end{proof}

\begin{proof}[Proof of \Cref{lem:old_test_fails}]
  We already have $\P{U^{(n)}_{\pi}\leq c_{\alpha,\pi}^{(n)}} = \alpha$ by definition of the critical value.
  Thus, we only have to show $\P{U^{(n)}_{\pi}\leq c_{\alpha,\tau}^{(n)}} \rightarrow 1$. Let $\mu^{(n)}_{\tau}=\E{U^{(n)}_{\tau}}$, $(\sigma^{(n)}_{\tau})^2=\V{U^{(n)}_{\tau}}$ and define similarly $\mu^{(n)}_{\pi}$, $\sigma^{(n)}_{\pi}$ with respect to $U^{(n)}_{\pi}$. 
  From \Cref{cor:an_uniform}, we know that $U_{\tau}^{(n)}$ has a limiting normal distribution. Thus, 
  for large $n$ and $\alpha\leq 1/2$ we can therefore approximate the $\alpha$-critical value of the old test, $c_{\alpha,\tau}^{(n)}$, by $\mu^{(n)}_{\tau} + z_{\alpha} \sigma^{(n)}_{\tau}$ with $0\geq z_{\alpha} \coloneqq \Phi^{-1}(\alpha)$. Moreover, according to \Cref{thm:main}, the Type-I error is given by
  \begin{align}
    \P{ U^{(n)}_{\pi} \leq c_{\alpha,\tau}^{(n)} }
    \asymp \Phi\left(\frac{c_{\alpha,\tau}^{(n)} - \mu^{(n)}_{\pi}}{\sigma^{(n)}_{\pi}}\right)
    \asymp \Phi\left(\frac{\mu^{(n)}_{\tau} + z_{\alpha} \sigma^{(n)}_{\tau} - \mu^{(n)}_{\pi}}{\sigma^{(n)}_{\pi}}\right),
  \end{align}
  where $a_n \asymp b_n$ means $\lim_{n \to \infty} \frac{a_n}{b_n} = 1$.
  Since $n^{-1/2}\sigma_{\pi}^{(n)}\rightarrow \sigma_{\pi}$ and $n^{-1/2}\sigma_{\tau}^{(n)}\rightarrow \sigma_{\tau}$ for some $\sigma_{\pi},\sigma_{\tau}>0$, we have
  \begin{align}
  	\frac{\mu^{(n)}_{\tau} + z_{\alpha} \sigma^{(n)}_{\tau} - \mu^{(n)}_{\pi}}{\sigma^{(n)}_{\pi}} = 		\frac{\mu^{(n)}_{\tau} - \mu^{(n)}_{\pi}}{n^{1/2}\sigma_{\pi}(1+o(1))} + z_{\alpha} \frac{\sigma_{\tau}}{\sigma_{\pi}}\left(1+o(1)\right).
  \end{align}
  Again, by \Cref{thm:main} and  \Cref{cor:an_uniform}, approximating the means of $U^{(n)}_{\pi}$ and $U^{(n)}_{\tau}$ by the corresponding means in the Rademacher model,  we have for large $n$,
  \begin{align}
      & \mu^{(n)}_{\tau} - \mu^{(n)}_{\pi} \approx 
      \frac{1}{2} \sum_{(i,j,k) \in E^{3}} \triangle_{ijk} \left(p(1-p)^2-p_{\varepsilon_i} (1-p_{\varepsilon_j}) (1-p_{\varepsilon_k})\right) \\  
      & \qquad + \frac{1}{6} \sum_{(i,j,k) \in E^{3}} \triangle_{ijk} \left(p^3-p_{\varepsilon_i} p_{\varepsilon_j} p_{\varepsilon_k}\right).
  \end{align}
  Observe that
  \begin{align}
  	\frac{1}{2}p_{\varepsilon_i}(1-p_{\varepsilon_j})(1-p_{\varepsilon_k}) + \frac{1}{6}p_{\varepsilon_i}p_{\varepsilon_j}p_{\varepsilon_k} = \frac{1}{2}p_{\varepsilon_i}-\frac{1}{2}p_{\varepsilon_i}p_{\varepsilon_j}-\frac{1}{2}p_{\varepsilon_i}p_{\varepsilon_k} + \frac{2}{3}p_{\varepsilon_i}p_{\varepsilon_j}p_{\varepsilon_k},
  \end{align}
  which can be bounded by $\frac{1}{2}p_{\varepsilon_i}$. Hence, for some constant $c>0$
  \begin{align}
      & \mu^{(n)}_{\tau} - \mu^{(n)}_{\pi} \geq \frac{c}{2} \sum_{(i,j,k) \in E^{3}} \triangle_{ijk} \left(p(1-p)^2-p_{\varepsilon_i}\right) \\  
      & = \frac{c}{2} \left(p(1-p)^2 \sum_{i \in E} \varepsilon_i - \sum_{i \in E} p_{\varepsilon_i}\varepsilon_i \right) = \frac{c}{2} \left(p(1-p)^2 \sum_{i \in E} \varepsilon_i - \sum_{i \in E^{-}} \varepsilon_i \right)\\
      & \geq \frac{mc}{2} \left((1-p)^2 \frac{\sum_{i \in E} \varepsilon_i}{n} - \frac{ \sum_{i \in E^{-}} \varepsilon_i }{m} \right)
      \geq cm^{1/2}\log{m},
  \end{align}
  for a different $c>0$ according to \cref{eqn:difference_means}, because $p_{\varepsilon_i}=m_{\varepsilon_i}/n_{\varepsilon_i}$ and $p=m/n$. Consequently, ${m^{-1/2}} \left(\mu^{(n)}_{\tau} - \mu^{(n)}_{\pi}\right)$ converges to $\infty$ and so $\P{ U^{(n)}_{\pi} \leq c_{\alpha,\tau}^{(n)}}\rightarrow 1$. 
\end{proof}

\bibliographystyle{apalike}
\bibliography{bib,others}

\end{document}